\newtheorem{theorem}{Theorem}
\newtheorem{lemma}{Lemma}
\newtheorem{definition}{Definition}
\newtheorem{corollary}{Corollary}
\newtheorem{proposition}{Proposition}
\newtheorem{problem}{Problem}
\newtheorem{invariant}{Invariant}
\newcommand{\defeq}{:=}
\newcommand{\norm}[1]{\left\lVert#1\right\rVert}
\newcommand{\inprod}[2]{\left\langle#1, #2\right\rangle}
\newcommand{\eps}{\epsilon}
\newcommand{\argmax}{\textup{argmax}}
\newcommand{\argmin}{\textup{argmin}} 
\newcommand{\R}{\mathbb{R}}
\newcommand{\N}{\mathbb{N}}
\newcommand{\diag}[1]{\textbf{\textup{diag}}\left(#1\right)}
\newcommand{\half}{\frac 1 2}%
\newcommand{\1}{\mathbf{1}}
\newcommand{\E}{\mathbb{E}}
\newcommand{\ma}{\mathbf{A}}
\newcommand{\ai}{\ma_{i:}}
\newcommand{\aj}{\ma_{:j}}
\newcommand{\id}{\mathbf{I}}
\newcommand{\ogibbs}{\oracle^{\textup{gibbs}}}
\newcommand{\tp}{\tilde{p}}
\newcommand{\poly}{\mathcal{P}}
\newcommand{\mmu}{\mathbf{U}}
\newcommand{\tO}{\widetilde{O}}
\newcommand{\Par}[1]{\left(#1\right)}
\newcommand{\Brack}[1]{\left[#1\right]}
\newcommand{\Brace}[1]{\left\{#1\right\}}
\newcommand{\Abs}[1]{\left|#1\right|}
\newcommand{\oracle}{\mathcal{O}}
\newcommand{\bin}{\textup{bin}}
\newcommand{\bra}[1]{\langle #1|}
\newcommand{\ket}[1]{|#1\rangle}
\newcommand{\orma}{\oracle_{\ma}}
\newcommand{\0}{\mathbf{0}}
\newcommand{\e}{\mathbf{e}}
\newcommand{\orst}{\oracle_{\samptree}}
\newcommand{\mm}{\mathbf{M}}
\newcommand{\Time}{\mathcal{T}}
\newcommand{\Tent}{\Time_{\textup{entry}}}
\newcommand{\normop}[1]{\norm{#1}_{\textup{op}}}
\newcommand{\tZ}{\widetilde{Z}}
\newcommand{\calF}{\mathcal{F}}
\newcommand{\samptree}{\mathsf{SamplerTree}}
\newcommand{\update}{\mathsf{Update}}
\newcommand{\subsum}{\mathsf{SubtreeSum}}
\newcommand{\init}{\mathsf{Init}}
\newcommand{\tZp}{\widetilde{Z}_{\textup{prev}}}
\newcommand{\bu}{\bar{u}}
\newcommand{\bv}{\bar{v}}
\newcommand{\tu}{\tilde{u}}
\newcommand{\tv}{\tilde{v}}
\newcommand{\tw}{\tilde{w}}
\newcommand{\tg}{\tilde{g}}
\newcommand{\hu}{\hat{u}}
\newcommand{\hv}{\hat{v}}
\newcommand{\otest}{\oracle_{\textup{test}}}
\newcommand{\osamp}{\oracle_{\textup{samp}}}
\newcommand{\tsamp}{\mathcal{T}_{\textup{samp}}}
\newcommand{\third}{\frac{1}{3}}
\newcommand{\ninth}{\frac{1}{9}}
\newcommand{\bigset}{\mathcal{B}}
\newcommand{\tq}{\tilde{q}}
\newcommand{\tup}{\mathcal{T}_{\textup{update}}}
\newcommand{\calq}{\mathcal{Q}}
\definecolor{burntorange}{rgb}{0.8, 0.33, 0.0}
\title{Quantum Speedups for Zero-Sum Games \\ via Improved Dynamic Gibbs Sampling}
\author{Adam Bouland ~~~ Yosheb Getachew ~~~ Yujia Jin ~~~  Aaron Sidford ~~~ Kevin 
Tian\footnote{Work partly completed while at Stanford.}\\
	\texttt{\{\href{mailto:abouland@stanford.edu}{abouland},%
	\href{mailto:yoshebg@stanford.edu}{yoshebg},%
		\href{mailto:yujiajin@stanford.edu}{yujiajin},%
		\href{mailto:sidford@stanford.edu}{sidford}%
		\}@stanford.edu}, \texttt{\href{mailto:tiankevin@microsoft.com}{tiankevin}@microsoft.com}}
\date{}
\begin{document}

\maketitle
\thispagestyle{empty}

\begin{abstract}
We give a quantum algorithm for computing an $\epsilon$-approximate Nash equilibrium of a zero-sum game in a $m \times n$ payoff matrix with bounded entries. Given a standard quantum oracle for accessing the payoff matrix our algorithm runs in time $\widetilde{O}(\sqrt{m + n}\cdot \epsilon^{-2.5} + \epsilon^{-3})$ and outputs a classical representation of the $\epsilon$-approximate Nash equilibrium. This improves upon the best prior quantum runtime of $\widetilde{O}(\sqrt{m + n} \cdot \epsilon^{-3})$ obtained by \cite{vanApeldoornG19} and the classic $\widetilde{O}((m + n) \cdot \epsilon^{-2})$ runtime due to \cite{GrigoriadisK95} whenever $\epsilon = \Omega((m +n)^{-1})$. We obtain this result by designing new quantum data structures for efficiently sampling from a slowly-changing Gibbs distribution. 
\end{abstract}

\newpage
\clearpage
\pagenumbering{arabic}

\section{Introduction}
\label{sec:intro}

There is now a broad family of quantum algorithms for machine learning and fast numerical linear algebra \cite{biamonte2017quantum}, built on many quantum algorithmic primitives, e.g.\ \cite{BrassardHMT02,harrow2009quantum,GilyenSLW19}.
More specifically, for a wide range of problems it has been shown how quantum algorithms can (in certain parameter regimes) yield faster runtimes.\footnote{Note that quantifying the end-to-end speedups obtained by these methods can be subtle due to I/O overheads, different access models \cite{aaronson2015read}, and classical de-quantization algorithms \cite{tang2019quantum,chia2020sampling,gharibian2022dequantizing}.}
These quantum algorithms obtain runtimes which improve upon the dimension dependence of classical algorithms, but often at the cost of a worse dependence on the error tolerance and/or 
implicit access to the solution (e.g.\ query or sampling access for solution entries). Consequently, this paper is motivated by the following question.

\begin{quote}
\emph{To what degree is there an inherent accuracy versus dimension-dependence tradeoff for quantum optimization algorithms? What algorithmic techniques improve this tradeoff? }
\end{quote} 

In this paper we consider this question for the fundamental optimization problem of computing $\epsilon$-approximate Nash equilibrium in zero-sum games. Our main result is an improved dependence on $\eps$ for quantum algorithms solving zero-sum games, which is very close to that of its classical counterpart. Further, we show that for our algorithms, obtaining a classical representation of the solution is obtainable at no additional asymptotic cost. Our work builds upon \cite{vanApeldoornG19,li2019sublinear}, which already took a large and important step towards answering the above question by designing quantum data structures for efficiently implementing algorithms for solving zero-sum games.

Interestingly, to obtain our result we provide improved quantum algorithms for solving a dynamic data structure problem of sampling from a slowly-changing Gibbs distribution. Such dynamic sampling problems arise as a natural component of stochastic gradient methods for solving zero-sum games. We obtain our speedups by improving a Gibbs sampling subroutine developed in \cite{vanApeldoornG19}. We design a new dynamic quantum data structure which performs the necessary Gibbs sampling in time $\tO(\eps^{-\half})$, which is faster than the corresponding $\tO(\eps^{-1})$ runtime achieved by \cite{vanApeldoornG19}. Beyond the intrinsic utility of solving this problem, we hope our improved Gibbs sampler showcases potential algorithmic insights that can be gleaned by seeking improved error dependencies for quantum optimization algorithms. Moreover, we hope this work encourages the study and design of quantum data structures for efficient optimization.

\subsection{Zero-sum games}
For matrix $\ma \in \R^{m \times n}$ its associated \emph{zero-sum game} is the pair of equivalent optimization problems
\[
\min_{u \in \Delta^m} \max_{v\in\Delta^n} u^\top \ma v = \max_{v\in\Delta^n} \min_{u \in \Delta^m}  u^\top \ma v,
\text{ where }
\textstyle{
\Delta^k \defeq \{x \in \R^k_{\geq 0} : \sum_{i \in [k]} x_i = 1\}
}.
\]
In such a game, we refer to $\ma$ as the \emph{payoff matrix} and view the $m$ and $n$-dimensional simplices, i.e.\ $\Delta^m$ and $\Delta^n$, as the space of distributions over $[m]$ and $[n]$ respectively. From this perspective $u^\top \ma v$, known as \emph{payoff} or \emph{utility} of $(u,v)$, is the expected value of $\ma_{ij}$ when sampling $i \in [m]$ and $j \in [n]$ independently from the distributions corresponding to $u$ and $v$. Thus, a zero-sum game models a two-player game where a minimization player seeks to minimize the payoff while, simultaneously, a maximization player seeks to maximize it.

In this paper, we consider the canonical problem of computing an approximate Nash equilibrium of a zero-sum game. Given the payoff matrix $\ma \in \R^{m \times n}$ we call a pair $(u, v) \in \Delta^m \times \Delta^n$ an \emph{$\epsilon$-approximate Nash equilibrium (NE)} for $\eps \in \R_{> 0}$  if
\[
\Par{\max_{v' \in \Delta^n} u^\top \ma v' } - \Par{\min_{u' \in \Delta^m} (u')^\top \ma v} \le \eps.
\]
We assume that the payoff matrix $\ma$ and the error-tolerance are given as input to an algorithm, and that, for simplicity, $\norm{\ma}_{\max} \le 1$, i.e.\ the largest entry of $\ma$ has magnitude at most $1$ (this is without loss of generality by rescaling  $\ma \gets \norm{\ma}_{\max}^{-1} \ma$ and $\eps \gets \norm{\ma}_{\max}^{-1} \eps$). The main goal of this paper is to design improved zero-sum game solvers, i.e.\ algorithms that compute $\epsilon$-approximate NEs.

Zero-sum games are foundational to theoretical computer science, optimization, and economics. The problem of approximately solving zero-sum games is a natural formulation of approximate linear programming (LP) and correspondingly, this problem is a prominent testbed for new optimization techniques. Over the past decades there have been numerous advances in the computational complexity of solving zero-sum games under various assumptions on problem parameter (see Section \ref{sec:related-work} for a survey). Recent advancements in interior point methods (IPMs) for linear programming, e.g.\ \cite{BrandLLSSSW21} and references therein (discussed in more detail in Section~\ref{sec:related-work}), solve zero sum-games  in time $\tO(mn + \min(m, n)^{2.5})$.\footnote{We use the $\tO$ notation to hide polylogarithmic dependences on problem parameters when convenient for exposition; see Section~\ref{sec:prelims} for a more detailed statement of hidden parameters. In informal theorem statements, we use ``with high probability'' to indicate a polylogarithmic dependence on the failure probability.} Further the linear programming algorithm of \cite{Brand20}, shows that zero-sum games can be solved deterministically in $\tO((m + n)^\omega)$ time where $\omega < 2.373$ is the current matrix multiplication constant \cite{AlmanW21}, or $\tO((m + n)^3)$ without fast matrix multiplication. In this paper, we primarily focus on sublinear-time algorithms for approximating NEs. 

A well-known algorithm by \cite{GrigoriadisK95} achieves a runtime of $\tO((m + n) \cdot \eps^{-2})$, which is the state-of-the-art sublinear runtime amongst classical algorithms, without further problem assumptions. Recently it has been shown that quantum algorithms can yield strikingly runtime improvements for solving zero-sum games and their generalizations \cite{li2019sublinear, vanApeldoornG19,li2021sublinear}.
In particular, in 2019 Li, Chakrabati and Wu \cite{li2019sublinear} gave a quantum algorithm for zero sum games in time $\tO(\sqrt{m+n} \cdot \epsilon^{-4})$, and simultaneously van Apeldoorn and Gilyen \cite{vanApeldoornG19} gave an algorithm running in time $\tO(\sqrt{m+n} \cdot \epsilon^{-3})$.
These algorithms yield a quadratic improvement in the dimension dependence of the best classical algorithm, at the cost of a higher error dependence.

The algorithms of \cite{li2019sublinear, vanApeldoornG19,li2021sublinear} operate using a standard quantum oracle for $\ma$ (formally stated in \Cref{sec:prelims}), in which one can query the entries of $\ma$ in superposition.
We focus on the algorithm of \cite{vanApeldoornG19} for the rest of this paper, as we focus on improving error dependence.
The \cite{vanApeldoornG19} algorithm generalizes the classical algorithm of Grigoriadis and Khachiyan \cite{GrigoriadisK95}, and obtains a runtime improvement by speeding up a key dynamic Gibbs sampling subroutine required by the \cite{GrigoriadisK95} method. As we discuss in greater detail in Section~\ref{sec:overview}, van Apeldoorn and Gilyen give a quantum data structure to efficiently perform this sampling in time quadratically faster in the dimension, which lies at the core of their algorithmic speedup.

\paragraph{Our result.} We give a new quantum algorithm for solving zero-sum games which improves upon the runtime of the prior state-of-the-art quantum algorithm, due to \cite{vanApeldoornG19}.

\begin{theorem}[informal, see Theorem~\ref{thm:quantumgames}]\label{thm:main_informal}
	Let $\ma \in \R^{m \times n}$ with $\norm{\ma}_{\max} \le 1$, and $\eps \in (0, 1)$. Given a quantum oracle for $\ma$ (defined in \Cref{sec:prelims}), there is an $\tO(\sqrt{m + n} \cdot \epsilon^{-2.5} + \eps^{-3})$ time algorithm which yields a classical output $(u, v) \in \Delta^m \times \Delta^n$ that is an $\eps$-approximate NE with high probability.
\end{theorem}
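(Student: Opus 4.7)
My plan is to instantiate the stochastic mirror descent (multiplicative weights) framework of \cite{GrigoriadisK95}: run for $T = \tO(\eps^{-2})$ iterations, maintaining accumulators $u^{(t)} \in \R^m$ and $v^{(t)} \in \R^n$. Each iteration samples a row index $i_t$ from the Gibbs distribution $p^{(t)}_i \propto \exp(\eta u^{(t)}_i)$ and a column index $j_t$ from the Gibbs distribution $q^{(t)}_j \propto \exp(-\eta v^{(t)}_j)$ with step size $\eta = \Theta(\eps)$, then updates $u^{(t+1)} \gets u^{(t)} + \ma_{:\,j_t}$ and $v^{(t+1)} \gets v^{(t)} + \ma_{i_t :}$ (with appropriate signs). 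The output $(u,v)$ is the empirical distribution of the sampled indices, and a standard stochastic regret analysis guarantees an $\eps$-approximate Nash equilibrium with high probability, provided each sample is drawn from a distribution within total variation $O(\eps)$ of $p^{(t)}$ (respectively $q^{(t)}$).

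The key task is therefore to implement the Gibbs sampling quickly. I plan to design a dynamic quantum data structure that supports each sample in amortized time $\tO(\sqrt{m+n}\cdot\eps^{-1/2})$, improving the $\tO(\sqrt{m+n}\cdot \eps^{-1})$ bound implicit in \cite{vanApeldoornG19}. The approach is quantum rejection sampling from a maintained proxy distribution $\tp$ that coherently approximates $p^{(t)}$: prepare a superposition encoding $\tp$, compute on an ancilla an acceptance indicator corresponding to the ratio $p^{(t)}_i / (C \cdot \tp_i)$ for some overhead $C$, then apply amplitude amplification to obtain a sample from $p^{(t)}$ using $\tO(\sqrt{C})$ amplified preparations. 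Keeping $C = \tO(\eps^{-1})$ is possible because the exponents $\eta u^{(t)}_i$ change by at most $\eta = O(\eps)$ per iteration per coordinate; refreshing $\tp$ every $\tO(\eps^{-1})$ iterations keeps the worst-case drift bounded, and each refresh is executed via quantum search for drifted coordinates and amplitude estimation of the partition function corrections, at a cost that amortizes to $\tO(\sqrt{m+n}\cdot\eps^{-1/2})$ per sample.

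Combining the pieces: $T = \tO(\eps^{-2})$ iterations each incurring $\tO(\sqrt{m+n}\cdot\eps^{-1/2})$ amortized sampling cost yields the dominant $\tO(\sqrt{m+n}\cdot \eps^{-2.5})$ term. The additive $\tO(\eps^{-3})$ term arises from classical post-processing: the algorithm produces $(u,v)$ as a sparse empirical measure supported on the $\tO(\eps^{-2})$ sampled indices, and certifying the duality gap to additive $\eps$ is handled via quantum mean estimation at cost $\tO(\eps^{-1})$ per check. The main obstacle I expect is the precise design and analysis of the dynamic Gibbs sampler: balancing refresh frequency against per-sample cost, bounding the per-iteration drift with high probability across all coordinates (including those whose $\tp$-value is very small), and ensuring that the accumulated multiplicative errors in $\tp$ are absorbed into the stochastic regret analysis without inflating the $\tO(\eps^{-2})$ iteration count of the outer mirror-descent loop. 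Once this quantum data structure is in place, assembling the algorithm and verifying its guarantees should be routine.
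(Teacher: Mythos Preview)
Your outer framework matches the paper: stochastic mirror descent for $T=\tO(\eps^{-2})$ iterations with $\eta=\Theta(\eps)$, correctness reduced to $O(\eps)$-approximate Gibbs sampling, output the empirical distribution of sampled indices, and refresh the proxy every $\tO(\eps^{-1})$ iterations. However, your per-sample cost accounting omits the dominant factor. Evaluating the acceptance ratio coherently requires applying $\exp(\cdot)$ to a block-encoding of $\eta u^{(t)}$, which needs a polynomial of degree $\beta=\tO(\eta T)=\tO(\eps^{-1})$; each preparation therefore costs $\tO(\beta)$ oracle calls, not $\tO(\sqrt{m+n})$. The true per-sample cost is $\tO(\sqrt{\rho C}\cdot\beta)$, where $\rho=\|q\|_1$ for an unnormalized hint $q\ge p^{(t)}$ and $C$ is the normalization slack. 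With $\beta$ fixed at $\tO(\eps^{-1})$, the paper's contribution is to reduce $\rho C$ from $\Theta(n)$ (the \cite{vanApeldoornG19} choice $q=\1_n$) to $\tO(n\eps)$, via a different refresh than you describe: draw $\tO(\eps^{-1})$ samples from the current Gibbs oracle, set $q_j$ to a constant multiple of the empirical frequency for heavy coordinates and to $\Theta(\eps)$ for light ones, and estimate $Z$ to a constant factor by a testing routine. This yields $\rho=\tO(n\eps)$, $C=O(1)$, and hence per-sample cost $\sqrt{\rho C}\cdot\beta=\tO(\sqrt{n}\cdot\eps^{-1/2})$. Your claim ``$C=\tO(\eps^{-1})$'' is inconsistent with your own drift argument (which actually gives $C=O(1)$), and ``quantum search for drifted coordinates'' is not how the hint is built; without the empirical-sampling hint and the $\rho$ parameter, your arithmetic does not reach $\tO(\sqrt{m+n}\cdot\eps^{-1/2})$.

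Separately, your attribution of the additive $\tO(\eps^{-3})$ term is incorrect. No duality-gap certification is needed, since the regret analysis already guarantees an $\eps$-approximate NE with high probability. The $\eps^{-3}$ is a one-time classical preprocessing cost to synthesize, in a numerically stable way, the QSVT circuit implementing the degree-$\Delta=\tO(\eps^{-1})$ polynomial transform; the stable construction of \cite{Haah19} costs $\tO(\Delta^3)=\tO(\eps^{-3})$.
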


Our new algorithm simultaneously improves the best known quantum \cite{vanApeldoornG19} and classical \cite{GrigoriadisK95} algorithms in the parameter regime where IPMs do not dominate sublinear algorithms. In particular, it is faster than the classical  $\tO((m + n) \cdot \eps^{-2})$ runtime of \cite{GrigoriadisK95} whenever $\epsilon^{-1} = \tO(m + n)$, which includes the regime where \cite{GrigoriadisK95} offers advantages over the $\tO((m + n)^\omega)$ runtime of the \cite{Brand20} IPM, as $\omega < 3$. This is in contrast to the prior quantum rate of \cite{vanApeldoornG19}, which does not achieve an improvement upon \cite{GrigoriadisK95} in the full parameter range where sublinear algorithms are currently preferable to IPMs. For example, when $m \approx n$ and (up to logarithmic factors) $\eps \in [n^{-c}, n^{-\half}]$ where $c = \half(\omega - 1)$, the rate of \cite{GrigoriadisK95} is favorable to that of  \cite{vanApeldoornG19} and state-of-the-art IPMs \cite{Brand20, cohen2021solving}.\footnote{There is evidence that $\omega = 2$ cannot be achieved with current techniques, e.g.\ \cite{Alman21}.}

\subsection{Dynamic Gibbs sampling}

We obtain the improved error dependence in our zero-sum game solver by producing a new, faster quantum data structure to perform the Gibbs sampling as used in the algorithm of \cite{vanApeldoornG19}, which may be of independent interest. Gibbs sampling is a fundamental algorithmic primitive --- the basic task is, given vector $v\in \mathbb{R}^n$, sample from the probability distribution proportional to $\exp(v)$. Gibbs sampling is used as a subroutine in many quantum and classical optimization algorithms, e.g.\ \cite{brandao2017quantum} and follow-up works. In general, quantum algorithms can perform this task more efficiently using amplitude estimation, which can boost the acceptance probability of rejection sampling schemes. This strategy was implemented in \cite{vanApeldoornG19}, which approximate the maximum entry $v_{\max}$ of $v$ using quantum maximum finding~\cite{DurrH96}, uniformly sample $i \in [n]$, and accept the sample with probability $\exp(v_i-v_{\max})\le 1$ using quantum rejection sampling. We give a more detailed overview of the \cite{vanApeldoornG19} Gibbs sampler and its complexity analysis in Section~\ref{ssec:hintsample_overview}.

We give a data structure which quadratically improves the error dependence of the \cite{vanApeldoornG19} Gibbs sampling subroutine runtime, from $\tO(\sqrt{m + n} \cdot \epsilon^{-1})$ per sample to an amortized $\tO (\sqrt{m+n} \cdot \epsilon^{-\half})$ per sample. A key fact which enables this improvement is that the Gibbs distributions one samples from in the zero-sum game solver of \cite{GrigoriadisK95} change slowly over time: the base vector $v$ receives bounded sparse updates in each iteration.
By storing partial information about the Gibbs distribution, namely an efficiently-computable overestimate to its entries which remains valid across many consecutive iterations, we obtain an improved dynamic Gibbs sampler, which we also provide a detailed overview of in Section~\ref{ssec:hintsample_overview}.

We now define our notion of an approximate Gibbs sampler, and then state the dynamic sampling problem we consider, which arises naturally in zero-sum game algorithms with sublinear runtimes.

\begin{definition}[Approximate Gibbs oracle]\label{def:ogibbs}
	For $v \in \R^n$, its associated \emph{Gibbs distribution} is $p_v \in \Delta^n$ such that for all $i \in [n]$, $[p_v]_i \propto \exp(v_i)$. We say $\ogibbs_v$ is a $\delta$-approximate Gibbs oracle if it samples from $\tp \in \Delta^n$ with $\norm{\tp - p_v}_1 \le \delta$.
\end{definition}

\begin{problem}[Sampling maintenance]\label{prob:sample}
	Let $\eta > 0$, $\delta \in (0, 1)$, and suppose we have a quantum oracle for $\ma \in \R^{m\times n}$. Consider a sequence of $T$ $\update$ operations to a dynamic vector $x \in \R^m_{\ge 0}$, each of the form $x_i \gets x_i + \eta$ for some $i \in [m]$. In the \textup{sampling maintenance problem}, in amortized $\tup$ time per $\update$ we must maintain a $\delta$-approximate Gibbs oracle, $\osamp$, for $\ma^\top x$ which is queryable in worst-case time $\tsamp$.
\end{problem}

\paragraph{Our result.} We provide a quantum algorithm for solving Problem~\ref{prob:sample}, which improves upon the runtime implied by the corresponding component in the algorithm of \cite{vanApeldoornG19}.

\begin{theorem}[informal, see Theorem~\ref{prop:mainhintmaintain}]\label{prop:prob1_informal}
There is a quantum algorithm which solves \Cref{prob:sample} with high probability
with
$\max(\tsamp, \tup) = \tO\Par{\sqrt{n} \cdot T\eta^{1.5}}$ and an initialization cost of $\tO\Par{\eta^3T^3}$.
\end{theorem}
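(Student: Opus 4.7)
The plan is to build a quantum data structure that lazily maintains a pointwise overestimate $\tilde v\ge v\defeq\ma^\top x$ across the stream of $T$ updates and implements each call to $\osamp$ by a single round of quantum rejection sampling gated by $\tilde v$; this avoids the per-query quantum maximum finding used by \cite{vanApeldoornG19} and is the source of the $\sqrt\eps$ improvement promised in the informal statement.

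First I would partition the $T$ $\update$ operations into phases of length $L$, to be optimized below. Because $\norms{\ma}_{\max}\le 1$, a single $\update$ shifts each coordinate of $v$ by at most $\eta$, so precomputing $\tilde v_j\ge v_j^{\text{phase start}}+\eta L$ at the start of each phase yields $\tilde v\ge v$ valid throughout the phase with pointwise slack $\tilde v_j-v_j\le 2\eta L$. Alongside $\tilde v$, I would carry the classical list of the at most $L$ indices $i_t$ updated in the current phase, so that the correction $v_j-\tilde v_j$ on any probed $j$ is obtainable from $\tO(L)$ queries to the quantum oracle for $\ma$; I would also maintain a $\samptree$ whose leaves carry the weights needed for coherent proposal-state preparation.

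Each $\osamp$ call would then implement quantum rejection sampling: prepare a uniform superposition over $[n]$, coherently compute the signed correction $v_j-\tilde v_{\max}\le 0$ for the sampled $j$ using the current-phase update list, apply the acceptance rotation of probability $\exp(v_j-\tilde v_{\max})\in[0,1]$, and boost via amplitude amplification. Choosing $L=\tilde\Theta(T\eta^{1.5})$ keeps the slack $\eta L=\tilde O(\sqrt\eta)$ negligible so the amplification factor remains $\tO(\sqrt n)$ (matching the uniform-proposal rate), while each amplification iteration costs $\tO(L)$, giving $\tsamp=\tO(\sqrt n\cdot T\eta^{1.5})$. Translating the slack into an $\ell_1$ bound $\norms{\tp-p_v}_1\le O(\eta L)$ via the standard Lipschitz property of Gibbs distributions, combined with careful tracking of the sub-errors in quantum max finding and amplitude amplification, then discharges the $\delta$-accuracy requirement.

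Phase refresh, at each of the $T/L$ boundaries, is the delicate ingredient: $\tilde v_{\max}$ (together with the appropriate leaves of $\samptree$) must be recomputed to reflect the $L$ updates just completed. Quantum maximum finding on the implicitly represented $v$ costs $\tO(\sqrt n\cdot L)$ queries per refresh, and spreading this work across the following $L$ $\update$'s gives $\tup=\tO(\sqrt n)$, comfortably within the $\tO(\sqrt n\cdot T\eta^{1.5})$ budget; the $\tO(\eta^3 T^3)$ initialization term absorbs a one-off set-up of the first hint and of auxiliary data independent of $n$. The main obstacle I expect to face is two-fold: (i) scheduling the refresh work so that the stated $\tsamp$ bound holds in the worst case rather than only amortized, which requires interleaving phase-refresh queries between $\update$ calls; and (ii) a tight joint error analysis across the composed quantum subroutines---maximum finding, the coherent acceptance rotation, and amplitude amplification---that certifies the $\ell_1$ error to $p_v$ stays within $\delta$ despite the inherent imprecision of quantum amplitude estimation.
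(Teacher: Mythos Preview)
Your proposal has a genuine gap in the cost accounting for coherently evaluating $v_j=[\ma^\top x]_j$, and this gap invalidates the claimed $\tsamp$ bound.

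You assert that the acceptance rotation uses $v_j-\tilde v_{\max}$ and that ``the correction $v_j-\tilde v_j$ on any probed $j$ is obtainable from $\tO(L)$ queries.'' This is only true if the phase-start values $\tilde v_j$ are available in $O(1)$ time for \emph{every} $j\in[n]$, i.e.\ if you store the full $n$-vector $\tilde v$ classically. But then refreshing $\tilde v$ at a phase boundary means absorbing $L$ dense rows of $\ma$ into it, which is $\Omega(n)$ classical work per phase and hence $\tup=\Omega(n/L)$, not $\tO(\sqrt n)$. Your refresh paragraph instead only recomputes $\tilde v_{\max}$ via quantum maximum finding---but if only the scalar $\tilde v_{\max}$ is kept current, you have no cheap handle on $v_j$: evaluating $v_j$ from the update stream requires summing over \emph{all} updates so far (up to $T$ of them), not just the $L$ in the current phase. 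The same issue makes your max-finding cost estimate $\tO(\sqrt n\cdot L)$ incorrect; each comparison inside Grover-style max finding needs $v_j$, which again costs $\Omega(T)$ (or $\Omega(T\eta)$ via the QSVT route) rather than $O(L)$.

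The paper's mechanism is different in kind. It does \emph{not} try to reduce the per-amplification-iteration cost, which stays at $\tO(\beta)=\tO(T\eta)$ because evaluating $\exp([\ma^\top x]_j)$ coherently is done via a QSVT polynomial of degree $\Theta(\|x\|_1)=\Theta(T\eta)$. Instead, the paper reduces the \emph{number} of amplification iterations from $\tO(\sqrt n)$ to $\tO(\sqrt{n/k})$ by replacing the uniform proposal with a non-uniform hint $q\ge p$ satisfying $\|q\|_1=O(n/k)$. This hint is built by drawing $\tO(k)$ empirical samples from the current Gibbs distribution at each phase boundary (phases of length $\lceil 1/\eta\rceil$), using those samples to pin down the large coordinates of $p$, and setting small coordinates to $O(1/k)$; a separate testing routine keeps the normalizer estimate within a constant factor. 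Choosing $k\approx 1/\eta$ balances sample and update costs and yields the $T\eta^{1.5}$ rate. Your proposal contains no analogue of this empirical-hint construction, and the max-based uniform-proposal scheme you describe is precisely the \cite{vanApeldoornG19} sampler whose amplification factor is $\tO(\sqrt n)$ regardless of how the max is maintained.
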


 \Cref{prop:prob1_informal} improves upon the solution to the sampling maintenance Problem~\ref{prob:sample} implied by \cite{vanApeldoornG19} by a $\eta^{-\half}$ factor; in the setting of the \cite{GrigoriadisK95} solver, where $T = \tO(\eps^{-2})$ and $\eta = \Theta(\eps)$, this is an $\eps^{-\half}$-factor improvement. At a high level, our improvement is obtained by storing a hint consisting of a vector which overestimates the true Gibbs distribution, and an approximate normalization factor, which are infrequently updated. Our maintained hint satisfies the desirable properties that: $(i)$ it remains valid for a batch of consecutive iterations, and $(ii)$ the degree of overestimation is bounded. The former property ensures a fast amortized update time, and the latter ensures a fast sample time by lower bounding the acceptance probability of our quantum rejection sampler. Our high-level strategy for maintaining improved hints is to repeatedly call our sampling access to accurately estimate large entries of the Gibbs distribution, and to exploit stability of the distribution under the setting of Problem~\ref{prob:sample}.
We discuss our dynamic Gibbs sampler in more detail and compare it with previous methods for solving~\Cref{prob:sample} in Section \ref{ssec:hintsample_overview}.

The initialization cost of Theorem~\ref{prop:prob1_informal} is due to the current state-of-the-art in numerically stable implementations of the quantum singular value transformation (SVT) framework of \cite{GilyenSLW19}. This cost is also the cause of the additive $\tO(\eps^{-3})$ term in Theorem~\ref{thm:main_informal}. We discuss this cost in Appendix~\ref{app:stablepoly}; improvements to numerically stable implementations of \cite{GilyenSLW19} would be reflected in the runtimes of Theorems~\ref{thm:main_informal} and~\ref{prop:prob1_informal}.

\subsection{Related work}
\label{sec:related-work}

\begin{table}
	\centering
	\renewcommand{\arraystretch}{1.25}
	\everymath{\displaystyle}
	\caption{\textbf{Algorithms for computing $\eps$-approximate Nash equilibria of
		zero-sum games.} Hides polylogarithmic factors and assumes $\ma \in \R^{m \times n}$ with $\norm{\ma}_{\max} \le 1$. }\label{table:zero-sum-runtimes}
	\begin{tabular}{c c c}	
		\toprule
		{ Method } & {Query model} & {Total runtime} \\
		
		\midrule
		interior point method~\cite{cohen2021solving} & classical & $\max(m,n)^\omega$ \\
		interior point method~\cite{BrandLLSSSW21} & classical & $mn + \min(m, n)^{2.5}$ \\
		\midrule
		extragradient~\cite{Nemirovski04,Nesterov07} & classical &  $ 
		mn \cdot \eps^{-1}$ 
		\\ 
		stochastic mirror descent (SMD)~\cite{GrigoriadisK95} & 
		classical &  $ 
		(m + n) \cdot \eps^{-2}$ 
		\\
		variance-reduced SMD~\cite{CarmonJST19} & classical & $mn + \sqrt{mn(m + n)} \cdot \eps^{-1}$
		\\
		\midrule
		\cite{vanApeldoornG19} & quantum & 
		$\sqrt{m + n} \cdot \eps^{-3}$  \\
		Theorem~\ref{thm:main_informal} (\textbf{our work})  & quantum & $\sqrt{m + n} \cdot \eps^{-2.5} + \eps^{-3}$  \\
		\bottomrule
	\end{tabular}
\end{table}

\begin{table}
	\centering
	\renewcommand{\arraystretch}{1.25}
	\everymath{\displaystyle}
	\caption{\textbf{Solutions to Problem~\ref{prob:sample}, $T = \eps^{-2}$, $\eta = \eps$.} Hides polylogarithmic factors.}\label{table:dynamic-gibbs-runtimes}
	\begin{tabular}{c c c c}	
		\toprule
		{ Method } & {Query model} & {$\tsamp$} & {$\tup$} \\
		\midrule
		explicit updates~\cite{GrigoriadisK95} & classical & $1$ & $m + n$ \\
		max-based rejection sampling~\cite{vanApeldoornG19} & quantum &  $ 
		\sqrt{m + n} \cdot \eps^{-1}$ & $ 
		\sqrt{m + n} \cdot \eps^{-1}$ 
		\\ 
		\Cref{prop:prob1_informal} (\textbf{our work}) & quantum & $ 
		\sqrt{m + n} \cdot \eps^{-\half}$ & $ 
		\sqrt{m + n} \cdot \eps^{-\half}$  \\
		\bottomrule
	\end{tabular}
\end{table}

\paragraph{Quantum optimization and machine learning.} There are  a wide array of quantum algorithms for optimization and machine learning which make use of fundamental algorithmic primitives such as amplitude amplification \cite{BrassardHMT02}, the HHL algorithm \cite{harrow2009quantum}, and the quantum singular value transformation \cite{GilyenSLW19}. For example, a number of works gave HHL-based algorithms for a variety of machine learning tasks such as PCA \cite{lloyd2014quantum}, SVMs \cite{rebentrost2014quantum}, and recommendation systems \cite{kerenidis2016quantum}. For more details see the survey article of \cite{biamonte2017quantum}.

Most relevant to our current work are quantum algorithms for optimization problems.
For example, Brandao and Svore \cite{brandao2017quantum} gave a quantum algorithm for SDP solving based on the Arora-Kale algorithm \cite{arora2007combinatorial}, which was later improved by \cite{van2020quantum}.
There have also been quantum IPM-based methods for LPs and SDPs \cite{kerenidis2020quantum}.
Additionally a series of works have considered quantum algorithms for general convex optimization~\cite{chakrabarti2020quantum,van2020convex}, which make use of Jordan's algorithm for fast gradient estimation \cite{jordan2005fast,gilyen2019optimizing}.

In the area of zero-sum games, in addition to the works previously mentioned \cite{vanApeldoornG19,li2019sublinear} on $\ell_1$-$\ell_1$ games (where both players are $\ell_1$-constrained), there have been several works considering different variants of zero-sum games.
For example Li, Chakrabati and Wu \cite{li2019sublinear} gave quantum algorithms for $\ell_2$-$\ell_1$ games with quadratic improvement on the dimension.
Later Li, Wang, Chakrabati and Wu \cite{li2021sublinear} extended this algorithm to more general $\ell_q$-$\ell_1$ games with $q\in(1,2]$.

\paragraph{Zero-sum games.} Zero-sum games are a canonical modeling tool in optimization, economics and machine learning~\cite{VonNeumann28}. The classic extragradient (mirror prox) method~\cite{Nemirovski04,Nesterov07} computes an $\eps$-approximate NE in $\tO(mn \cdot \epsilon^{-1})$ time; as discussed previously, the stochastic mirror descent method of \cite{GrigoriadisK95} obtains the same accuracy in  time $\tO((m + n) \cdot \eps^{-2})$. An intermediate runtime was recently obtained by \cite{CarmonJST19} using variance reduction, described in Table~\ref{table:zero-sum-runtimes}.

Improved runtimes are available under more fine-grained characterizations of the matrix $\ma$, such as sparsity (e.g.\ number of nonzero entries per row or column) or numerical sparsity (e.g.\ rows and columns with bounded $\ell_1$-to-$\ell_2$ norm ratios) \cite{CarmonJST20}. Notably, the \cite{GrigoriadisK95} algorithm also offers runtime improvements under a sparsity assumption, as does the algorithm of \cite{vanApeldoornG19} in certain sparsity-to-accuracy ratio regimes. In this paper, we focus on NE algorithms in the general setting (without further sparsity or numerical sparsity assumptions).

In parallel, a long line of research improving IPMs for solving linear programming~\cite{karmarkar1984new,renegar1988polynomial,lee2014path,lee2019solving,van2020solving,JiangSWZ21} has led to a number of different zero-sum game solvers with polylogarithmic runtime dependencies on the problem accuracy $\epsilon$. The current state-of-the-art variants of IPMs are~\cite{cohen2021solving} and~\cite{BrandLLSSSW21}, which achieve runtimes of $\widetilde{O}(\max(m,n)^\omega)$ and $\widetilde{O}(mn + \min(m, n)^{2.5})$ respectively. We refer readers to~\Cref{table:zero-sum-runtimes} for detailed comparisons. Finally, for strongly polynomial runtimes (i.e.\ with no dependence on $\eps$), which are outside the scope of this paper, we refer readers to \cite{DadushNV20} and references therein.

\subsection{Future work} Theorem~\ref{thm:main_informal}'s $\eps$ dependence is within an $\eps^{-\half}$ factor of matching classical counterparts. To the best of our knowledge, removing this $\eps^{-\half}$ overhead would represent the first quantum algorithm for a natural optimization problem which improves upon classical counterparts across all parameters. 

Both our work and \cite{vanApeldoornG19} solve Problem~\ref{prob:sample} by leveraging a powerful polynomial approximation-based technique developed in \cite{GilyenSLW19}, known as the quantum singular value transform (QSVT). In both cases, QSVT is used with a polynomial of degree $\tO(\eps^{-1})$. We note that in closely-related classical settings (discussed in \cite{SachdevaV14}), Chebyshev polynomial-based approximations yield a quadratically smaller degree. However, a boundedness requirement (due to the spectra of quantum gates) prevents straightforwardly applying these constructions within QSVT. Sidestepping this barrier is a natural avenue towards improving our work, which we leave as an open problem.

More generally, establishing optimal oracle query complexities of dynamic Gibbs sampling (e.g.\ Problem~\ref{prob:sample}) and solving zero-sum games are key problems left open by our work. These questions are potentially more approachable than establishing tight time complexity characterizations. For example, could $\max(\tsamp, \tup)$ be improved to $\tO(\sqrt n)$ in the context of Theorem~\ref{thm:main_informal}, or can we rule out such an improvement in the query model?

\subsection{Organization} In Section~\ref{sec:prelims} we state the notation used throughout the paper, as well as the (classical and quantum) computational models we assume. In Section~\ref{sec:overview}, we give a brief technical overview of the core components of our algorithm used to prove Theorem~\ref{thm:main_informal}: the stochastic gradient method our method is built on, and an efficient quantum implementation of a key subroutine using a new dynamic Gibbs sampler. Finally in Section~\ref{sec:oracle} we give our new quantum sampler, and prove Theorem~\ref{prop:prob1_informal}.

We aim to give a self-contained, but simplified, description of our algorithm in Section~\ref{sec:overview} to improve the readability of the paper for readers with an optimization background unfamiliar with quantum computing, and vice versa. In particular, we abstract away the core optimization machinery (stochastic mirror descent) and quantum machinery (quantum SVT) developed in prior work into the statements of Propositions~\ref{prop:main_sgd} and~\ref{prop:mainhintsample}, and focus on how we use these statements black-box to build a faster algorithm. The proofs of these statements can be found in Appendices~\ref{sec:sgd} and~\ref{sec:quantum}. %

\section{Preliminaries}
\label{sec:prelims}

\paragraph{General notation.} $\tO$ hides logarithmic factors in problem dimensions (denoted $m$ and $n$), target accuracies (denoted $\epsilon$), and failure probabilities (denoted $\alpha$). When discussing runtimes for Problem~\ref{prob:sample}, we additionally use $\tO$ to hide logarithmic factors in the parameters $\eta, T$. For all $i \in [n]$ we let $e_i \in \R^n$ denote the $i^{\text{th}}$ standard basis vector for $i \in [n]$ when $n$ is clear. $\norm{\cdot}_p$ denotes the $\ell_p$ norm of a vector. For $\ma \in \R^{m \times n}$, its $i^{\text{th}}$ row and $j^{\text{th}}$ column are respectively $\ai, \aj$. For $v \in \R^n$, $\diag{v}$ is the diagonal $n \times n$ matrix with $v$ as the diagonal. Conjugate transposes of $\ma$ are denoted $\ma^*$; when the matrix is real we use $\ma^\top$. The all-ones and all-zeros vectors of dimension $n$ are $\1_n$ and $\0_n$. Finally, throughout $a \defeq \lceil\log_2 m\rceil$ and $b \defeq \lceil \log_2 n \rceil$, so $[m] \subseteq [2^a]$ and $[n] \subseteq [2^b]$.

\paragraph{Computation models.} We assume entries of $\ma$ are $w$-bit reals for $w = O(\log (mn))$, and work in the word RAM model where $w$-bit arithmetic operations take $O(1)$ time; for simplicity, we assume mathematical operations such as trigonometric functions and radicals can also be implemented exactly for $w$-bit words in $O(1)$ time. Throughout, ``quantum states'' mean unit vectors, and ``quantum gates'' or ``oracles'' $\oracle$ mean unitary matrices. We follow standard notation and identify a standard basis vector $e_i$ for $i \in [n]$ with $\ket{i}$, an $a$-qubit state, in which $i$ is represented in binary (i.e.\ more formally, $\ket{i}=\ket{\bin(i)}$, and bin is omitted for brevity). We consider the standard model of quantum access to oracles, in which the oracle $\oracle$, which is defined by its operation on $|s\rangle$ for all $\{0,1\}^*$-valued $s$ (where length is clear from context), can be queried in superposition. If $\oracle$ is queried on $\ket{v} \defeq \sum_{s} \alpha_s \ket{s}$, the result is $\oracle \ket{v} = \sum_{s} \alpha_i (\oracle \ket{s})$. We use $\ket{g}$, $\ket{g'}$, etc.\ (when clear from context) to denote arbitrary sub-unit vectors, which represent garbage states (unused in computations). The tensor product of states $\ket{u}$ and $\ket{v}$ on $a$ and $b$ qubits is denoted $\ket{u}\ket{v}$, an $(a+b)$-qubit state. The runtime of a quantum circuit is its maximum depth (in arithmetic gates on $w$-bit words). 

\paragraph{Access model.} Throughout the paper, we assume a standard quantum oracle for accessing $\ma$ (recall $\norm{\ma}_{\max} \le 1$). In particular, by a quantum oracle for $\ma$ we mean an oracle $\orma$ which, when queried with $\ket{i}\ket{j}\ket{s}$ for $i \in [m], j \in [n], s \in \{0, 1\}^w$, reversibly writes $\ma_{ij}$ (in binary) to the third register in $O(1)$ time, i.e.\ 
$\orma \ket{i}\ket{j}\ket{s} = \ket{i}\ket{j}\ket{s \oplus \ma_{ij}}$ where $\oplus$ is bitwise mod-$2$ addition.

Given a quantum oracle for $\ma$, with two queries, by standard constructions one can construct an oracle which places the value in the amplitude of the state rather than the register itself.
More formally, one can construct\footnote{This follows e.g.\ by calling the oracle to obtain the value of $\ma_{ij}$ in binary (interpreted as a signed number between $0$ and $1$), adding an ancilla qubit, performing arithmetric to compute the rotation angle needed on that ancilla, applying a tower of controlled rotation gates to an ancilla qubit using that rotation angle express in binary, then calling the standard oracle a second time to uncompute the  binary value of $\ma_{ij}$. See e.g.\ \cite{GroverR02} for details.} an $\orma'$, which operates as:
\[\orma' \ket{0}\ket{i} \ket{j} = \sqrt{\ma_{ij}} \ket{0} \ket{i}\ket{j} + \sqrt{1 - |\ma_{ij}|} \ket{1} \ket{g}, \text{ for } (i, j) \in [m] \times [n].\]
It is standard in the literature to (using ancilla qubits to store the output register where $\ma_{ij}$ is written) construct such an $\orma'$ from $\orma$ under our classical model of computation, see e.g.\ \cite{GroverR02}. For simplicity, we omit discussion of ancilla qubits in the remainder of the paper and assume direct access to $\orma'$. We also note that there is ambiguity in the implementation of $\orma'$ in that the square root is not unique, and that we have control over the signing used in this implementation. We will use this flexibility crucially later in the paper, specifically Corollary~\ref{cor:blockencodediag}.

\section{Overview of approach}
\label{sec:overview}

In this section, we give an overview of the approach we take to prove our main results: an improved quantum runtime for solving zero-sum games (\Cref{thm:quantumgames}) and an improved quantum data structures for dynamic Gibbs sampling (\Cref{prop:mainhintmaintain}). We organize this section as follows.

In Section~\ref{ssec:sgd_overview}, we state Algorithm~\ref{alg:main}, the optimization method framework we use to solve zero-sum games. This framework is a generalization of the classical algorithm of \cite{GrigoriadisK95}. We state its guarantees in Proposition~\ref{prop:main_sgd} and defer the proof to Appendix~\ref{sec:sgd}. Algorithm~\ref{alg:main} assumes access to an approximate Gibbs oracle (Definition~\ref{def:ogibbs}) for sampling from dynamic distributions as stated in Problem~\ref{prob:sample}. The bulk of our work is devoted to obtaining an efficient quantum implementation of such an oracle  (\Cref{prop:mainhintmaintain}) and using this result we prove Theorem~\ref{thm:quantumgames} at the end of Section~\ref{ssec:sgd_overview}.

In Section~\ref{ssec:hintsample_overview}, we overview the main technical innovation of this paper, an improved solution to Problem~\ref{prob:sample}. Whereas prior work by \cite{vanApeldoornG19} solves Problem~\ref{prob:sample} at an amortized $\approx \sqrt{m + n} \cdot \eps^{-1}$ cost per iteration, we show how to solve the problem at an amortized $\approx \sqrt{m + n} \cdot \eps^{-\half}$ cost. We remark that the only quantum components of our algorithm (quantum SVT and amplitude amplification) are abstracted away by Proposition~\ref{prop:mainhintsample}, which is proven in Appendix~\ref{sec:quantum}.

\subsection{Solving matrix games with a Gibbs sampling oracle}\label{ssec:sgd_overview}

Our proof of Theorem~\ref{thm:quantumgames} uses an efficient implementation of the algorithmic framework stated in \Cref{alg:main}, based on stochastic mirror descent. In specifying \Cref{alg:main}, we recall our earlier Definition~\ref{def:ogibbs}, which captures the approximate sampling access we require for Algorithm~\ref{alg:main}'s execution.

\begin{algorithm2e}[H]
	\caption{\textsf{MatrixGameSolver}($\delta, \eta, T$)}
	\label{alg:main}
	\DontPrintSemicolon
	{\bfseries Input:} $\ma\in\R^{m\times n}$, desired accuracy $\epsilon \in (0, 1)$, $\delta$-approximate Gibbs oracles for the (dynamic) vectors $-\ma^\top x_t$ and $\ma y_t$ \;
	{\bfseries Parameters:} Gibbs sampler parameter $\delta \in (0, 1)$, step size $\eta>0$, iteration count $T$ \;
	Initialize $\hat{u} \gets \0_m$, $\hat{v} \gets \0_n$, $x_0\gets \0_m$, and $y_0\gets\0_n$\;
	\For{$t=0$ {\bfseries{\textup{to}}} $T-1$}{
		Independently sample $j_t, j'_t \in [n]$ using $\ogibbs_{-\ma^\top x_t}$ and $i_t, i'_t \in [m]$ using $\ogibbs_{\ma y_t}$ \label{line:gibbs_sample}\;
		Update $y_{t+1}\gets y_t+\eta e_{j_t}$ and $x_{t+1}\gets x_t+\eta e_{i_t}$ \tcp*{Update iterates.}
		Update $\hu \gets \hu + \frac 1 T e_{i'_t}$ and $\hv \gets \hv + \frac 1 T e_{j'_t}$  \tcp*{Update output.}
	}
	\Return $(\hat{u}, \hat{v})$\;
\end{algorithm2e}

The main skeleton of Algorithm~\ref{alg:main} (Lines 5-6) using exact oracles is identical to the method of \cite{GrigoriadisK95}. However, our framework builds upon  \cite{GrigoriadisK95} in the following three ways.

\begin{enumerate}
	\item We tolerate total variation error in the sampling procedure via $\delta$-approximate Gibbs oracles.
\item We provide a high-probability guarantee on the duality gap using martingale arguments.
\item We subsample the output to obtain a sparse solution yielding a comparable duality gap.
\end{enumerate}

We remark that several of these improvements have appeared previously, either explicitly or implicitly, in the stochastic gradient method literature. For example, an approximation-tolerant stochastic gradient method was given in \cite{CarmonJST20}, and our proofs of the high-probability guarantees are based on arguments in \cite{Allen-ZhuL17, CarmonDST19}. For completeness we give a self-contained proof of the following guarantee on Algorithm~\ref{alg:main} in Appendix~\ref{sec:sgd}.

\begin{restatable}{proposition}{restatemainsgd}\label{prop:main_sgd}
Let $\ma \in \R^{m \times n}$ satisfy $\norm{\ma}_{\max} \le 1$ and $\epsilon, \alpha \in (0, 1)$. Let $\delta\le \frac{\epsilon}{20}$, $\eta = \frac \eps {60}$, and $T = \Theta(\eps^{-2} \log \frac{mn}{\alpha})$ for an appropriate constant.
 With probability $\ge 1 - \alpha$,  Algorithm~\ref{alg:main}  outputs an $\eps$-approximate NE for $\ma$.
\end{restatable}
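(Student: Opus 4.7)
The strategy is to recognize Algorithm~\ref{alg:main} as a variant of stochastic mirror descent with the entropy regularizer---i.e., multiplicative weights update (MWU) on both simplices---then combine the standard MWU regret bound with martingale concentration to control three sources of error: stochastic sampling of gradients by basis vectors, total-variation error in the Gibbs oracles, and the final output sampling.

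First, I would define the exact Gibbs iterates $p_t \defeq \exp(\ma y_t)/\norm{\exp(\ma y_t)}_1 \in \Delta^m$ and $q_t \defeq \exp(-\ma^\top x_t)/\norm{\exp(-\ma^\top x_t)}_1 \in \Delta^n$ corresponding to the dynamic vectors in Algorithm~\ref{alg:main}; the updates to $x_t, y_t$ imply the MWU recursions $p_{t+1} \propto p_t \exp(\eta\, \ma e_{j_t})$ and $q_{t+1} \propto q_t \exp(-\eta\, \ma^\top e_{i_t})$. Using $\norm{\ma}_{\max}\le 1$ to bound the stochastic gradient norms by $1$ in $\ell_\infty$, the standard potential-function argument for MWU yields, for each comparator $u^\star \in \Delta^m$ and $v^\star \in \Delta^n$,
\[
\sum_{t<T}\bigl((u^\star)^\top \ma e_{j_t} - p_t^\top \ma e_{j_t}\bigr) \le \tfrac{\log m}{\eta} + \eta T, \quad \sum_{t<T}\bigl(e_{i_t}^\top \ma q_t - e_{i_t}^\top \ma v^\star\bigr) \le \tfrac{\log n}{\eta} + \eta T.
\]

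Next I would convert these stochastic-gradient regret inequalities into a duality-gap bound on the averaged iterates $\bar p_T \defeq T^{-1}\sum_t p_t$ and $\bar q_T \defeq T^{-1}\sum_t q_t$ via Azuma--Hoeffding. Let $\mathcal{F}_{t-1}$ denote the $\sigma$-field generated by samples through step $t-1$; since $j_t$ is drawn from some $\tilde q_t$ with $\norm{\tilde q_t - q_t}_1 \le \delta$ (per Definition~\ref{def:ogibbs}), the conditional expectation $\E[\ma e_{j_t}\mid \mathcal{F}_{t-1}] = \ma \tilde q_t$ differs from $\ma q_t$ entrywise by at most $\delta$. Applying Azuma coordinatewise to the martingale $\ma(e_{j_t} - \tilde q_t)$ (increments bounded by $2$), union-bounding over $[m]$, and symmetrically for $\ma^\top(e_{i_t} - \tilde p_t)$, then chaining with the above regret bounds yields with probability $\ge 1-\alpha/2$,
\[
\max_{v^\star}\bar p_T^\top \ma v^\star - \min_{u^\star}(u^\star)^\top \ma\bar q_T \le \tfrac{2\log(mn)}{\eta T} + 2\eta + O\!\left(\sqrt{\tfrac{\log(mn/\alpha)}{T}}\right) + 2\delta.
\]

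To transfer this guarantee from $(\bar p_T, \bar q_T)$ to the reported $(\hat u, \hat v)$, I would run another round of Azuma: since $i'_t$ is conditionally sampled from $\tilde p_t$ (within $\delta$ of $p_t$), applying Azuma to each $e_j^\top\ma^\top(e_{i'_t} - p_t)$ and union-bounding over $j\in[n]$ gives $\max_{v^\star\in\Delta^n}|(\hat u - \bar p_T)^\top\ma v^\star|\le O(\sqrt{\log(n/\alpha)/T}) + \delta$ with probability $\ge 1-\alpha/4$, and symmetrically for $\hat v - \bar q_T$; combining with the previous display bounds the duality gap of $(\hat u,\hat v)$ at the same order. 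Plugging in $\eta = \eps/60$, $T = \Theta(\eps^{-2}\log(mn/\alpha))$ for a sufficiently large constant, and $\delta\le \eps/20$ makes each error term at most $\eps/5$, summing to a total gap $\le \eps$ with probability $\ge 1-\alpha$. The main subtlety I anticipate is correctly identifying the conditional expectation at each step as the \emph{approximate} $\tilde p_t$ (or $\tilde q_t$) rather than the exact $p_t$ (or $q_t$), so that the $\delta T$ bias contributions aggregate linearly rather than quadratically in $T$, alongside ensuring Azuma increments remain uniformly bounded so that standard rates apply after coordinate union bounds; similar approximation-tolerant, high-probability stochastic mirror descent analyses appear in \cite{CarmonJST20, Allen-ZhuL17, CarmonDST19} and serve as direct templates.
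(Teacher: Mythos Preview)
Your proposal is correct and follows the same three-stage outline as the paper: (i) a mirror-descent/MWU regret bound on the exact Gibbs iterates, (ii) martingale concentration to pass from stochastic to true gradients with high probability while absorbing the $\delta$-bias from the approximate oracles, and (iii) a second concentration step transferring the guarantee from the averaged iterates to the sparse sampled output $(\hat u,\hat v)$. Your handling of the oracle inexactness---taking the conditional mean to be the approximate distribution $\tilde q_t$ rather than $q_t$, so the bias accumulates as $O(\delta T)$---is exactly right and mirrors Lemma~\ref{lem:deltabias} in the paper.

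The one genuine technical difference is in step (ii). To control the comparator-dependent term $\max_w \sum_t \inprod{g_t - \tilde g_t}{w_t - w}$, the paper introduces \emph{ghost iterates} $\tilde w_t$ (following \cite{NemirovskiJLS09, CarmonDST19}) that run mirror descent on the correction $g_t - \tilde g_t$; a second regret inequality for the ghost sequence absorbs the $w$-dependence, leaving a single scalar martingale $\inprod{g_t - \tilde g_t}{w_t - \tilde w_t}$ on which Azuma--Hoeffding is applied once. You instead apply Azuma coordinatewise to $\ma(e_{j_t}-\tilde q_t)$ and $\ma^\top(e_{i_t}-\tilde p_t)$ and union-bound over the $m+n$ basis-vector comparators, and separately control the cross terms $\sum_t p_t^\top\ma e_{j_t}$ and $\sum_t e_{i_t}^\top\ma q_t$ as scalar martingales so that they cancel up to $O(\sqrt{T\log(1/\alpha)})$. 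Both routes yield the same $O(\sqrt{\log(mn/\alpha)/T})$ deviation; yours is more elementary and avoids the auxiliary sequence, while the ghost-iterate argument avoids the coordinatewise union bound and is the more standard template in the stochastic saddle-point literature. Interestingly, the paper does use your coordinatewise-plus-union-bound strategy in step (iii) (its Lemma~\ref{lem:empirical-sample}), so the two proofs differ only in which of the two standard concentration tricks is invoked for the main regret term.
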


Given \Cref{prop:main_sgd} to obtain our faster zero-sum game solvers, we simply need to efficiently implement the Gibbs sampling in \Cref{line:gibbs_sample}. As introduced in Section~\ref{sec:intro}, \Cref{prob:sample}, describes a  dynamic approximate Gibbs oracle sampling problem sufficient for this task. Indeed, solving two appropriate parameterizations of Problem~\ref{prob:sample} provides the oracles needed by Algorithm~\ref{alg:main}. By combining Proposition~\ref{prop:main_sgd} with the following~\Cref{prop:mainhintmaintain} (our solution to Problem~\ref{prob:sample}, discussed in greater detail in Section~\ref{ssec:hintsample_overview}), we prove our main result Theorem~\ref{thm:quantumgames}.

\begin{restatable}{theorem}{restatemainhintmaintain}\label{prop:mainhintmaintain}
	Let $\alpha \in (0, 1)$ and $\delta \le \eta$. Given a quantum oracle for $\ma \in \R^{m \times n}$ (defined in \Cref{sec:prelims}) with  $\norm{\ma}_{\max} \leq 1$, we can solve Problem~\ref{prob:sample} with probability $\ge 1 - \alpha$ with 
	\[\max(\tsamp, \tup) = O\Par{1+\sqrt{n} \cdot T\eta\log^4\Par{\frac{mn}{\delta}} \cdot \Par{\sqrt{\eta\log\Par{\frac{n\eta T}{\alpha}}} + \eta\log\Par{\frac{n\eta T}{\alpha}}}},\]
	and an additive initialization cost of \[O\Par{\eta^3 T^3\log^4\left(\frac{n\eta T}{\delta}\right)+\log^7\left(\frac{n\eta T}{\delta}\right)}.\]
\end{restatable}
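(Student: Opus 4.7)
My plan is to solve \Cref{prob:sample} by maintaining a classical hint pair $(\tp,\tZ)$ that pointwise overestimates the current unnormalized Gibbs weights $\exp((\ma^\top x)_j)$ and approximates its partition function, with $\tp$ stored in a $\samptree$ supporting logarithmic-time sampling and point updates. A Gibbs sample is produced by rejection sampling with $\tp/\norm{\tp}_1$ as the proposal, quantum amplitude amplification boosting the acceptance rate. The key amortization idea is that the hint is refreshed only every $k$ updates, so the construction cost spreads over the sample queries of a batch.

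First, I would verify that, conditional on a valid hint, a single sample can be produced in time $\tO(\sqrt{n/p_{\textup{acc}}} \cdot D)$, where $D = \tO(T\eta)$ is the degree of a bounded polynomial approximant of $\exp$ on $[-T\eta, T\eta]$ and $p_{\textup{acc}}$ is the rejection-acceptance rate. Drawing $j \sim \tp/\norm{\tp}_1$ from the $\samptree$ takes $O(\log n)$ time; one then applies a quantum acceptance test
\[\ket{j}\ket{0} \mapsto \sqrt{\exp(v_j)/\tp_j}\,\ket{j}\ket{0} + \sqrt{1 - \exp(v_j)/\tp_j}\,\ket{j}\ket{g},\]
where $v = \ma^\top x$, evaluating $\exp(v_j)$ by applying QSVT with this approximant to a block encoding of $\diag(\ma^\top x)$ assembled from $\orma$ and the current sparse representation of $x$; the signing flexibility used in Corollary~\ref{cor:blockencodediag} is exploited here. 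Amplitude amplification on the $\ket{0}$ flag then yields success in $O(\sqrt{1/p_{\textup{acc}}})$ rounds, giving the claimed per-sample cost. This is essentially the quantum sampling primitive supplied by Proposition~\ref{prop:mainhintsample}, which I would invoke black-box.

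Second, I would arrange the hint to remain a valid pointwise overestimate for an entire batch of $k$ updates. Since each $\update$ shifts $\ma^\top x$ coordinate-wise by a number in $[-\eta,\eta]$, after $\leq k$ updates the entries of $v$ drift by at most $k\eta$; inflating $\tp$ by a factor $\exp(k\eta)$ above the value of $\exp(v^{(t_0)})$ at batch start preserves pointwise upper-boundedness throughout the batch and guarantees $p_{\textup{acc}} \ge \exp(-O(k\eta))$. To build a fresh hint, I would use a polylogarithmic number of calls to the sampler from the previous batch together with QSVT-based amplitude estimation, identifying the heavy entries of the current Gibbs distribution to high relative accuracy and dominating the light tail by a uniform mass; the resulting $\delta$-closeness in total variation follows from standard rejection-sampling error propagation, composing errors from polynomial evaluation, normalization estimation, and the classical hint update.

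Finally, the amortized cost follows by balancing $k$. Per-iteration sample cost scales as $\tO(\sqrt n \cdot \exp(k\eta/2) \cdot T\eta) = \tO(\sqrt n (1 + k\eta) T\eta)$, while the refresh cost contributes $\tO(\sqrt n \cdot T\eta)/k$ per iteration; choosing $k = \Theta(1/\sqrt\eta)$ yields the claimed $\tO(\sqrt n \cdot T\eta \cdot (\sqrt \eta + \eta))$ rate. The additive $\tO(\eta^3 T^3)$ cost is incurred once at initialization because current numerically stable QSVT implementations (see Appendix~\ref{app:stablepoly}) require cubic-in-degree classical preprocessing to produce phase angles for the degree-$D$ polynomial. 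The hard part will be keeping three error sources synchronized --- the $\exp(k\eta)$ overestimation slack governing $p_{\textup{acc}}$, the QSVT approximation error in evaluating $\exp$, and the classical hint-rebuild error --- so that their composition fits within the required $\delta$ total-variation budget even after amplitude amplification; a secondary subtlety is building a block encoding of $\diag(\ma^\top x)$ whose subnormalization factor does not blow up with $T\eta$, relying on the signing flexibility in defining $\orma'$ noted in Section~\ref{sec:prelims}.
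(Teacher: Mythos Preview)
Your high-level architecture is right---rejection sampling against a hint that overestimates the Gibbs weights, refreshed periodically, with quantum amplitude amplification to boost acceptance---but the cost accounting is off in a way that hides the actual source of speedup.

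The core gap is in your per-sample formula $\tO(\sqrt{n/p_{\textup{acc}}}\cdot D)$ and the subsequent claim that $p_{\textup{acc}}\ge\exp(-O(k\eta))$. In a rejection scheme with proposal $q/\norm{q}_1$ and pointwise acceptance $p_j/q_j$, the overall acceptance probability is $1/\norm{q}_1=1/\rho$, and amplitude amplification gives per-sample cost $\tO(\sqrt{\rho}\cdot D)$. The drift factor $\exp(k\eta)$ only degrades $\rho$ multiplicatively; it is a penalty, never a speedup. Hence your expression $\sqrt{n}\cdot\exp(k\eta/2)\cdot T\eta$ is never below $\sqrt{n}\cdot T\eta$, which is exactly the van Apeldoorn--Gily\'en baseline. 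Plugging in $k=1/\sqrt{\eta}$ gives per-sample $\sqrt{n}(1+\sqrt{\eta})T\eta\approx\sqrt{n}\,T\eta$, not $\sqrt{n}\,T\eta^{1.5}$; the arithmetic in your final balancing step does not go through.

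What actually drives the improvement is the hint \emph{quality} $\rho=\norm{q}_1$. The paper fixes the batch length to $\lceil 1/\eta\rceil$ (so within-batch drift is $O(1)$ and can be ignored) and separately introduces a parameter $k$ controlling the number of samples drawn at refresh time: drawing $\tO(k)$ samples (not polylogarithmically many) lets you estimate every entry with mass $\gtrsim 1/k$ and dominate the remaining tail uniformly at level $O(1/k)$, yielding $\rho=O(n/k)$ (Lemma~\ref{lem:qgood}). This gives $\tsamp=\tO(\sqrt{n/k}\cdot T\eta)$ and amortized refresh cost $\tO(k\eta)\cdot\tsamp$, balanced at $k\approx 1/\eta$ to obtain $\sqrt{n\eta}\cdot T\eta=\sqrt{n}\,T\eta^{1.5}$. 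Your proposal conflates the batch length with the refresh sample budget and never exploits the $\sqrt{n/k}$ gain, which is why the stated balancing cannot recover the target rate. A secondary omission is the separate maintenance of the normalization estimate $\tZ$ across phases (the paper uses a testing variant of the sampler, Corollary~\ref{cor:otest}, to keep $C=\tZ/Z=O(1)$ from blowing up geometrically).
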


\begin{theorem}\label{thm:quantumgames}
	Let $\ma \in \R^{m \times n}$ satisfy $\norm{\ma}_{\max} \le 1$, and let $\eps, \alpha \in (0, 1)$. Given a quantum oracle for $\ma$ (defined in \Cref{sec:prelims}), there is a quantum algorithm which yields a classical output $(u, v) \in \Delta^m \times \Delta^n$  that is an $\epsilon$-approximate NE for $\ma$ with probability $\ge 1 - \alpha$ in time
	\[O\Par{\frac{\sqrt{m + n}}{\eps^{2.5}} \log^{4}\Par{\frac{mn}{\eps}} \log^{2.5}\Par{\frac{mn}{\alpha\eps}} + \frac{\sqrt{m + n}}{\eps^{2}} \log^{4}\Par{\frac{mn}{\eps}}\log^{3}\Par{\frac{mn}{\alpha\eps}} + \frac 1 {\eps^3} \log^7\Par{\frac{mn}{\eps}} }.\]
\end{theorem}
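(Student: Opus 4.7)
The plan is to run Algorithm~\ref{alg:main} with the parameters prescribed by Proposition~\ref{prop:main_sgd} (namely $\eta = \eps/60$ and $T = \Theta(\eps^{-2}\log(mn/\alpha))$) and to implement the two approximate Gibbs oracles required in Line~\ref{line:gibbs_sample} by two parallel invocations of the dynamic Gibbs sampler from Theorem~\ref{prop:mainhintmaintain}. Concretely, $\ogibbs_{-\ma^\top x_t}$ is obtained by applying Theorem~\ref{prop:mainhintmaintain} to the $m\times n$ matrix $-\ma$, where the dynamic vector is $x \in \R^m_{\ge 0}$ receiving the Algorithm~\ref{alg:main} updates $x_{i_t} \gets x_{i_t}+\eta$, giving samples over $[n]$ at a per-operation cost scaling with $\sqrt{n}$; symmetrically, $\ogibbs_{\ma y_t}$ comes from Theorem~\ref{prop:mainhintmaintain} applied to the $n\times m$ matrix $\ma^\top$ with updates to $y \in \R^n_{\ge 0}$, giving samples over $[m]$ at cost scaling with $\sqrt{m}$. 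Combined, the two oracles cost $O(\sqrt{m+n})$ per operation. This is a pure black-box composition: the optimization analysis lives entirely inside Proposition~\ref{prop:main_sgd}, the quantum data-structure analysis inside Theorem~\ref{prop:mainhintmaintain}.

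For correctness I would set $\delta = \min(\eps/20,\eta) = \eps/60$; the extra constraint $\delta \le \eta$ in Theorem~\ref{prop:mainhintmaintain} (which Proposition~\ref{prop:main_sgd} alone does not require) costs nothing asymptotically because every dependence on $1/\delta$ is polylogarithmic. I would call each of the two samplers with failure parameter $\alpha/4$, and since Theorem~\ref{prop:mainhintmaintain} depends on $1/\alpha$ only through $\log(n\eta T/\alpha)$, shrinking $\alpha$ even by polynomial factors is asymptotically free. A union bound over the two samplers' good events (each providing samples within total variation $\delta$ of the true Gibbs distribution) and the Proposition~\ref{prop:main_sgd} high-probability event then gives overall success probability $\ge 1-\alpha$ and delivers an $\eps$-approximate NE. The classical representation of the output is automatic: Algorithm~\ref{alg:main} incrementally builds $\hu$ and $\hv$ from the classical samples $i'_t, j'_t$, so after $T$ iterations each is stored explicitly as an $O(T)$-sparse element of the probability simplex at no additional cost.

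For the runtime I would sum $T \cdot \max(\tsamp,\tup)$ with the one-time initialization cost from Theorem~\ref{prop:mainhintmaintain}. Substituting $T\eta = \Theta(\eps^{-1}\log(mn/\alpha))$ and $\eta = \Theta(\eps)$, the $T \cdot \sqrt{m+n}\, T\eta \log^4(mn/\delta)\sqrt{\eta \log(n\eta T/\alpha)}$ summand evaluates to $\sqrt{m+n}\,\eps^{-2.5}\log^4(mn/\eps)\log^{2.5}(mn/(\alpha\eps))$ after absorbing $\log(mn/\alpha)$ into $\log(mn/(\alpha\eps))$; the $\eta \log(n\eta T/\alpha)$ summand similarly produces the $\sqrt{m+n}\,\eps^{-2}\log^4(mn/\eps)\log^3(mn/(\alpha\eps))$ term; and the additive $\eta^3 T^3 \log^4(\cdot) + \log^7(\cdot) = O(\eps^{-3}\log^7(mn/\eps))$ initialization contributes the third term. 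These three contributions are exactly the bound in Theorem~\ref{thm:quantumgames}.

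I do not expect any genuinely hard step here: Proposition~\ref{prop:main_sgd} and Theorem~\ref{prop:mainhintmaintain} together absorb all of the nontrivial analysis, and what remains is parameter-and-failure-probability bookkeeping together with an elementary arithmetic derivation of the runtime, which are the only items that need care.
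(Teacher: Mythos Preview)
Your proposal is correct and matches the paper's own proof essentially line for line: both invoke two instances of Theorem~\ref{prop:mainhintmaintain} to supply the Gibbs oracles in Algorithm~\ref{alg:main}, set $\delta = \Theta(\eps)$ and the $T,\eta$ of Proposition~\ref{prop:main_sgd}, and read off the runtime as $T\cdot\max(\tsamp,\tup)$ plus the initialization cost. Your treatment of the failure-probability union bound and the explicit term-by-term arithmetic is in fact slightly more detailed than the paper's, but the approach is identical.
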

\begin{proof}
	We apply two instances of~\Cref{prop:mainhintmaintain} to implement the $\delta$-approximate Gibbs oracle for the dynamic vectors $ -\ma^\top x_t$ and $\ma y_t$, to implement each iteration of Algorithm~\ref{alg:main} in amortized $O(1 + \tsamp + \tup)$ time. Using the settings of parameters $T, \eta$ in Proposition~\ref{prop:main_sgd} and setting $\delta = \Theta(\eps)$, which suffices for Algorithm~\ref{alg:main} and~\Cref{prop:mainhintmaintain}, we have
	\[\max(\tsamp, \tup) = O\Par{\frac{\sqrt{m + n}}{ \eps}\log^{4}\Par{\frac{mn}{\eps}} \log\Par{\frac{mn}{\alpha\eps}} \Par{\eps\log\Par{\frac{mn}{\alpha\eps}} + \sqrt{\eps\log\Par{\frac{mn}{\alpha\eps}}}}}.\] 
	The conclusion follows since, by observation, Algorithm~\ref{alg:main} costs $O(T \cdot (1 + \tsamp + \tup))$. As remarked in the introduction, the additive term in the runtime comes from the cost of stably implementing a quantum circuit required in the use of~\Cref{prop:mainhintmaintain} representing a polynomial transformation in finite precision, which we discuss in greater detail in Appendix~\ref{app:stablepoly}.
\end{proof}

\subsection{Dynamic sampling maintenance via dynamic hint maintenance}\label{ssec:hintsample_overview}

In this section, we overview our proof of~\Cref{prop:mainhintmaintain}, which proceeds in two steps.

\begin{enumerate}
	\item We reduce sampling maintenance (Problem~\ref{prob:sample}) to a problem which we call \emph{hint maintenance}. This latter problem is a specialization of the sampling maintenance problem where suitable advice, which we call the \emph{hint} throughout, is provided.
	\item We show how to solve the hint maintenance problem required by Proposition~\ref{prop:mainhintsample} in~\Cref{prop:mainhintmaintain}, by recursively calling Proposition~\ref{prop:mainhintsample} in phases, allowing us to maintain hints of suitable quality.
\end{enumerate}

\paragraph{Reducing sampling maintenance to hint maintenance.} 

First, we introduce the following data structure for maintaining the $x$ variable in Problem~\ref{prob:sample}, which was used crucially in \cite{vanApeldoornG19} for dynamic Gibbs sampling. This data structure allows efficient queries to subsets of the coordinates of $x$ and we use it in our Gibbs sampler as well.

\begin{lemma}[Sampler tree]\label{lem:samplertree}
	Let $\eta \in \R_{\ge 0}$ and $m \in \N$. There is a classical data structure, $\samptree$, supporting a tree on $O(m)$ nodes such that $[m]$ corresponds to leaves, with the following operations.
	\begin{itemize}
		\item $\init(m, \eta_{\textup{fixed}})$: initialize $x \gets \mathbf{0}_m$ and $\eta \gets \eta_{\textup{fixed}}$
		\item $\update(i)$: $x_i \gets x_i + \eta$
		\item $\subsum(v)$: return the sum of all $x_i$, where $i$ is in the subtree of $v$
	\end{itemize}
	The total runtime of $T$ calls to $\update$ is $O(T\log m)$, and calls to $\subsum$ cost $O(1)$.
\end{lemma}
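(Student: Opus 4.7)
The plan is to implement $\samptree$ as a standard complete (or near-complete) binary tree with $m$ leaves, one for each coordinate $i \in [m]$, augmented so that every node $v$ explicitly stores the value $S_v := \sum_{i \in \textup{leaves}(v)} x_i$, where $\textup{leaves}(v)$ denotes the set of leaves in the subtree rooted at $v$. Since a binary tree with $m$ leaves has $O(m)$ total nodes and depth $\lceil \log_2 m \rceil$, this fits the stated structural bound.

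For $\init(m,\eta_{\textup{fixed}})$, I would allocate the $O(m)$ tree nodes, set $S_v \gets 0$ at each node, and store the fixed step size $\eta \gets \eta_{\textup{fixed}}$ in a single global field. Because the only $\update$ allowed is an additive increment of $\eta$, it suffices to store a single scalar $S_v$ per node (there is no need for a lazy propagation tag, since every $\update$ modifies exactly one leaf). For $\update(i)$, I walk from the leaf corresponding to $i$ up to the root and add $\eta$ to $S_v$ for each of the $O(\log m)$ ancestors on the root-to-leaf path. This maintains the invariant $S_v = \sum_{i \in \textup{leaves}(v)} x_i$ after every update, since the ancestors of leaf $i$ are exactly those nodes $v$ whose subtree contains $i$. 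For $\subsum(v)$, I simply return the stored value $S_v$ in $O(1)$ time, which is correct by the maintained invariant.

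The runtime analysis is then immediate: each $\update$ touches $O(\log m)$ nodes and performs $O(1)$ word-RAM work per node, so $T$ calls cost $O(T \log m)$ total; each $\subsum$ is a single array lookup, so it costs $O(1)$. Initialization of the tree structure and zeroing the $S_v$ fields takes $O(m)$ time, which is absorbed into the claim (and is anyway a one-time cost).

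There is essentially no technical obstacle here; the only subtlety is the choice to \emph{push} increments eagerly from the leaf to the root rather than using lazy tags, which is what makes $\subsum$ truly $O(1)$ instead of $O(\log m)$. This asymmetry is important for the applications in Section~\ref{sec:oracle}, where $\subsum$ queries sit inside the innermost loop of the quantum sampler and must not contribute logarithmic overhead to $\tsamp$, while $\update$ is only called once per outer iteration of Algorithm~\ref{alg:main} and can afford the $O(\log m)$ cost.
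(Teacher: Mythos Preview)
Your proposal is correct and matches the paper's approach exactly: the paper simply remarks that ``an implementation of $\samptree$ based on propagating subtree sums upon updates is a standard classical data structure'' and omits further details. Your write-up supplies precisely those omitted details --- a complete binary tree storing subtree sums at each node, with leaf-to-root propagation on $\update$ and direct lookup on $\subsum$ --- so there is nothing to add or correct.
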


An implementation of $\samptree$ based on propagating subtree sums upon updates is standard classical data structure, and we omit further description for brevity. Next, we state our first building block towards solving Problem~\ref{prob:sample}, a result which can be thought of as quantum sampling with a hint. We defer its proof to Appendix~\ref{sec:quantum}, as it is primarily based on generalizing dynamic block-encoding strategies with bounded-degree polynomial approximations, as pioneered by \cite{GilyenSLW19, vanApeldoornG19}.

\begin{restatable}{proposition}{restatemainhintsample}\label{prop:mainhintsample}
	Let $x \in \R^m_{\ge 0}$ correspond to an instance of $\samptree$, and $\beta \ge \norm{x}_1$. Let $p$ be the Gibbs distribution associated with $\ma^\top x$, let $Z \defeq \sum_{j \in [n]} \exp([\ma^\top x]_j)$ and $\tZ \in [Z, CZ]$ for some $C \ge 1$. Finally, let $q \in \R^n$ have entries classically queriable in $O(1)$ time, satisfy $q \ge p$ entrywise, $q_j \in [\frac \delta n, 1]$ for all $j \in [n]$, and $\norm{q}_1 = \rho$. Suppose $\tZ$, $C$, $\rho$, and $\beta$ are explicitly known. Given a quantum oracle for $\ma \in \R^{m \times n}$ (defined in \Cref{sec:prelims}) with $\norm{\ma}_{\max} \leq 1$, we can implement a $\delta$-approximate Gibbs oracle which has query cost $O(\sqrt{\rho C} \cdot \beta \log^{4}\Par{\frac{Cmn}{\delta}})$. The total additional cost incurred if $x$ undergoes $T$ $\update$ calls which preserve the invariants on $\tZ, C, \rho, \beta$ is $O(T \log m)$.
\end{restatable}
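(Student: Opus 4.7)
The plan is to implement the $\delta$-approximate Gibbs oracle via quantum rejection sampling using $q/\rho$ as the proposal distribution, boosted by amplitude amplification. Concretely, I would design a quantum circuit $W$ satisfying
\[W\ket{\psi_q}\ket{0} \;=\; \frac{1}{\sqrt{\rho}}\sum_{j\in[n]}\sqrt{\tp_j}\,\ket{j}\ket{0}\;+\;\ket{g}\ket{1},\]
where $\ket{\psi_q}\defeq \rho^{-1/2}\sum_{j}\sqrt{q_j}\ket{j}$ and $\tp_j \defeq \exp([\ma^\top x]_j)/\tZ$, which satisfies $\tp_j \le p_j \le q_j$ by assumption. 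The amplitude of the accept subspace is $\sqrt{\sum_j \tp_j/\rho}=\sqrt{Z/(\tZ\rho)}\ge 1/\sqrt{C\rho}$, so $O(\sqrt{C\rho})$ rounds of amplitude amplification on the accept ancilla boost this to a constant, at which point measuring the coordinate register returns a sample from a distribution close to $p$.

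To build the two ingredients of $W$: the state $\ket{\psi_q}$ is prepared in $O(\log n)$ gates by organizing $q$ into a classical sampler tree analogous to $\samptree$ and converting its stored subtree sums into controlled rotation angles (this preprocessing is performed once and is not charged to the per-query cost). To implement the per-coordinate amplitude transformation $\sqrt{q_j}\mapsto\sqrt{\tp_j}$, I would first use $\orma$ together with $\samptree$ to construct a block encoding of $\beta^{-1}\diag(\ma^\top x)$ in the standard way of~\cite{vanApeldoornG19}; its spectrum lies in $[-1,1]$ because $\norms{\ma}_{\max}\le 1$ and $\norms{x}_1 \le \beta$, so QSVT~\cite{GilyenSLW19} is applicable. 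Then I would apply QSVT with a bounded polynomial $P$ approximating $t\mapsto c\exp(\beta t)$ on $[-1,1]$, where $c$ is a known constant derived from $\tZ$ and $\beta$ enforcing $|P|\le 1$. This produces a block encoding of $\diag(\tp)$ up to spectral error $\delta' = \mathrm{poly}(\delta/(C\rho n))$. Because QSVT requires the polynomial to be uniformly bounded on $[-1,1]$, the minimum achievable degree is $D = O(\beta\log(Cmn/\delta))$, linear rather than square-root in $\beta$. Composing this block encoding with a controlled rotation whose angle is computed from the classically queried value $\sqrt{q_j}$ realizes the circuit $W$.

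Combining the above into amplitude amplification with $O(\sqrt{C\rho})$ rounds gives the claimed per-query cost $O(\sqrt{C\rho}\cdot \beta\log^{4}(Cmn/\delta))$; the $\log^4$ factor absorbs $\samptree$ access, state preparation, QSVT error bookkeeping, and amplification overhead. For the dynamic-update bound, each call $\update(i)$ to $x$ propagates through $\samptree$ in $O(\log m)$ time, and because $\tZ, C, \rho, \beta$ are assumed to remain valid across the $T$ updates, the block encoding constructed above continues to be a correct oracle with no rebuild required, yielding the total $O(T\log m)$ additional cost.

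The main obstacle will be carefully tracking total-variation error through the three layered approximations---the Chebyshev-type polynomial approximation of $\exp$ subject to the QSVT boundedness constraint, the imperfect block encoding of $\diag(\tp)$, and amplitude amplification (which can magnify additive amplitude errors by $O(\sqrt{C\rho})$)---and choosing $\delta'$ polynomially smaller than $\delta/(C\rho n)$ so that all three sources fit inside a logarithmic overhead in $D$. A secondary technical point is that the QSVT boundedness requirement is exactly what forces the $\beta$ (rather than $\sqrt{\beta}$) dependence in the final runtime, matching the barrier flagged in the future-work discussion.
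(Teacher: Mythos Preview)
Your overall architecture---prepare $\ket{\psi_q}$, apply a QSVT polynomial approximating the exponential to a block-encoding built from $\samptree$ and $\orma$, then amplitude-amplify---matches the paper. The gap is in how you couple $q$ to the exponential. You propose to first block-encode $\diag(\tp)$ by applying $P(t)\approx c\exp(\beta t)$ to $\beta^{-1}\diag(\ma^\top x)$, and only afterwards ``compose with a controlled rotation'' depending on $q_j$. But applying a block-encoding of $\diag(\tp)$ to $\ket{\psi_q}$ yields amplitudes $\sqrt{q_j/\rho}\cdot\tp_j$, hence sampling probabilities $\propto q_j\tp_j^2$, not $\propto \tp_j$; what you actually need on the diagonal is $\sqrt{\tp_j/q_j}$. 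If you try to fix this by a separate $j$-controlled rotation that divides by $\sqrt{q_j}$, that rotation must be normalized by $q_{\min}\ge \delta/n$ to keep amplitudes at most $1$, which drops the accept probability by a factor $\delta/n$ and costs an extra $\sqrt{n/\delta}$ in amplitude amplification---destroying the claimed $O(\sqrt{C\rho})$ bound. A second, related issue: approximating $c\exp(\beta t)$ on all of $[-1,1]$ with a bounded polynomial forces $c\le \exp(-\beta)$, so your block-encoding is really of $\exp([\ma^\top x]_j-\beta)$, not $\tp_j$; and without a shift the eigenvalues $[\ma^\top x]_j/\beta$ can be positive, outside the region where the bounded-exponential polynomial is accurate.

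The paper's remedy is to fold $q$ and $\tZ$ into the block-encoding \emph{before} exponentiating. Concretely, it builds a second classical diagonal block-encoding of $-\tfrac{1}{2B}\log(\tZ q)$ (with $B=\Theta(\beta+\log\tfrac{Cn}{\delta})$), adds it to the block-encoding of $\tfrac{1}{2B}\ma^\top x$ via the linear-combination lemma, and only then applies the bounded exponential polynomial. Because $q_j\ge p_j$ and $\tZ\ge Z$, every diagonal entry $v_j=[\ma^\top x]_j-\log(\tZ q_j)$ is $\le 0$, so the polynomial from the bounded-exponential lemma is accurate there; and $\poly(v_j/(2B))$ is (up to an absolute constant) $\sqrt{\exp(v_j)}=\sqrt{\tp_j/q_j}$, so applying this block-encoding to $\ket{\psi_q}$ directly gives amplitudes $\propto\sqrt{\tp_j}$ with accept probability $\Omega(1/(C\rho))$ and no $n/\delta$ loss. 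In short, the missing idea in your plan is to incorporate $-\log(\tZ q_j)$ additively into the block-encoded argument rather than trying to undo $q_j$ multiplicatively afterwards.
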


Proposition~\ref{prop:mainhintsample} makes use of an overestimating hint vector $q$ and approximate normalization constant $\tZ$, which we collectively call the \emph{hint}. The acceptance probability of our rejection sampling is governed by two primary parameters: $\rho = \|q\|_1$, which reflects the degree of overestimation (and can be thought of as a hint quality), and $C \ge 1$, which reflects our inability to accept with probability $\frac{p_j}{q_j}$ when $p$ is implicit (which can be thought of as a normalization quality). In particular, the rejection sampling scheme used in Proposition~\ref{prop:mainhintsample} will instead accept with probability $\frac{p_j}{Cq_j}$.\footnote{Exactly computing $Z$ may require time $\Omega(n)$ in standard implementations, an obstacle to runtimes $\propto \sqrt{n}$.}

Here we elaborate briefly on the implementation of Proposition~\ref{prop:mainhintsample} (for more details, see Appendix~\ref{sec:oracle}). We follow notation of Proposition~\ref{prop:mainhintsample}, and also let $w \defeq \ma^\top x$ such that the unnormalized Gibbs distribution is $\exp(w)$, and $p = \frac{\exp(w)}{Z}$. Proposition~\ref{prop:mainhintsample} is a rejection sampler which first loads the hint $q$ into superposition, and then applies a filter. Overall, our scheme has the form
\begin{equation}\label{eq:reject_gen}\text{sample } j \sim \frac{q}{\rho},\text{ then accept with probability } \frac{\exp(w_j)}{CZ \cdot q_j} = \frac{p_j}{Cq_j},\end{equation}
which results in an accepted sample with probability $\approx \frac 1 {\rho C}$, and hence requires $\approx \sqrt{\rho C}$ trials to succeed after applying quantum amplitude amplification, a generalization of Grover search \cite{BrassardHMT02}.\footnote{The $\beta$ in Proposition~\ref{prop:mainhintsample} comes from loading $\exp(w_j)$ into a quantum oracle via polynomials of degree $\approx \beta$.} The latter filtering step is implemented using appropriate block-encoding technology.

The above discussion suggests that the hint and normalization qualities, parameterized by $\rho$ and $C$, are crucial in controlling the acceptance probability of our scheme. More concretely, in our applications of Proposition~\ref{prop:mainhintsample}, $\beta = \eta T = \tO(\frac 1 \eps)$, which is the bound on the $\ell_1$ norm of the $x_t$ and $y_t$ iterates in Algorithm~\ref{alg:main} under the parameter settings of Proposition~\ref{prop:main_sgd}. Overall, the cost of implementing an approximate Gibbs oracle is then (up to logarithmic factors) $\sqrt{\rho C} \cdot \frac 1 \eps$. Proposition~\ref{prop:mainhintsample} hence reduces  Problem~\ref{prob:sample} to the problem of maintaining the hint consisting of a vector $q$ and a normalization estimate $\tZ$. We mention that Proposition~\ref{prop:mainhintsample} is a strict generalization of a corresponding building block in \cite{vanApeldoornG19}, which only used $q$ set to the all-ones vector.

\paragraph{Approaches for Problem~\ref{prob:sample}.} We now overview our improved solution to Problem~\ref{prob:sample} via efficient use of Proposition~\ref{prop:mainhintsample}. To motivate our solution, we outline three solutions to Problem~\ref{prob:sample} offering different tradeoffs in the overall quality $\rho C$. The first only uses classical information and does not use Proposition~\ref{prop:mainhintsample} at all, the second uses Proposition~\ref{prop:mainhintsample} but maintains no history across iterates, and the third (building upon the first two) is our approach.\\
\\
\noindent\textit{Solution 1: }\cite{GrigoriadisK95}. A standard way to solve Problem~\ref{prob:sample} is to explicitly update $w = \ma^\top x$ and $\exp(w)$, and exactly maintain the normalizing constant $Z$. This allows us to sample from $p$ in $\tO(1)$ time. Since $w$ changes by one row of $\ma$ under a $1$-sparse $\update$ operation to $x$, this is implementable in $O(n)$ time per iteration. We can view this as an instance of the scheme \eqref{eq:reject_gen} with $q = p$, $C = 1$, and $\rho = 1$. It yields the (unbalanced) tradeoff for Problem~\ref{prob:sample} of $\tsamp = \tO(1)$ and $\tup = O(n)$.\\
\\
\noindent\textit{Solution 2: }\cite{vanApeldoornG19}. A recent work \cite{vanApeldoornG19}  introduced a quantum implementation of the scheme \eqref{eq:reject_gen} with an improved tradeoff. The \cite{vanApeldoornG19} scheme first uniformly samples, which in the language of \eqref{eq:reject_gen} means $q = \1_n$ and $\rho = n$. It then applies quantum maximum finding \cite{DurrH96} to obtain an approximate maximum entry of $w$, which they show takes time $\tO(\beta \cdot \sqrt n)$; for the sake of simplicity here, we assume this exactly yields $w_{\max} \defeq \max_{j \in [n]} w_j$. Finally, the acceptance probability $\frac{p_j}{Cq_j}$ is set to $\exp(w_j - w_{\max})$. For $q = \1_n$, this translates to 
\[p_j \cdot \exp(w_{\max} - w_j) = \frac{\exp(w_{\max})}{Z} \le 1,\]
implying $C = 1$ suffices. We note this bound on $C$ can be tight when $w$ is very non-uniform. Overall, the \cite{vanApeldoornG19} scheme's update time requires maximum finding, and its sampling time (via Proposition~\ref{prop:mainhintsample}) requires time $\tO(\beta \cdot \sqrt{\rho C}) = \tO(\beta \cdot \sqrt{n} )$. For $\beta = \tO(\frac 1 \eps)$ as in Algorithm~\ref{alg:main}, this yields the balanced tradeoff $\max(\tsamp, \tup) = \tO\Par{\sqrt{n} \cdot \eps^{-1}}$. As discussed earlier, our key insight is to improve upon this specific choice of hint in \cite{vanApeldoornG19}, for their implicit use of Proposition~\ref{prop:mainhintsample}.\\

\noindent\textit{Solution 3: this work.} We design better hints for Proposition~\ref{prop:mainhintsample} by executing our algorithm in phases corresponding to batches of $\approx \frac 1 \eta$ iterations. At the start of each phase, we use the Gibbs access afforded by Proposition~\ref{prop:mainhintsample} to produce a suitable hint for efficiently implementing the next phase. Our execution of this strategy, parameterized by an integer $k \in [n]$, relies on the following observations.

\begin{enumerate}
	\item During $\lceil\frac 1 \eta\rceil$ iterations $t \in \{\tau + s\}_{s \in [\lceil\frac 1 \eta\rceil]}$ (where $\tau$ starts the phase), the dynamic Gibbs distribution $p_t$ (where $t$ is the iteration index) changes by $O(1)$ multiplicatively, since $w$ entrywise changes by $O(1)$ additively. Thus, the quality of a hint vector deteriorates by at most a constant in the phase, so it suffices to give a good hint $q_{\tau} \ge p_{\tau}$ at the phase start.
	\item By using access to Proposition~\ref{prop:mainhintsample} at the end of the previous phase, we can efficiently estimate large entries of $p_{\tau}$. More precisely, we sample $\tO(k)$ times from $p_{\tau}$, and let the empirical distribution of these samples be $\tq$. Chernoff bounds show that any large entry $[p_{\tau}]_j = \Omega(\frac 1 k)$ will be accurately reflected in the empirical sample. Hence, we set the hint to
	\[q_j = \begin{cases}
	\tq_j \cdot O(1) & \tq_j = \Omega(\frac 1 k) \\
	\frac 1 k \cdot O(1) & \tq_j = O(\frac 1 k)
	\end{cases},
		\]
	for appropriate constants. This yields an improved hint quality of $\rho \approx \frac n k$, since large entries of the hint sum to at most $O(1)$ (as $\tq_j \approx p_j$), and small entries sum to $O(\frac n k)$.
	\item We show a similar strategy of using empirical concentration, combined with a testing variant of Proposition~\ref{prop:mainhintsample}, accurately estimates the normalizing factor $Z$, yielding $C = O(1)$.
\end{enumerate}
This strategy yields $\tsamp = \tO(\beta \cdot \sqrt{n/k})$ and $\tup = \tO(\tsamp \cdot k\eta)$ (since we amortize $\tup$ over $\approx \frac 1 \eta$ iterations). For the parameter settings of Algorithm~\ref{alg:main}, optimizing $k$ yields
\[
\max(\tsamp, \tup) = \tO\Par{\sqrt{n} \cdot \eps^{-\half}}\,.
\]

We prove~\Cref{prop:mainhintmaintain}, our improved solution to Problem~\ref{prob:sample}, in Section~\ref{sec:oracle}. Ignoring logarithmic factors and assuming $\eta \ll 1$ (as in our setting),~\Cref{prop:mainhintmaintain} shows we can maintain $\max(\tsamp, \tup) = \tO(\sqrt{n} \cdot T\eta^{1.5})$. For the parameter settings $T = \tO(\eps^{-2})$, $\eta = \Theta(\eps)$, as stated in Proposition~\ref{prop:main_sgd}, this indeed equates to $\max(\tsamp, \tup) = \tO(\sqrt{n} \cdot \eps^{-\half})$. %

\section{Gibbs sampling oracle implementation}
\label{sec:oracle}

In this section, we prove~\Cref{prop:mainhintmaintain}, which gives our solution to Problem~\ref{prob:sample}. To do so, we follow the outline given in Section~\ref{ssec:hintsample_overview}, wherein we solve Problem~\ref{prob:sample} in batches of $\lceil\frac 1 \eta\rceil$ iterations, each of which we call a ``phase.'' In Sections~\ref{ssec:initial} and~\ref{ssec:eachiter}, we only discuss a single phase of Problem~\ref{prob:sample}, consisting of the iterations $\tau + s$ for $s \in [\lceil\frac 1 \eta\rceil]$ and some initial iteration $\tau$, assuming certain invariants (stated below) hold at the start of the phase. We give a complete solution to Problem~\ref{prob:sample} in Section~\ref{ssec:solveproblem}.

\begin{invariant}[Approximate normalization access]\label{inv:approxz}
	We explicitly have $\tZp$ with $\tZp \in [Z_\tau, CZ_\tau]$ for some $C = O(1)$.
\end{invariant}

\begin{invariant}[Initial sampling maintenance]\label{inv:samplesolve}
	We have $\oracle_\tau$ solving Problem~\ref{prob:sample} in iteration $\tau$.\label{item:inv2}
\end{invariant}

The remainder of this section is then organized as follows.

\begin{itemize}
	\item Section~\ref{ssec:initial}: We show that assuming Invariants~\ref{inv:approxz} and~\ref{inv:samplesolve} hold at the start of a phase, we can perform preprocessing used to construct our hint, consisting of the estimated normalization $\tZ$ and vector $q$, in an application of Proposition~\ref{prop:mainhintsample}. This gives the cost of $\tsamp$ in Problem~\ref{prob:sample}. 
	\item Section~\ref{ssec:eachiter}: We show that at the conclusion of each phase we can maintain Invariants~\ref{inv:approxz} and~\ref{inv:samplesolve} for use in the next phase. This gives the cost of $\tup$ in Problem~\ref{prob:sample}.
	\item Section~\ref{ssec:solveproblem}: We recursively call the subroutine of Sections~\ref{ssec:initial} and~\ref{ssec:eachiter} (which solves Problem~\ref{prob:sample} for all the iterations $\tau + s$ where $s \in [\lceil\frac 1 \eta\rceil]$ for some $\tau$) $\approx \eta T$ times to prove~\Cref{prop:mainhintmaintain}.
\end{itemize}

\subsection{Preprocessing and approximate Gibbs oracle implementation}\label{ssec:initial}

In this section, we show how to construct the ``hint'' $q$ which will be used throughout a phase (starting in iteration $\tau$) given access to $\oracle_\tau$, and bound $\rho = \norm{q}_1$ which quantifies the quality of our hint, under the assumption that Invariants~\ref{inv:approxz} and~\ref{inv:samplesolve} hold in the phase. We first show a multiplicative stability property of the relevant Gibbs distributions in a phase.

\begin{lemma}\label{lem:stableiter}
For all $s \in [\lceil\frac 1 \eta\rceil]$, we have
\[Z_{\tau + s} \in \Brack{\third Z_\tau, 3Z_\tau},\text{ and } p_{\tau + s} \in \Brack{\ninth p_\tau, 9p_\tau} \text{ entrywise.}\]
\end{lemma}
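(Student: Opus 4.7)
The plan is to prove this by a direct exponential-stability calculation on $w \defeq \ma^\top x$. First I would track how $w$ evolves within a phase: each $\update$ operation adds $\eta e_i$ to $x$ for some $i \in [m]$, which adds $\eta \ma^\top e_i$ to $w$. Since $\norm{\ma}_{\max} \le 1$, every coordinate of this vector lies in $[-\eta, \eta]$, so after $s$ updates beginning at iteration $\tau$ I obtain the entrywise bound
\[
\Abs{[w_{\tau+s}]_j - [w_\tau]_j} \le s \eta \le \lceil 1/\eta \rceil \cdot \eta
\]
for every $j \in [n]$.

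Next, I would exponentiate this pointwise. Working in the regime where $\lceil 1/\eta\rceil \cdot \eta \le \ln 3$ (which comfortably covers the $\eta = \Theta(\eps)$ setting used by Algorithm~\ref{alg:main}), I get $\exp([w_{\tau+s}]_j)/\exp([w_\tau]_j) \in [\tfrac{1}{3}, 3]$ for every $j$. Summing this pointwise ratio bound over $j$ and using the definition $Z_{\tau+s} = \sum_j \exp([w_{\tau+s}]_j)$ immediately yields $Z_{\tau+s} \in [\tfrac{1}{3} Z_\tau, 3 Z_\tau]$, which is the first claim.

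Finally, I would write $p_{\tau+s,j} = \exp([w_{\tau+s}]_j)/Z_{\tau+s}$ and combine the multiplicative upper (resp.\ lower) bound on the numerator with the matching lower (resp.\ upper) bound on the denominator. This compounds two factors of $3$ and gives $p_{\tau+s,j} \in [\tfrac{1}{9} p_{\tau,j},\, 9\, p_{\tau,j}]$ entrywise. There is no real obstacle here beyond tracking the implicit smallness of $\eta$ that collapses $s\eta = O(1)$ into the constants $3$ and $9$; the multiplicative form of the entrywise bound is precisely what makes the propagation from $w$ to $Z$ to $p$ clean.
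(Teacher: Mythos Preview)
Your proposal is correct and matches the paper's proof essentially line for line: both bound the coordinatewise change in $\ma^\top x$ by $\eta s \le \eta\lceil 1/\eta\rceil$, exponentiate to get the $[\tfrac{1}{3},3]$ ratio on the unnormalized weights, sum for the $Z$ bound, and divide for the $p$ bound. Your explicit mention of the regime $\eta\lceil 1/\eta\rceil \le \ln 3$ is a welcome clarification the paper leaves implicit.
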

\begin{proof}
Let $\nu_t \defeq \exp(\ma^\top x_t)$ for all $t$, such that $p_t = \frac{\nu_t}{Z_t}$. We have that for any $j \in [n]$,
\begin{align*}\frac{[\nu_{\tau + s}]_j}{[\nu_\tau]_j} &= \exp\Par{\Brack{\ma^\top \Par{x_{\tau + s} - x_\tau}}_j} \\
	&\in\Brack{\exp\Par{-\norm{\ma}_{\max} \norm{x_{\tau + s} - x_\tau}_1} , \exp\Par{\norm{\ma}_{\max} \norm{x_{\tau + s} - x_\tau}_1} } \\
	&\in\Brack{\exp\Par{-\eta s}, \exp\Par{\eta s}} \in \Brack{\third, 3}.\end{align*}
Similarly, $Z_{\tau + s} \in [\third Z_\tau, 3Z_\tau]$, and combining yields the conclusion.
\end{proof}

Next, our computation of the overestimating vector $q$ is parameterized by an integer $k \in [n]$ which will be fixed throughout this section and Section~\ref{ssec:eachiter}. We will simply set $q$ to be an upscaled variant of an empirical distribution of roughly $k$ draws from $\oracle_\tau$.

\begin{lemma}\label{lem:qgood}
Let $k \in [n]$, $\alpha \in (0, 1)$, and suppose $\delta \le \frac{1}{16k}$. Draw $N = \Theta(k \log \frac {n\eta T} \alpha)$ samples from $\oracle_\tau$ for an appropriately large constant, and let $\tq \in \Delta^n$ be the empirical distribution over these $N$ samples. Define $\bigset \defeq \{i \in [n] \mid \tq_i \ge \frac{1}{2k}\}$. Then for
\[q_j = \begin{cases}
18\tq_j	& j \in \bigset \\
\frac {18} k & j \not\in \bigset
\end{cases},\]
with probability $\ge 1 - \frac{\alpha}{2\lceil\eta T\rceil}$, $\norm{q}_1 = O(\frac n k)$ and $q \ge p_{\tau + s}$ entrywise, for all $s \le \frac 1 \eta$.
\end{lemma}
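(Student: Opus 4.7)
The plan is to show both properties (the $\ell_1$ bound and entrywise domination) by conditioning on a single "good" multiplicative Chernoff event for the empirical distribution $\tq$, and then case-splitting on membership in $\bigset$. The two high-level ingredients are: (i) Lemma~\ref{lem:stableiter}, which reduces the entrywise comparison $q \ge p_{\tau+s}$ for every $s \le \lceil 1/\eta\rceil$ to the single bound $q \ge 9 p_\tau$; and (ii) the fact that samples from $\oracle_\tau$ come from some $\tilde p_\tau \in \Delta^n$ with $\norm{\tilde p_\tau - p_\tau}_1 \le \delta$, so in particular $|\tilde p_{\tau,j} - p_{\tau,j}| \le \delta \le 1/(16k)$ entrywise.

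Let $X_j$ denote the number of samples landing on coordinate $j$, so $X_j \sim \mathrm{Bin}(N, \tilde p_{\tau,j})$ and $\tq_j = X_j/N$. With $N = \Theta(k \log(n\eta T/\alpha))$ for a sufficiently large constant, multiplicative Chernoff plus a union bound over $j \in [n]$ yields that with probability $\ge 1 - \alpha/(2\lceil\eta T\rceil)$, the following event $E$ holds for every $j$: if $\tilde p_{\tau,j} \ge 1/k$ then $\tq_j \in [\tfrac{3}{4}\tilde p_{\tau,j}, \tfrac{5}{4}\tilde p_{\tau,j}]$, and if $\tilde p_{\tau,j} < 1/k$ then $\tq_j < 1/(2k)$. (Both conclusions follow from $\Pr[|X_j/N - \mu| > \mu/4] \le 2\exp(-N\mu/48)$ and the relative Chernoff tail applied in the small-$\mu$ case with threshold $1/(2k)$.) Conditioning on $E$ for the rest of the argument, I will case-split on $j \in \bigset$ vs.\ $j \notin \bigset$.

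For $j \in \bigset$, we have $\tq_j \ge 1/(2k)$; the second clause of $E$ rules out $\tilde p_{\tau,j} < 1/k$, so the first clause applies and yields $\tilde p_{\tau,j} \le 4\tq_j/3$. Combining with $p_{\tau,j} \le \tilde p_{\tau,j} + \delta$ and $\delta \le 1/(16k) \le \tq_j/8$ gives $p_{\tau,j} \le 4\tq_j/3 + \tq_j/8 = 35\tq_j/24$, whence by Lemma~\ref{lem:stableiter},
\[ p_{\tau+s,j} \le 9 p_{\tau,j} \le \tfrac{315}{24}\tq_j < 18\tq_j = q_j. \]
For $j \notin \bigset$, we have $\tq_j < 1/(2k)$, so the contrapositive of the first clause of $E$ (noting $\tfrac34 \cdot \tfrac1k > \tfrac1{2k}$) forces $\tilde p_{\tau,j} < 1/k$, hence $p_{\tau,j} \le 1/k + 1/(16k) = 17/(16k)$, and again by Lemma~\ref{lem:stableiter} we get $p_{\tau+s,j} \le 9 \cdot 17/(16k) = 153/(16k) < 18/k = q_j$. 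This establishes $q \ge p_{\tau+s}$ entrywise for every $s \le \lceil 1/\eta\rceil$.

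Finally, for the norm bound, I just split the sum:
\[ \norm{q}_1 = \sum_{j \in \bigset} 18\tq_j + \sum_{j \notin \bigset} \frac{18}{k} \le 18 \norm{\tq}_1 + \frac{18n}{k} \le 18 + \frac{18n}{k} = O\!\Par{\tfrac{n}{k}}, \]
since $\tq \in \Delta^n$ and $k \le n$. I expect no serious obstacle; the only delicate point is tuning Chernoff to a multiplicative error tight enough (here $1/4$) that the slack absorbed by $\delta \le 1/(16k)$ plus the factor-$9$ loss from Lemma~\ref{lem:stableiter} still leaves the stated constant $18$ valid, which is why I used $1/4$ rather than the more common $1/2$ in Chernoff.
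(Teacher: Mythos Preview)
Your approach matches the paper's (reduce to $q \ge 9 p_\tau$ via Lemma~\ref{lem:stableiter}, condition on a single Chernoff event, case-split on $\bigset$), but the second clause of your event $E$ as stated is not achievable: ``if $\tilde p_{\tau,j} < 1/k$ then $\tq_j < 1/(2k)$'' fails whenever $\tilde p_{\tau,j}$ lies between $1/(2k)$ and $1/k$, since then $\tq_j$ concentrates around $\tilde p_{\tau,j} > 1/(2k)$ and no Chernoff tail can force it below $1/(2k)$. This is precisely the step where you invoke ``the relative Chernoff tail applied in the small-$\mu$ case with threshold $1/(2k)$,'' and it breaks because the target threshold sits below the mean.

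The fix is to lower the switching threshold in both clauses from $1/k$ to $1/(4k)$: then in the second clause the target $1/(2k)$ is at least twice the mean so a standard upper-tail Chernoff applies, and the first clause still gives $\tq_j \in [\tfrac34\tilde p_{\tau,j}, \tfrac54\tilde p_{\tau,j}]$ for all $\tilde p_{\tau,j} \ge 1/(4k)$ with the same sample budget. With this change your two case analyses go through with the same arithmetic (for $j \in \bigset$ the contrapositive of the second clause now gives $\tilde p_{\tau,j} \ge 1/(4k)$, which suffices to invoke the first clause; for $j \notin \bigset$ you still deduce $\tilde p_{\tau,j} < 2/(3k)$, hence $p_{\tau,j} < 2/k$ and $9p_{\tau,j} < 18/k$). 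This is exactly the parametrization the paper uses.
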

\begin{proof}
The first conclusion $\norm{q}_1 = O(\frac n k)$ is immediate from the definition of $q$, since $\norm{q}_1 \le 18\norm{\tq}_1 + \frac{18n}{k}$. In light of Lemma~\ref{lem:stableiter} (which holds deterministically), to show the second conclusion, it suffices to show that with the desired success probability, we have both
\begin{equation}\label{eq:bigsetgood}
2\tq_j \ge [p_\tau]_j \text{ for all } j \in \bigset
\end{equation}
and
\begin{equation}\label{eq:smallsetgood}
\frac 2 k \ge [p_\tau]_j \text{ for all } j \not\in \bigset.
\end{equation}
Denote $\alpha' \defeq \frac \alpha {2\lceil\eta T\rceil}$ for notational convenience, and let $\tp$ denote the distribution of samples from $\oracle_\tau$, and recall that $\norm{\tp - p_\tau}_1 \le \frac 1 {16k}$. Because we are taking $\Theta(k \log \frac n {\alpha'})$ samples from $\tp$, we have by a standard Chernoff bound that with probability at least $1 - \alpha'$ (union bounding over all coordinates $j \in [n]$), both of the following hold.
\begin{enumerate}
	\item For all $j \in [n]$ such that $\tp_j \ge \frac 1 {4k}$, $\tq_j \ge \frac {2\tp_j}{3}$.
	\item For all $j \in [n]$ such that $\tp_j \le \frac 1 {4k}$, $\tq_j \le \frac 1 {2k}$.
\end{enumerate}
We condition on these events for the remainder of the proof; we now show \eqref{eq:bigsetgood}, \eqref{eq:smallsetgood} in turn. \\

\textit{Proof of \eqref{eq:bigsetgood}.} To see \eqref{eq:bigsetgood}, the second event above implies that if $\tp_j \le \frac{1}{4k}$, then $j\not\in \bigset$. Hence, for all $j \in \bigset$, we have $\tq_j \ge \frac{2\tp_j}{3} \ge \frac{[p_\tau]_j}{2}$ since $\norm{\tp - p_\tau}_\infty \le \frac 1 {16k} \le \frac 1 4 \tp_j $ for all $j \in \bigset$. \\

\textit{Proof of \eqref{eq:smallsetgood}.} To see \eqref{eq:smallsetgood}, suppose for contradiction that $j \not\in \bigset$ and $[p_\tau]_j > \frac 2 k$. This implies that $\tp_j > \frac 1 k$, and hence by the first event above, $\tq_j \ge \frac 1 {2k}$, contradicting $j \not\in \bigset$.
\end{proof}

\begin{corollary}\label{cor:phasesolve}
Assume that Invariants~\ref{inv:approxz},~\ref{inv:samplesolve} hold for the phase consisting of iterations $\tau + s$, $s \in [\lceil\frac 1 \eta\rceil]$. We can solve Problem~\ref{prob:sample} for the phase with probability $\ge 1 - \frac{\alpha}{2\lceil\eta T\rceil}$, and
\[\tsamp \defeq O\Par{\sqrt{\frac{n}{k}} \cdot T\eta \log^{4}\Par{\frac{mn}{\delta}}}.\]
\end{corollary}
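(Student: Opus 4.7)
The plan follows the recipe outlined in Section~\ref{ssec:hintsample_overview}. I would first use the oracle $\oracle_\tau$ guaranteed by Invariant~\ref{inv:samplesolve} to draw $N = \Theta(k \log(n\eta T / \alpha))$ samples at the start of the phase, and feed these into Lemma~\ref{lem:qgood} to obtain the hint vector $q$ whose entries are either $18\tq_j$ (on the ``large'' coordinates $\bigset$) or $18/k$ (on the complement). With probability at least $1 - \alpha/(2\lceil \eta T \rceil)$, Lemma~\ref{lem:qgood} guarantees simultaneously that $\norm{q}_1 = O(n/k)$ and $q \ge p_{\tau+s}$ entrywise throughout the phase, so $\rho = O(n/k)$ is a valid hint-quality bound against every Gibbs distribution $p_{\tau+s}$ we need to sample from. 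The hint is stored classically via a hash table keyed on $\bigset$ so that entries can be queried in $O(1)$ time as required by Proposition~\ref{prop:mainhintsample}.

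Next, I would convert $\tZp$ into the approximate normalizer required by Proposition~\ref{prop:mainhintsample}. Invariant~\ref{inv:approxz} gives $\tZp \in [Z_\tau, C_0 Z_\tau]$ for some constant $C_0 = O(1)$, while Lemma~\ref{lem:stableiter} gives $Z_\tau \in [\tfrac{1}{3} Z_{\tau+s}, 3 Z_{\tau+s}]$ deterministically throughout the phase. Combining, $\tZ \defeq 3\tZp$ lies in $[Z_{\tau+s}, 9 C_0 Z_{\tau+s}]$, so it is a valid estimate with normalization quality $C = O(1)$ for every iteration in the phase. With $q$ and $\tZ$ established and with the bound $\beta = \eta T \ge \norm{x_{\tau+s}}_1$ (since $x_0 = \0_m$ and each $\update$ adds $\eta$ to one coordinate), I would directly invoke Proposition~\ref{prop:mainhintsample}, which produces the required $\delta$-approximate Gibbs oracle for $\ma^\top x_{\tau+s}$ at per-query cost
\[
O\Par{\sqrt{\rho C} \cdot \beta \log^{4}\Par{\tfrac{Cmn}{\delta}}} = O\Par{\sqrt{\tfrac{n}{k}} \cdot T\eta \log^{4}\Par{\tfrac{mn}{\delta}}},
\]
matching the claimed $\tsamp$. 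The dynamic $x$ continues to be tracked by a $\samptree$ (Lemma~\ref{lem:samplertree}); Proposition~\ref{prop:mainhintsample}'s ``$O(T\log m)$ total additional cost'' clause absorbs these per-$\update$ costs, and the invariants on $\tZ$, $C$, $\rho$, $\beta$ are all preserved as $x$ evolves within the phase (thanks to Lemmas~\ref{lem:stableiter} and~\ref{lem:qgood}). A minor bookkeeping detail is that $q_j$ as defined could exceed $1$ and violate Proposition~\ref{prop:mainhintsample}'s requirement $q_j \in [\delta/n, 1]$; this is resolved by capping $q_j$ at $1$, which only tightens $q \ge p_{\tau+s}$ (since $p_j \le 1$) and does not inflate $\norm{q}_1$, while the lower bound $q_j \ge 18/k \ge \delta/n$ follows from the Lemma~\ref{lem:qgood} hypothesis $\delta \le 1/(16k)$.

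The main obstacle, as anticipated, is ensuring that the cheap hint remains valid across the entire phase and not merely at the moment $\tau$ when it is constructed. Both the multiplicative stability of $Z$ and the entrywise bound on $q$ are absorbed by the constants baked into Lemmas~\ref{lem:stableiter} and~\ref{lem:qgood}, so the work is really just to quote them at the correct parameters. I would also be careful to state the phase-level failure probability as $\alpha/(2\lceil\eta T\rceil)$ explicitly, since later in Section~\ref{ssec:solveproblem} a union bound over the $O(\eta T)$ phases (and over the companion failure mode of re-establishing Invariant~\ref{inv:approxz} at the end of the phase, handled in Section~\ref{ssec:eachiter}) will consume this budget to produce the total $\alpha$ failure probability appearing in~\Cref{prop:mainhintmaintain}.
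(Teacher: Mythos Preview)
Your proposal is correct and follows essentially the same approach as the paper: run Lemma~\ref{lem:qgood} to obtain the hint $q$ with the stated failure probability, then invoke Proposition~\ref{prop:mainhintsample} with $\rho = O(n/k)$, $C = O(1)$, and $\beta = T\eta$. Your write-up is in fact more careful than the paper's terse proof on two points---explicitly adjusting the normalizer via $\tZ = 3\tZp$ and Lemma~\ref{lem:stableiter} so that it is valid for every iteration in the phase, and capping $q_j$ at $1$ to meet the entry-range hypothesis of Proposition~\ref{prop:mainhintsample}---both of which the paper leaves implicit.
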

\begin{proof}
We will run the algorithm described in the proof of Lemma~\ref{lem:qgood}, and condition on it succeeding, giving the failure probability. It then suffices to apply Proposition~\ref{prop:mainhintsample} with $q$ defined in Lemma~\ref{lem:qgood}. For this $q$, we parameterize Proposition~\ref{prop:mainhintsample} with $C = O(1)$ (see Invariant~\ref{inv:approxz}), $\rho = O(\frac n k)$ (see Lemma~\ref{lem:qgood}), and $\beta = T\eta$. It is clear the lower bound on entries of $q$ in Proposition~\ref{prop:mainhintsample} holds.
\end{proof}

\subsection{Maintaining invariants}\label{ssec:eachiter}

We now show how to maintain Invariant~\ref{inv:approxz} at iteration $\tau' \defeq \tau + \lceil\frac 1 \eta\rceil$, for use in the next phase, and bound the cost of doing so. We note that Invariant~\ref{inv:samplesolve} follows immediately from our construction in Corollary~\ref{cor:phasesolve}. First, by combining Lemma~\ref{lem:stableiter} with Invariant~\ref{inv:approxz}, 
\begin{equation}Z_{\tau'} \in \Brack{\frac {\tZp} {3C}, 3\tZp}.\end{equation}
This suggests that we may use $3\tZp = \tZ$ for the next phase; however, this would lead to an exponential blowup in the multiplicative range $C$. To sidestep this, we develop a tester for a hidden parameter governing a success probability, which will be used to give a refined estimate $\tZ$. We require the following corollary of Proposition~\ref{prop:mainhintsample}, whose proof we defer to Appendix~\ref{sec:quantum}.

\begin{restatable}{corollary}{restateotest}\label{cor:otest}
	Following notation of Proposition~\ref{prop:mainhintsample}, let $R \defeq \frac{\tZ}{Z}$. There is a quantum oracle $\otest$ which can be implemented under $T$ $\update$ calls to $x$ in $O(T \log m)$ time, and has query cost
	\[O\Par{\sqrt{\rho C} \cdot \beta\log^{4} \Par{\frac{Cmn}{\ell\delta}}}.\]
	Furthermore, for explicitly known constants $C_\ell$ and $C_u$, $\otest$ returns ``success'' with probability $p$ for
	\[\frac{C_\ell}{\sqrt{R\rho}} \le p \le \frac{C_u}{\sqrt{R\rho}}.\]
\end{restatable}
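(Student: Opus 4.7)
The plan is to build $\otest$ on top of the rejection sampling block-encoding underlying Proposition~\ref{prop:mainhintsample}. That construction prepares a state whose amplitude on the ``success'' subspace equals $\sqrt{Z/(\rho \tZ)} = 1/\sqrt{R\rho}$: indeed, the per-sample acceptance probability of the scheme ``draw $j \sim q/\rho$, keep with probability $\exp([\ma^\top x]_j)/(\tZ q_j)$'' sums to $Z/(\rho \tZ) = 1/(R\rho)$, and the $|0\rangle$-component of the prepared state is a coherent superposition carrying exactly this $\ell_2$-norm on its flag. Simply measuring would then yield ``success'' with probability $1/(R\rho)$, which is the wrong functional form compared to the targeted $\Theta(1/\sqrt{R\rho})$; what we need is a roughly square-root transformation of the underlying amplitude.

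The construction performs this transformation via quantum amplitude estimation (QAE) of \cite{BrassardHMT02}. Invoke QAE with $k = \Theta(\sqrt{\rho C})$ calls to the block-encoding from Proposition~\ref{prop:mainhintsample} to obtain a classical estimate $\tilde{a}$ of the success amplitude $a = 1/\sqrt{R\rho}$. Since $R \in [1,C]$ by the hypothesis $\tZ \in [Z, CZ]$, $a$ lies in $[1/\sqrt{\rho C}, 1/\sqrt{\rho}]$, and the standard QAE guarantee of additive error $O(\sqrt{a(1-a)}/k + 1/k^2)$ specializes at this choice of $k$ to an $O(a)$ additive bound, i.e., a constant-factor multiplicative estimate (assuming $\rho C \ge 1$, which is the regime of interest). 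Then have $\otest$ classically flip a biased coin whose bias is an appropriate rescaling of $\tilde{a}$ and declare ``success'' based on this coin. The overall success probability is a constant-factor approximation to $a = 1/\sqrt{R\rho}$, yielding the claimed bracketing constants $C_\ell$ and $C_u$ after absorbing the QAE's multiplicative tolerance and the rescaling factor. A QSVT-based alternative, applying a bounded polynomial approximation of $\sqrt{\cdot}$ of comparable degree directly to the block-encoded amplitude before measurement, achieves the same effect and would be interchangeable.

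The query complexity follows by multiplying the $k = O(\sqrt{\rho C})$ queries used by QAE against the per-query block-encoding cost of $O(\beta \log^3(Cmn/\delta))$ inherited from Proposition~\ref{prop:mainhintsample}, giving the claimed $O(\sqrt{\rho C}\cdot\beta\log^4(Cmn/\delta))$ bound. The amortized $O(T\log m)$ overhead under $T$ $\update$ operations is inherited directly from Lemma~\ref{lem:samplertree}, since $\otest$'s only interaction with $x$ is through reads from the $\samptree$ data structure that the block-encoding is already built on top of, and this data structure supports $T$ updates in $O(T\log m)$ total time. The main obstacle is a careful accounting of the QAE error across the full range $R \in [1,C]$ while keeping both the additive and multiplicative slack small enough to extract \emph{explicit} constants $C_\ell,C_u$; once this is done, the binary-search application of $\otest$ sketched after the corollary in Section~\ref{ssec:eachiter} will refine $\tZ$ and preserve Invariant~\ref{inv:approxz} across phases.
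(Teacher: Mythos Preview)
Your route is genuinely different from the paper's. The paper does not use amplitude \emph{estimation} at all: $\otest$ is literally the amplitude-\emph{amplified} rejection sampler already built inside Proposition~\ref{prop:mainhintsample} (with $\delta$ taken to be a sufficiently small constant), and ``success'' is simply the event that this boosted sampler lands in the accepting subspace. Since the pre-amplification acceptance probability is $\Theta(1/(R\rho))$ with explicit constants, and since the constants in the amplification routine of Proposition~\ref{prop:amplif} are explicit, the paper asserts the post-amplification acceptance probability is $\Theta(1/\sqrt{R\rho})$ with explicit constants. No QAE, no square-root polynomial, no classical coin --- just measure the already-prepared amplified state and report whether it accepted. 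This is considerably more direct than your construction.

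Your QAE-plus-coin construction has a real gap. Standard amplitude estimation from \cite{BrassardHMT02} meets its additive-error guarantee only with probability at least $8/\pi^2$; with the remaining constant probability the estimate $\tilde a$ can be essentially arbitrary in $[0,1]$. On those roughly $19\%$ of runs the coin bias may be $\Theta(1)$, so the overall success probability of $\otest$ picks up an additive $\Theta(1)$ contribution, destroying the required upper bound $p \le C_u/\sqrt{R\rho}$ whenever $R\rho$ is large. To rescue the argument you would have to boost QAE's confidence to $1 - O(1/\sqrt{\rho C})$ via $O(\log(\rho C))$ median repetitions before the coin trick can yield two-sided explicit-constant bounds; this is not in the proposal, and your closing remark about ``careful accounting of the QAE error'' reads as constant-chasing rather than as recognition of this structural issue. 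There is also some notational slippage: you write the QAE error as $O(\sqrt{a(1-a)}/k + 1/k^2)$ with $a$ defined as the \emph{amplitude}, but that formula is the BHMT bound for the \emph{probability} $a^2$ --- harmless once the failure-probability issue is handled, but worth tightening.
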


Corollary~\ref{cor:otest} differs from Proposition~\ref{prop:mainhintsample} in that it returns a Boolean-valued answer (as opposed to a sample from an approximate Gibbs distribution), and has a success probability parameterized by explicit constants. We now show how to use Corollary~\ref{cor:otest} to maintain Invariant~\ref{inv:approxz}. 

\begin{lemma}\label{lem:maintainZ}
Assume Invariants~\ref{inv:approxz},~\ref{inv:samplesolve} hold for the phase consisting of iterations $\tau + s$, $s \in [\lceil\frac 1 \eta\rceil]$, and suppose $C \ge \frac{4C_u^2}{C_\ell^2}$ for $C = O(1)$, where $C_u$ and $C_\ell$ are the constants from Corollary~\ref{cor:otest}. Further, suppose we have obtained $q$ satisfying the conclusion of Lemma~\ref{lem:qgood} (i.e.\ that the algorithm in Lemma~\ref{lem:qgood} succeeded). We can determine $\tZ$ such that 
$\tZ \in [Z_{\tau'}, CZ_{\tau'}]$ with probability $\ge 1 - \frac{\alpha}{2\lceil\eta T\rceil}$, in time
\[O\Par{\sqrt{\frac{n}{k}} \cdot T\eta \log^{4}\Par{\frac{mn}{\delta}} \log\Par{\frac{\eta T}{\alpha}}}.\]
\end{lemma}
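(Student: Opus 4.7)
The strategy is to use the tester $\otest$ from Corollary~\ref{cor:otest} to perform a short geometric search over candidate values of $\tZ$. First, combining Invariant~\ref{inv:approxz} with Lemma~\ref{lem:stableiter} gives $Z_{\tau'} \in [\tZp/(3C),\, 3\tZp]$, a window of multiplicative width $9C^2 = O(1)$. The candidate $\tZ$ values will be placed on a geometric grid within this window with multiplicative spacing roughly $\sqrt{C}$, so only $O(1)$ candidates suffice to guarantee one with $R = \tZ/Z_{\tau'} \in [1,C]$.

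For each candidate, I would invoke $\otest$ (with the corresponding $\tZ$ plugged in) $N = \Theta(\log(\eta T/\alpha))$ times, and use the empirical success frequency as the test statistic. The assumption $C \ge 4 C_u^2/C_\ell^2$ is precisely what is needed so that the two regimes are separated by a factor of $2$: for $R \le 1$ the success probability satisfies $p \ge C_\ell/\sqrt{\rho}$, while for $R \ge C$ one has $p \le C_u/\sqrt{C\rho} \le C_\ell/(2\sqrt{\rho})$. Picking a threshold midway between these bounds and applying a standard multiplicative Chernoff bound, $N$ independent runs of $\otest$ suffice to correctly classify each candidate as ``$R \le 1$'', ``$R \ge C$'', or ``$R \in [1,C]$'' with failure probability at most $\alpha/(4\lceil\eta T\rceil)$. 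We then return the smallest grid point classified as ``in range'' (monotonicity of $p$ in $\tZ$ ensures such a point exists within our window).

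A union bound over the $O(1)$ candidates gives overall failure probability at most $\alpha/(2\lceil\eta T\rceil)$. The total runtime is the product of the $O(1)$ candidates, the $N = O(\log(\eta T/\alpha))$ repetitions per candidate, and the per-call cost of $\otest$. Instantiating Corollary~\ref{cor:otest} with $\rho = O(n/k)$ (from Lemma~\ref{lem:qgood}), $\beta = T\eta$, and $C = O(1)$ gives per-call cost $O(\sqrt{n/k}\cdot T\eta \log^{4}(mn/\delta))$, yielding the claimed bound of $O(\sqrt{n/k}\cdot T\eta \log^{4}(mn/\delta) \cdot \log(\eta T/\alpha))$.

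The main delicate step is verifying the Chernoff-style analysis under the stated success-probability bounds; this hinges on checking that the constant-factor separation provided by the $C \ge 4 C_u^2/C_\ell^2$ assumption propagates to a constant-factor gap between empirical success counts, so that logarithmically many repetitions suffice to resolve the regime of $R$. I expect this to be the only non-routine ingredient, with the surrounding geometric-search and union-bound structure being mostly mechanical.
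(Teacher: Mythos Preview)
Your geometric-search strategy differs from the paper's argument. The paper is more direct: it fixes the single overestimate $\tZ_0 := 3\tZp$ (so that $R_0 := \tZ_0/Z_{\tau'} \in [1, 9C]$ by Invariant~\ref{inv:approxz} and Lemma~\ref{lem:stableiter}), calls $\otest$ a total of $N = \Theta(\sqrt{C\rho}\,\log(\eta T/\alpha))$ times with this one value, and then reads off an estimate of $R_0$ directly from the empirical success count $S$, using that $S/N \approx p$ with $p \in [C_\ell/\sqrt{R_0\rho},\, C_u/\sqrt{R_0\rho}]$. The hypothesis $C \ge 4C_u^2/C_\ell^2$ is precisely what guarantees that $S$ pins down $R_0$ --- and hence $Z_{\tau'}$ --- to within a multiplicative factor of $C$. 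There is no grid and no per-candidate thresholding.

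Your approach can in principle be made to work, but there is a genuine gap in the repetition count. The success probability of $\otest$ is $p = \Theta(1/\sqrt{R\rho})$, and since $\rho = \Theta(n/k)$ may be large, $p$ is \emph{not} bounded below by a constant. To separate $p \ge C_\ell/\sqrt{\rho}$ from $p \le C_\ell/(2\sqrt{\rho})$ with failure probability $\alpha'$ via a Chernoff bound, the expected number of successes $Np$ must be $\Omega(\log(1/\alpha'))$; this forces $N = \Omega(\sqrt{\rho}\,\log(\eta T/\alpha)) = \Omega(\sqrt{n/k}\,\log(\eta T/\alpha))$, not the $N = \Theta(\log(\eta T/\alpha))$ you claimed. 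The paper's choice of $N$ carries exactly this extra $\sqrt{\rho}$ factor. As a secondary point, a single threshold on the empirical frequency yields only a two-way classification, not the three-way ``$R \le 1$ / $R \ge C$ / $R \in [1,C]$'' split you describe, so the selection rule over grid points would need to be formulated more carefully.
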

\begin{proof}
Define $\tZ_0 \defeq 3\tZp$, $R_0 \defeq \frac{\tZ_0}{Z_{\tau'}}$, and note that $\tZ_0 \in [Z_{\tau'}, 9CZ_{\tau'}]$ by Invariant~\ref{inv:approxz} and Lemma~\ref{lem:stableiter}. Next, assuming the success of Lemma~\ref{lem:qgood}, we have that the success probability $p$ of $\otest$ from Corollary~\ref{cor:otest} using the estimate $\tZ_0$ satisfies (for the unknown $R_0 \in [1, 9C]$, and known $C_\ell, C_u, \rho$)
\[\frac{C_\ell}{\sqrt{R_0\rho}} \le p \le \frac{C_u}{\sqrt{R_0\rho}}.\]
For $N \defeq 27\log \frac{4\lceil\eta T\rceil}{\alpha} \cdot \frac{3\sqrt{C\rho}}{C_\ell}$, we first run $\otest$ $N$ times and check the number of successes, denoted by $S$, which fits within the runtime budget by Corollary~\ref{cor:otest}. By a Chernoff bound, we have that with probability $\ge 1 - \frac{\alpha}{2\lceil\eta T\rceil}$, we have
\[54\log \frac{4\lceil\eta T\rceil}{\alpha} \cdot \sqrt{\frac{C}{R_0}} \le \frac 2 3 pN \le S \le \frac 4 3 pN \le 108\log\frac{4\lceil\eta T\rceil}{\alpha} \cdot \frac{C_u}{C_{\ell}} \cdot \sqrt{\frac C {R_0}}.\]
Hence, we can determine the quantity $R_0$ up to a multiplicative factor of $\frac{4C_u^2}{C_\ell^2} \le C$, which also implies the same multiplicative approximation factor for $Z_{\tau'}$, as desired.
\end{proof}

\subsection{Proof of~\Cref{prop:mainhintmaintain}}\label{ssec:solveproblem}

\restatemainhintmaintain*
\begin{proof}
We first claim that for any $k \in [n]$, we can solve Problem~\ref{prob:sample} with probability $\ge 1 - \alpha$ and
\begin{align*}
	\tsamp &= O\Par{\sqrt{\frac{n}{k}} \cdot T\eta \log^{4}\Par{\frac{mn}{\delta}}}, \\
	\tup &= O\Par{\Par{\sqrt{\frac{n}{k}} \cdot T\eta \log^{4}\Par{\frac{mn}{\delta}}} \cdot k\eta \log\Par{\frac{n\eta T}{\alpha}}}.
\end{align*}
This follows from combining Lemma~\ref{lem:qgood} (amortized over $\lceil\frac 1 \eta\rceil$ iterations), Corollary~\ref{cor:phasesolve}, and Lemma~\ref{lem:maintainZ}, and taking a union bound over at most $\lceil\eta T\rceil$ phases. Here we note that the cost of $\log m$ per iteration to support $\update$ costs to $x$ in Lemma~\ref{lem:samplertree}, Proposition~\ref{prop:mainhintsample}, and Corollary~\ref{cor:otest} is not dominant. By choosing $k = \Theta(\max(1, (\eta \log \frac{mn}{\alpha\eps})^{-1}))$, we balance the costs of $\tsamp$ and $\tup$, yielding the conclusion. We finally note that by picking an appropriate constant in the definition of $k$, we have $\delta \le \eta \implies \delta \le \frac{1}{16k}$ as required by Lemma~\ref{lem:qgood}, the only component specifying a bound on $\delta$.
\end{proof} 
\subsection*{Acknowledgments}

We thank Andr\'as Gily\'en for communication regarding the prior work \cite{vanApeldoornG19}. AB was supported in part by the DOE QuantISED grant DE-SC0020360, by the AFOSR under grant FA9550-21-1-0392, and by the U.S. DOE Office of Science under Award Number DE-SC0020266. YG was supported in part by the Stanford MS\&E DE\&I Research program. YJ was supported in part by a Stanford Graduate Fellowship and a Danzig-Lieberman Graduate Fellowship. AS was supported in part by a Microsoft Research Faculty Fellowship, NSF CAREER Award CCF1844855, NSF Grant CCF-1955039, a PayPal research award, and a Sloan Research Fellowship. KT thanks Ewin Tang for her expertise on quantum linear algebra and for fielding many of our questions.

\bibliographystyle{alpha}	
\newcommand{\etalchar}[1]{$^{#1}$}

\newpage
\begin{appendix}

\section{Solving matrix games with a Gibbs sampling oracle}
\label{sec:sgd}

In this section, we prove Proposition~\ref{prop:main_sgd}, which shows how to solve a zero-sum matrix game using an approximate Gibbs sampling oracle (via Algorithm~\ref{alg:main}). To briefly motivate the algorithm we use and our proof of its guarantees, we recall the problem we consider is of the form
\begin{equation}\label{def:l1-l1}
\begin{gathered}
\min_{v\in\Delta^n}\max_{u\in\Delta^m} f(u,v)\defeq u^\top \ma v,~~\text{where}~~\norm{\ma}_{\max} \le 1,
\end{gathered}
\end{equation}
and we define the associated gradient operator as
\begin{align}\label{def:l1-l1-grad-operator}
	g(u,v) = (-\ma v, \ma^\top u).
\end{align}
Taking (stochastic) mirror descent steps on the gradient operator in \eqref{def:l1-l1} is well-known to yield an approximate NE to the matrix game \cite{Bubeck15}. We show that an approximate implementation of this strategy, combined with appropriate subsampling, efficiently yields an approximate NE. We begin by making the following observation.
\begin{lemma}\label{lem:deltabias}
Let $u,  \tu \in \Delta^m$ have $\norm{u - \tu}_1 \le \delta$. Let $\tg \defeq \ai$ where $i \sim \tu$, and $g \defeq \ma^\top u$. Then, $\norm{g - \E \tg}_\infty \le \delta$.
\end{lemma}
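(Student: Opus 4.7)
The plan is to recognize that $\E \tg$ is exactly $\ma^\top \tu$, since $\tg = \ai$ with probability $\tu_i$, so
\[
\E \tg = \sum_{i \in [m]} \tu_i \ai = \ma^\top \tu.
\]
Thus $g - \E \tg = \ma^\top(u - \tu)$, and the task reduces to bounding $\norm{\ma^\top (u - \tu)}_\infty$ in terms of $\norm{u - \tu}_1$.

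For each coordinate $j \in [n]$, I would write $[\ma^\top(u - \tu)]_j = \inprods{\aj}{u - \tu}$ and apply H\"older's inequality to obtain
\[
\Abs{[\ma^\top (u - \tu)]_j} \le \norm{\aj}_\infty \cdot \norm{u - \tu}_1 \le \norm{\ma}_{\max} \cdot \delta \le \delta,
\]
using the standing assumption $\norm{\ma}_{\max} \le 1$. Taking the maximum over $j$ gives the claimed $\ell_\infty$ bound.

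There is essentially no obstacle here; the lemma is a one-line computation once one identifies $\E \tg = \ma^\top \tu$. The only thing to be careful about is matching conventions for the gradient operator in \eqref{def:l1-l1-grad-operator} (note the sign flip on the first block) so that when this lemma is subsequently invoked to analyze the stochastic gradient estimator for the $y$-iterate in Algorithm~\ref{alg:main}, the bias bound is applied to the correct coordinate block. This consistency check is all that is needed before invoking the lemma downstream in the martingale-based convergence argument for Proposition~\ref{prop:main_sgd}.
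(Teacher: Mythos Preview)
Your proof is correct and matches the paper's approach exactly: identify $\E \tg = \ma^\top \tu$, then bound $\norm{\ma^\top(u - \tu)}_\infty \le \norm{\ma}_{\max} \norm{u - \tu}_1 \le \delta$. The paper's proof is a one-line version of precisely the same computation.
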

\begin{proof}
Note that $\E \tg = \ma^\top \tu$, and $\norm{\ma^\top(u - \tu)}_\infty \le \norm{u - \tu}_1 \le \delta$ since $\norm{\ma}_{\max} \le 1$.
\end{proof}

We next present a variant of the classical mirror descent analysis, which bounds the expected approximation quality of iterates of Algorithm~\ref{alg:main} prior to subsampling.

\begin{proposition}\label{prop:main-exp}
Let $\delta\le \frac{\epsilon}{20}$, $\eta = \frac{\eps}{15}$ and $T \ge \frac{6\log(mn)}{\eta\eps}$ in Algorithm~\ref{alg:main}. Let the iterates of Algorithm~\ref{alg:main} be $\{x_t, y_t\}_{t = 0}^{T - 1}$, and denote $u_t \defeq \frac{\exp(\ma y_t)}{\norm{\exp(\ma y_t)}_1}$, $v_t \defeq \frac{\exp(-\ma^\top x_t)}{\norm{\exp(-\ma^\top x_t)}_1}$ for all $0 \le t < T$. For $(\bu, \bv) \defeq \frac{1}{T}\sum_{t = 0}^{T - 1}(u_t, v_t)$, we have
\begin{align}\label{eq:main-exp}
\E\left[\max_{u \in \Delta^m} u^\top \ma \bv  - \min_{v \in \Delta^n} \bu^\top \ma v \right]\le \eps.
\end{align}
\end{proposition}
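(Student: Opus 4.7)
\textbf{Proof plan for Proposition~\ref{prop:main-exp}.} The plan is to recognize Algorithm~\ref{alg:main}'s updates as stochastic entropic mirror descent (equivalently, Follow-the-Regularized-Leader with negative entropy) on each simplex, and then invoke the standard regret bound, being careful to account for (i) the bias introduced by the $\delta$-approximate Gibbs oracle and (ii) the martingale noise from sampling. Concretely, since $u_t \propto \exp(\ma y_t)$ and $y_t = \eta \sum_{s<t} e_{j_s}$, the sequence $\{u_t\}$ is exactly the FTRL iterate on $\Delta^m$ with entropic regularizer $\frac{1}{\eta} \sum_i u_i \log u_i$ and cumulative linear loss $-\ma y_t = -\eta \sum_{s<t} \ma e_{j_s}$; analogously $\{v_t\}$ is FTRL on $\Delta^n$ with cumulative linear loss $\eta \sum_{s<t} \ma^\top e_{i_s}$. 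Each stochastic gradient $\hat g_t^u = -\ma e_{j_t}$ (resp.\ $\hat g_t^v = \ma^\top e_{i_t}$) satisfies $\norm{\hat g_t^u}_\infty, \norm{\hat g_t^v}_\infty \le 1$ because $\norm{\ma}_{\max} \le 1$.

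The second step is to apply the textbook FTRL/mirror-descent regret bound on the simplex (see e.g.\ \cite{Bubeck15}), which yields, deterministically, for any fixed comparators $u^* \in \Delta^m$, $v^* \in \Delta^n$,
\begin{align*}
	\sum_{t=0}^{T-1} \inprod{\hat g_t^u}{u_t - u^*} &\le \frac{\log m}{\eta} + \eta T, \\
	\sum_{t=0}^{T-1} \inprod{\hat g_t^v}{v_t - v^*} &\le \frac{\log n}{\eta} + \eta T,
\end{align*}
where the $\eta T$ term comes from summing the local-norm stability bound $\eta \langle p_t, \hat g_t^{\,2}\rangle \le \eta$ per step. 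Next, conditioning on the filtration $\mathcal F_t$ generated by iterations $0, \ldots, t-1$, let $\tilde v_t$ denote the true law of $j_t$; by assumption $\norm{\tilde v_t - v_t}_1 \le \delta$, so Lemma~\ref{lem:deltabias} gives $\norms{\E[\hat g_t^u \mid \mathcal F_t] - (-\ma v_t)}_\infty \le \delta$, and similarly for the $v$-block. Decomposing $\hat g_t = \E[\hat g_t \mid \mathcal F_t] + (\hat g_t - \E[\hat g_t \mid \mathcal F_t])$ and taking total expectation, the martingale noise vanishes while the bias contributes at most $\delta \norm{u_t - u^*}_1 \le 2\delta$ per iteration (and similarly $2\delta$ on the $v$-side). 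Summing both inequalities and adding yields
\[
\E \sum_{t=0}^{T-1} \inprod{g(u_t, v_t)}{(u_t - u^*, v_t - v^*)} \le \frac{\log(mn)}{\eta} + 2\eta T + 4\delta T.
\]

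Finally, the bilinear identity $\inprod{g(u_t, v_t)}{(u_t - u^*, v_t - v^*)} = (u^*)^\top \ma v_t - u_t^\top \ma v^*$ converts the regret bound into a duality-gap bound on the averaged iterate $(\bu, \bv)$: dividing by $T$ and taking $\max_{u^*}$, $\min_{v^*}$ (which by linearity pass through the average) gives $\E[\max_{u} u^\top \ma \bv - \min_{v} \bu^\top \ma v] \le \frac{\log(mn)}{\eta T} + 2\eta + 4\delta$, and the parameter choices $\eta = \eps/15$, $T \ge 6\log(mn)/(\eta\eps)$, $\delta \le \eps/20$ make each term at most $\eps/3$, giving the claimed bound. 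The main subtlety is the martingale bookkeeping: one must be careful that $(u_t, v_t)$ are measurable with respect to $\mathcal F_t$ (which they are, as deterministic functions of $x_t, y_t$) so that the conditional-expectation argument applies, and that the max/min over the comparator can be brought outside the expectation because the RHS of the regret bound is uniform in $(u^*, v^*)$.
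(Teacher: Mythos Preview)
Your argument has a genuine gap at the final step. After establishing, for each \emph{fixed} comparator $(u^*,v^*)$, that
\[
\E\Bigl[\tfrac1T\sum_{t=0}^{T-1}\langle g(u_t,v_t),(u_t-u^*,v_t-v^*)\rangle\Bigr]\le \frac{\log(mn)}{\eta T}+2\eta+4\delta,
\]
you conclude that the same bound holds with $\E[\max_{u^*}(\cdots)-\min_{v^*}(\cdots)]$ on the left. But this is the wrong direction of the max--expectation inequality: having $\E[f(w^*)]\le C$ uniformly in $w^*$ only yields $\sup_{w^*}\E[f(w^*)]\le C$, not $\E[\sup_{w^*}f(w^*)]\le C$, and the proposition asks for the latter. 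The justification you offer (``the RHS of the regret bound is uniform in $(u^*,v^*)$'') does not repair this. Concretely, your claim that ``the martingale noise vanishes'' is valid only when $u^*$ is fixed before the randomness is revealed; once $u^*$ is chosen as the (random) argmax, it correlates with the noise $\hat g_t - \E[\hat g_t\mid\mathcal F_t]$, and the cross term $\E\bigl[\max_{u^*}\langle \sum_t(\hat g_t^u - g_t^u),\,u^*\rangle\bigr]$ is of order $\sqrt{T\log m}$, not zero. Note that the \emph{pathwise} regret bound you start from does hold uniformly in $u^*$, but it bounds $\sum_t\langle\hat g_t,u_t-u^*\rangle$, not $\sum_t\langle g_t,u_t-u^*\rangle$; the passage from $\hat g_t$ to $g_t$ is exactly where the pathwise uniformity is lost.

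The paper handles precisely this issue with a \emph{ghost-iterate} argument (following \cite{NemirovskiJLS09,CarmonJST19}): it runs an auxiliary mirror-descent sequence $(\tilde u_t,\tilde v_t)$ on the error vectors $g_t-\tilde g_t$. Adding the regret inequality for the ghost sequence to the one for the true sequence produces a \emph{pathwise} bound on $\max_{w}\frac1T\sum_t\langle g_t,w_t-w\rangle$ in which the only remaining stochastic term is $\frac1T\sum_t\langle g_t-\tilde g_t,\,w_t-\tilde w_t\rangle$. Here the ``comparator'' $\tilde w_t$ is $\mathcal F_{t-1}$-measurable, so conditioning kills the martingale part and only the $\delta$-bias survives. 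Because the bound is pathwise and uniform in $w$ \emph{before} taking expectation, the max and the expectation come in the correct order. Your overall strategy (FTRL regret plus bias control via Lemma~\ref{lem:deltabias}) is the right skeleton; the missing piece is this decoupling of the random comparator from the noise.
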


\begin{proof}
By definition of the updates, at every iteration $0 \le t \le T - 1$,  we have
\begin{align*}
u_{t+1} & = \argmin_{u\in\Delta^m}\Brace{\eta \langle -\ma_{:j_t}, u\rangle + \sum_{i \in [m]} [u]_{i}\log\frac{[u]_{i}}{[u_t]_{i}}	},\\
v_{t+1} & = \argmin_{v\in\Delta^n}\Brace{\eta \langle \ma_{i_t:}, v\rangle + \sum_{j \in [n]} [v]_{j}\log\frac{[v]_{j}}{[v_t]_{j}}	}.
\end{align*}
Consequently, by the optimality conditions of $u_{t+1}$ and $v_{t+1}$ respectively, we have for any $u\in\Delta^m$, $v\in\Delta^n$, and letting $V_{x}(x') \defeq  \sum_{k}[x']_k\log \frac{[x']_k}{[x]_k}$ be the KL divergence between simplex variables of appropriate dimension,
\begin{equation}\label{eq:rearrangeopt}
\begin{aligned}
\inprod{-\aj}{u_t - u} + \inprod{\ai}{v_t - v} &\le \frac 1 \eta \Par{V_{u_t}(u) - V_{u_{t + 1}}(u) + V_{v_t}(v) - V_{v_{t + 1}}(v)} \\
&+ \Par{\inprod{-\aj}{u_t - u_{t + 1}}-\frac{1}{\eta}V_{u_t}(u_{t+1})} \\
&+\Par{ \inprod{\ai}{v_t - v_{t + 1}}-\frac{1}{\eta}V_{v_t}(v_{t+1})}\\
& \le \frac 1 \eta \Par{V_{u_t}(u) - V_{u_{t + 1}}(u) + V_{v_t}(v) - V_{v_{t + 1}}(v)} \\
&+ \frac{\eta}{2} \norm{\aj}_\infty^2+ \frac{\eta}{2} \norm{\ai}_\infty^2,
\end{aligned}
\end{equation}
where for the last inequality we use H\"older's inequality and the fact that $V$ is $1$-strongly convex in the $\ell_1$ norm (by Pinsker's inequality). Averaging the above for $0 \le t < T$, and denoting $w_t \defeq (u_t, v_t)$ and $\tilde{g}_t \defeq (-\ma_{:j_t},\ma_{i_t:})$, we obtain for any $w = (u, v)\in\Delta^m\times\Delta^n$,
\begin{equation}\label{eq:regret-1}
\frac{1}{T}\sum_{t = 0}^{T - 1}\langle\tilde{g}_t, w_t-w\rangle\le \frac{1}{\eta T} \Par{V_{u_0}(u)+V_{v_0}(v)}+\eta.
\end{equation}
In the above, we further recalled the bound $\norm{\ma}_{\max} \le 1$ by assumption. In order to bound the deviation of the left-hand side from its expectation, we use a ``ghost iterate'' argument following \cite{NemirovskiJLS09, CarmonJST19}. In particular, we define iterates $\tu_t$, $\tv_t$ as follows: let $\tu_0 \gets u_0$, $\tv_0 \gets v_0$, and then for each $0 \le t < T$, define
\begin{align*}
\tu_{t+1} &\defeq \argmin_{u \in\Delta^m}\Brace{\eta \langle -\ma v_t + \ma_{:j_t}, \bar{u}\rangle + \sum_{i \in [m]} [u]_{i}\log\frac{[u]_{i}}{[\tu_t]_{i}}	},\\
\tv_{t + 1} &\defeq \argmin_{v \in\Delta^n}\Brace{\eta \langle \ma^\top u_t - \ma_{:i_t}, \bar{v}\rangle + \sum_{j \in [n]} [v]_{j}\log\frac{[v]_{j}}{[\tv_t]_{j}}	},
\end{align*}
where $i, j$ above are the same coordinates as were used in defining the updates to $u_{t + 1}$ and $v_{t + 1}$. By an analogous bound to \eqref{eq:rearrangeopt}, where we note $\norm{\ma_{:j_t} - \ma^\top v_t}_\infty, \norm{\ma u_t - \ma_{i_t:}}_\infty \le 2$,
\begin{align*}
\inprod{-\ma^\top v_t + \ma_{:j_t}}{\tu_t - u} + \inprod{\ma u_t - \ma_{i_t:}}{\tv_t - v}
&\le \frac{1}{\eta}\Par{V_{\tu_t}(u)- V_{\tu_{t+1}}(u)+ V_{\tv_t}(v)- V_{\tv_{t+1}}(v)}\\
&+ 4\eta.
\end{align*}
Averaging the above for $0 \le t < T$, and denoting $\tw_t \defeq (\tu_t, \tv_t)$ and $g_t \defeq g(w_t)$ (see \eqref{def:l1-l1}), we obtain for any $w = (u, v)\in\Delta^m\times\Delta^n$, 
\begin{equation}\label{eq:regret-2}
\frac{1}{T}\sum_{t\in[T]-1}\langle g_t - \tilde{g}_t, \tw_t-w\rangle\le \frac{1}{\eta T} \Par{V_{u_0}(u)+V_{v_0}(v)} + 4\eta.
\end{equation}

Summing inequalities~\eqref{eq:regret-1} and~\eqref{eq:regret-2}, and maximizing over $w = (u, v)\in\Delta^m\times \Delta^n$, we have
\begin{equation}\label{eq:regret-expectation}
\begin{aligned}
\max_{w \in\Delta^m\times \Delta^n} \frac{1}{T}\sum_{t = 0}^{T - 1}\langle g_t, w_t-w\rangle 
	&  \le \max_{u\in\Delta^n,v\in\Delta^m}\frac{2}{\eta T}\left(V_{u_0}(u)+V_{v_0}(v)\right)\\
	&+5\eta+\frac{1}{T}\sum_{t=0}^{T - 1}\langle g_t-\tilde{g}_t, w_t-\tw_t\rangle.
\end{aligned}
\end{equation}
Taking expectations over the above, we have
\begin{align*}
	\E\left[\max_{w\in\Delta^m\times \Delta^n}\frac{1}{T}\sum_{t=0}^{T - 1}\langle g_t, w_t-w\rangle\right] &\le \max_{u\in\Delta^n,v\in\Delta^m}\frac{2}{\eta T}\left[V_{u_0}(u)+V_{v_0}(v)\right] \\
	&+5\eta +\E\left[\frac{1}{T}\sum_{t=0}^{T - 1}\langle g_t-\tilde{g}_t, w_t-\tw_t\rangle\right]\\
	 &\stackrel{(i)}{\le}  \frac{2\log (mn)}{\eta T}+5\eta +\frac{1}{T}\sum_{t\in[T]-1}\langle g_t-\E\tilde{g}_t, w_t-\bar{w}_t\rangle,\\
	 &\stackrel{(ii)}{\le} \frac{2\log(mn)}{\eta T}+5\eta+4\delta \stackrel{(iii)}{\le} \epsilon.
\end{align*}
In the above, $(i)$ used the diameter bound of the KL divergence from the uniform distribution, i.e.\ $\max_{u \in\Delta^m} V_{u_0}(u) = \log m$ (and a similar bound for $V_{v_0}(v)$). Further, $(ii)$ uses that $\tilde{g}_t$ is  conditionally independent of $w_t$ and $\tw_t$, and by the assumption on the Gibbs sampler $\norm{g_t - \E \tilde{g}_t}_\infty \le \delta$ (via Lemma~\ref{lem:deltabias}), and H\"older, and $(iii)$ uses our choices of $T$, $\eta$ and $\delta$. 

Finally, we note that the desired claim follows by linearity: for any $w = (u, v)$,
\begin{align*}\frac{1}{T}\sum_{t=0}^{T - 1}\langle g_t, w_t-w\rangle & = \left\langle g\Par{\frac{1}{T}\sum_{t=0}^{T - 1}w_t},\frac{1}{T}\sum_{t=0}^{T - 1}w_t - w\right\rangle\\
& 	
 = u^\top \ma \bv - \bu^\top \ma v.
 \end{align*}
\end{proof}

By using a simple martingale argument (inspired by those in \cite{Allen-ZhuL17, CarmonDST19}) to bound the error term in \eqref{eq:regret-expectation}, we show that the guarantee of Proposition~\ref{prop:main-exp} holds with high probability.

\begin{corollary}\label{prop:main-hp}
Let $\alpha \in (0, 1)$, and let  $\delta\le \frac{\epsilon}{20}$, $\eta = \frac \eps {20}$ and $T \ge \frac{8\log(mn)}{\eta\epsilon}+\frac{2048\log\frac 1 \alpha}{\epsilon^2} $ in Algorithm~\ref{alg:main}. Then with probability at least $1 - \alpha$, following notation of Proposition~\ref{prop:main-exp}, $(\bu, \bv)$ are an $\eps$-approximate NE for $\ma$.

\end{corollary}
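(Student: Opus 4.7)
The plan is to lift the deterministic inequality~\eqref{eq:regret-expectation} (which was the key step in the proof of Proposition~\ref{prop:main-exp}) into a high-probability statement by applying a martingale concentration inequality to its only stochastic term. Specifically, \eqref{eq:regret-expectation} gives, surely,
\[
\max_{w\in\Delta^m\times\Delta^n}\frac{1}{T}\sum_{t=0}^{T-1}\langle g_t, w_t-w\rangle
\le \frac{2\log(mn)}{\eta T}+5\eta+\frac{1}{T}\sum_{t=0}^{T-1}\langle g_t-\tilde g_t,\,w_t-\tw_t\rangle,
\]
after using the KL diameter bounds $V_{u_0}(u)\le\log m$ and $V_{v_0}(v)\le\log n$. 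The left-hand side equals $\max_{u}u^\top\ma\bv-\min_v \bu^\top\ma v$ by the same linearity argument as in Proposition~\ref{prop:main-exp}, so it suffices to show the right-hand side is at most $\eps$ with probability $\ge 1-\alpha$.

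With the stated choices $\eta=\eps/20$ and $T\ge 8\log(mn)/(\eta\eps)$, the first two (deterministic) terms contribute at most $\eps/4+\eps/4$. For the last term, let $\calF_{t-1}$ be the $\sigma$-algebra generated by $\{i_s,j_s\}_{s<t}$; since the iterates $w_t=(u_t,v_t)$ and the ghost iterates $\tw_t=(\tu_t,\tv_t)$ depend only on past samples, they are $\calF_{t-1}$-measurable. I would split
\[
\langle g_t-\tilde g_t,\,w_t-\tw_t\rangle
= \langle g_t-\E[\tilde g_t\mid\calF_{t-1}],\,w_t-\tw_t\rangle + \langle \E[\tilde g_t\mid\calF_{t-1}]-\tilde g_t,\,w_t-\tw_t\rangle.
\]
Lemma~\ref{lem:deltabias} gives $\|g_t-\E[\tilde g_t\mid\calF_{t-1}]\|_\infty\le\delta$, and $\|w_t-\tw_t\|_1\le 4$ since each of $u_t,\tu_t,v_t,\tv_t$ lives on a simplex. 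Hence the bias term is bounded deterministically by $4\delta\le \eps/5$ under $\delta\le\eps/20$.

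The remaining piece $X_t\defeq \langle \E[\tilde g_t\mid\calF_{t-1}]-\tilde g_t,\,w_t-\tw_t\rangle$ is a martingale difference with respect to $\{\calF_t\}$, and $|X_t|\le \|\E[\tilde g_t\mid\calF_{t-1}]-\tilde g_t\|_\infty\cdot\|w_t-\tw_t\|_1\le 2\cdot 4=8$ by H\"older and $\|\ma\|_{\max}\le 1$. Azuma--Hoeffding then yields
\[
\Pr\Brack{\Abs{\frac{1}{T}\sum_{t=0}^{T-1}X_t}\ge\frac{\eps}{4}}\le 2\exp\Par{-\frac{T\eps^2}{2048}}\le\alpha
\]
for $T\ge 2048\log(2/\alpha)/\eps^2$. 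Combining the four bounds $\eps/4+\eps/4+\eps/5+\eps/4\le \eps$ completes the argument. I do not anticipate a real obstacle here: the key conceptual point, and the only place where care is needed, is verifying that $\tw_t$ is $\calF_{t-1}$-measurable (so that $X_t$ really is a martingale difference) and that $X_t$ is bounded uniformly in $m,n$ despite the gradients living in $\R^{m+n}$ — both follow directly from the definitions of the ghost iterates in the proof of Proposition~\ref{prop:main-exp} and from the simplex constraints.
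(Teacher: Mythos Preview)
Your proposal is correct and essentially identical to the paper's proof: both start from \eqref{eq:regret-expectation}, split the stochastic term into a bias part (bounded by $4\delta$ via Lemma~\ref{lem:deltabias}) and a martingale-difference part with increments bounded by $8$, and then apply Azuma--Hoeffding. The only cosmetic difference is that the paper uses a one-sided Azuma bound (so $\log(1/\alpha)$ rather than $\log(2/\alpha)$ suffices), which is all that is needed here since only an upper tail on $\sum_t X_t$ is required.
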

\begin{proof}
Consider the filtration given by $\calF_t = \sigma(u_0,v_0,\tilde{g}_0,\cdots, \tilde{g}_{t},u_{t+1},v_{t+1})$. We will bound the terms $\sum_{t=0}^{T - 1}\langle g_t-\tilde{g}_t, w_t-\bar{w}_t\rangle$ in \eqref{eq:main-exp}. To do so, we define a martingale difference sequence of the form $D_t \defeq \langle g_{t}-\tilde{g}_{t}, w_{t}-\bar{w}_{t} \rangle - \langle g_{t}-\E\left[\tilde{g}_{t}|\calF_{t-1}\right], w_{t}-\bar{w}_{t} \rangle$ which is adapted to the filtration $\calF_{t}$. We first note that $D_t\le \norm{g_{t-1}-\tilde{g}_{t-1}}_\infty\norm{w_{t-1}-\bar{w}_{t-1} }_1\le 8$ with probability $1$. Consequently, applying the Azuma-Hoeffding inequality yields
\begin{align*}
	\sum_{t=0}^{T - 1}D_t\le \sqrt{128T\log\frac{1}{\alpha}}~\text{with probability}\ge 1-\alpha.
\end{align*}
Plugging this back into~\eqref{eq:regret-expectation} and using the KL divergence range bound, Lemma~\ref{lem:deltabias} with our definition of $\ogibbs$, and choices of parameters, we thus have with probability $1-\alpha$,
\begin{align}\label{eq:regret-expectation-hp}
	& \max_{w\in\Delta^m\times \Delta^n} \frac{1}{T}\sum_{t=0}^{T - 1}\langle g_t, w_t-w\rangle \le \frac{2\log mn}{\eta T}+5\eta+4\delta+\sqrt{\frac{128\log\frac{1}{\alpha}}{T}}\le\epsilon.
\end{align}
The remainder of the proof follows analogously to Proposition~\ref{prop:main-exp}.
\end{proof}

The Gibbs sampling oracles implicitly maintain access to $u_t \propto \exp(\ma  y_t)$ and $v_t \propto \exp(-\ma^\top x_t)$, which by averaging gives $(\bar{u},\bar{v})=\frac{1}{T}\sum_{t=0}^{T-1}(u_t, v_t)$ as one approximate equilibrium as guaranteed in~\Cref{prop:main-hp}. To turn the implicitly maintained iterates into an actual classic output, we subsample the iterates. Below we formally show one can take the empirical average of independent samples from distributions close to $\bu$ and $\bv$ to also obtain an approximate equilibrium (with the same approximation factor up to constant factors) with high probability.

\begin{lemma}\label{lem:empirical-sample}
Suppose $\bu = \frac 1 T \sum_{t = 0}^{T - 1} u_t$ for $\{u_t\}_{t=0}^{T - 1} \subset \Delta^m$ and $\bv = \frac 1 T \sum_{t = 0}^{T - 1} v_t$ for $\{v_t\}_{t=0}^{T - 1} \subset \Delta^n$ are an $\eps$-approximate NE for $\ma$. Further suppose that for some $\delta \in (0, 1)$, $\{\tu_t\}_{t=0}^{T - 1} \subset \Delta^m$, $\{\tv_t\}_{t=0}^{T - 1} \subset \Delta^n$, and for all $0 \le t < T - 1$, we have $\norm{\tu_t - u_t}_1 \le \delta$ and $\norm{\tv_t - v_t}_1 \le \delta$. Let $\hu = \frac 1 T \sum_{t = 0}^{T - 1} e_{i_t}$ where each $e_{i_t} \in \R^m$ is sampled independently according to $\tu_t$; similarly, let $\hv = \frac 1 T \sum_{t = 0}^{T - 1} e_{j_t}$ where each $e_{j_t} \in \R^n$ is sampled independently according to $\tv_t$. Suppose $T \ge\frac{16\log \frac{mn}{\alpha}}{\eps^2}$. Then with probability at least $1 - \alpha$, $(\hu, \hv)$ are a $(2\eps + 2\delta)$-approximate NE for $\ma$.
\end{lemma}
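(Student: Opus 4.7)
The plan is to argue in two stages: first, pass from the averaged iterates $(\bu,\bv)$ to the approximated averages $(\tu,\tv) \defeq \frac1T\sum_t(\tu_t,\tv_t)$ at a cost of $2\delta$ in the duality gap, and then pass from $(\tu,\tv)$ to the empirical samples $(\hu,\hv)$ at a cost of $\eps$ via Hoeffding concentration on $\ma^\top\hu$ and $\ma\hv$.

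For the first step, since $\norms{\tu-\bu}_1 \le \frac1T\sum_t\norms{\tu_t-u_t}_1 \le \delta$ and analogously $\norms{\tv-\bv}_1\le\delta$, and since $\norms{\ma}_{\max}\le 1$ implies $v\mapsto \bu^\top\ma v$ and $u\mapsto u^\top\ma\bv$ are $1$-Lipschitz in $\ell_1$, H\"older's inequality gives
\[
\max_{v\in\Delta^n}\tu^\top\ma v - \min_{u\in\Delta^m}u^\top\ma\tv \le \max_{v\in\Delta^n}\bu^\top\ma v - \min_{u\in\Delta^m}u^\top\ma\bv + 2\delta \le \eps + 2\delta.
\]

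For the second step, I would observe that $\max_v \hu^\top\ma v = \max_{j\in[n]}[\ma^\top\hu]_j$ and $\min_u u^\top\ma\hv = \min_{i\in[m]}[\ma\hv]_i$, so the duality gap at $(\hu,\hv)$ exceeds that at $(\tu,\tv)$ by at most $\norms{\ma^\top(\hu-\tu)}_\infty + \norms{\ma(\hv-\tv)}_\infty$. For each fixed $j\in[n]$, $[\ma^\top\hu]_j = \frac1T\sum_t \ma_{i_t,j}$ is a sum of independent bounded random variables with mean $[\ma^\top\tu]_j$ (since $\E[\ma_{i_t,j}]=[\ma^\top\tu_t]_j$) and range contained in $[-1,1]$. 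Hoeffding's inequality gives $\Pr[|[\ma^\top\hu]_j-[\ma^\top\tu]_j|>\eps/2] \le 2\exp(-T\eps^2/8)$, and a union bound over the $n$ coordinates for $\ma^\top\hu$ and the $m$ coordinates for $\ma\hv$ yields a total failure probability of at most $2(m+n)\exp(-T\eps^2/8)$, which is at most $\alpha$ under the hypothesis $T\ge 16\log(mn/\alpha)/\eps^2$ (absorbing constants).

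Combining, on this good event the duality gap at $(\hu,\hv)$ is at most $(\eps+2\delta)+\eps/2+\eps/2 = 2\eps+2\delta$, as required. The proof is essentially a chain of triangle inequalities; the only substantive step is the concentration bound, and the main thing to be careful about is keeping the correct conditional-independence structure (the samples $i_t,j_t$ are drawn independently given the iterates $\tu_t,\tv_t$, which themselves are determined by earlier randomness), but since we only need concentration of a fixed linear functional with bounded summands, standard Hoeffding suffices with no need for martingale refinements.
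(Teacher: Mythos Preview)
Your proposal is correct and follows essentially the same approach as the paper's proof: first shift from $(\bu,\bv)$ to the averaged approximations at cost $2\delta$ via convexity of norms and H\"older, then apply Hoeffding coordinatewise to $\ma^\top\hu$ and $\ma\hv$ with a union bound over the $m+n$ rows and columns of $\ma$ to control the additional $\eps$ deviation. The paper's argument is structured identically, differing only in cosmetic bookkeeping of the union bound.
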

\begin{proof}
First, let $\tu_{\textup{avg}} = \frac 1 T \sum_{t = 0}^{T - 1} \tu_t$ and $\tv_{\textup{avg}} = \frac 1 T \sum_{t = 0}^{T - 1} \tv_t$. By convexity of norms, we have $\norm{\tu_{\textup{avg}} - \bu}_1 \le \delta$ and $\norm{\tv_{\textup{avg}} - \bv}_1 \le \delta$, and hence under the NE approximation guarantee of $(\bu, \bv)$ and H\"older's inequality,
\[\max_{u \in \Delta^m} u^\top \ma \tv_{\textup{avg}} - \min_{v \in \Delta^m} \tu_{\textup{avg}}^\top \ma v \le \eps + 2\delta.\]
Let $z$ be a fixed vector in $[-1, 1]^n$. By Hoeffding's inequality, since each random variable $\inprod{z}{e_{j_t}}$ lies in the range $[-1, 1]$ and $\E \hv = \tv_{\textup{avg}}$, we have that
\begin{equation}\label{eq:unionbound}\Pr\Brack{\Abs{\inprod{z}{\hv - \tv_{\textup{avg}}}} \ge \frac \eps 2} \le 2\exp\Par{-\frac{T\eps^2}{8}} \le \frac{\alpha}{m + n}.\end{equation}
Next, note that $\max_{u \in \Delta^m} u^\top \ma \tv_{\textup{avg}}$ is achieved by a basis vector $u = e_i$. Hence, applying a union bound over \eqref{eq:unionbound} for all $z = \ai$ shows that with probability at least $1 - \frac{\alpha m}{m + n}$,
\[\max_{u \in \Delta^m} u^\top \ma \hv \le \max_{u \in \Delta^m} u^\top \ma \tv_{\textup{avg}} + \frac \eps 2.\]
By symmetry, with probability at least $1 - \frac{\alpha n}{m + n}$,
\[\min_{v \in \Delta^n} \hu^\top \ma v \ge \min_{v \in \Delta^n} \tu_{\textup{avg}}^\top \ma v - \frac \eps 2.\]
The conclusion follows from a union bound, and combining the above three displays.
\end{proof}

Finally, we put these pieces together to give a complete guarantee.

\restatemainsgd*
\begin{proof}
We follow notation of Proposition~\ref{prop:main-exp}. By applying Corollary~\ref{prop:main-hp} (up to constant factors), we have that with probability at least $1 - \frac \alpha 2$, $\bu \defeq \frac 1 T \sum_{t = 0}^{T - 1} u_t$ and $\bv \defeq \frac 1 T \sum_{t = 0}^{T - 1} v_t$ satisfy
\[\max_{u \in \Delta^m} u^\top \ma \bv - \min_{v \in \Delta^n} \bu^\top \ma v \le \frac \eps 3.\]
Finally, Lemma~\ref{lem:empirical-sample} (with failure probability $\frac \alpha 2$) and a union bound yields the desired conclusion.
\end{proof}

\section{Quantum rejection sampling with a hint}\label{sec:quantum}

In this section, we prove Proposition~\ref{prop:mainhintsample}, which gives a dynamic quantum rejection sampling subroutine and bounds its cost of implementation. Our result is an extension of analogous developments in \cite{vanApeldoornG19}, but are stated more generally to allow for the use of an appropriate ``hint'' vector in the rejection sampling procedure. We build up to our main result in several pieces. 

\paragraph{Amplitude amplification.} First, for a quantum decision algorithm which applies unitary $\mmu$ and then measures, yielding an accepting state with probability $\alpha$, quantum amplification \cite{BrassardHMT02} shows we can apply $\mmu$ $\approx \alpha^{-\half}$ times to obtain an accepting state with high probability.

\begin{proposition}[Theorem 3, \cite{BrassardHMT02}]\label{prop:amplif}Let $S \subseteq \{0, 1\}^s$, let $\mmu$ be a $s$-qubit quantum oracle, and let $\alpha$ be the probability that measuring the result of applying $\mmu$ yields an accepting state. There is a (quantum) algorithm using $O(\alpha^{-\half}\log \frac 1 \delta)$ queries to $\mmu$ and $O(\log s\log \frac 1 \delta)$ additional time that returns $s$ with $s \in S$ with probability $\ge 1 - \delta$.
\end{proposition}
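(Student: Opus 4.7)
The plan is to use the standard amplitude amplification machinery from \cite{BrassardHMT02}. Let $P_S$ denote the projector onto computational basis states $\ket{s}$ with $s \in S$, and write $\ket{\psi} \defeq \mmu\ket{0}$. Decompose $\ket{\psi} = \sin\theta \ket{\psi_S} + \cos\theta \ket{\psi_{S^c}}$, where $\ket{\psi_S}$ and $\ket{\psi_{S^c}}$ are the normalized projections of $\ket{\psi}$ onto the accepting and non-accepting subspaces respectively, and $\sin^2\theta = \alpha$ with $\theta \in [0, \pi/2]$. The proof is essentially geometric in this two-dimensional real subspace spanned by $\ket{\psi_S}$ and $\ket{\psi_{S^c}}$.

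The key operator to define is $Q \defeq -\mmu S_0 \mmu^{-1} S_S$, where $S_0 \defeq I - 2\ket{0}\bra{0}$ is the reflection about $\ket{0}$ and $S_S \defeq I - 2 P_S$ is the reflection about the non-accepting subspace. Both reflections can be implemented in $O(\log s)$ time: $S_S$ reduces to a single membership check for $S$ (controlling a phase flip on an ancilla), and each application of $\mmu, \mmu^{-1}$ counts as one query to the oracle. A short calculation shows that $Q$ acts as a rotation by angle $2\theta$ in $\text{span}(\ket{\psi_S}, \ket{\psi_{S^c}})$, so $Q^k \ket{\psi} = \sin((2k+1)\theta)\ket{\psi_S} + \cos((2k+1)\theta)\ket{\psi_{S^c}}$. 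When $\alpha$ (equivalently $\theta$) is known, the choice $k = \lfloor \pi/(4\theta)\rfloor$ makes $(2k+1)\theta$ within $\theta$ of $\pi/2$, yielding an accepting measurement outcome with probability at least a constant and using only $O(\alpha^{-\half})$ queries to $\mmu$.

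To handle the generic case where $\alpha$ is unknown, I would invoke the exponential search trick of \cite{BrassardHMT02}: fix a constant $c \in (1, 2)$, and in the $j$-th round sample $k_j$ uniformly from $\{0, 1, \ldots, \lceil c^j\rceil - 1\}$, apply $Q^{k_j}\ket{\psi}$, measure, and classically check membership in $S$. Averaging the rotation identity above over uniform $k$ shows the success probability at round $j$ approaches $\half$ once $c^j \gtrsim 1/\sin\theta$, and a geometric series argument bounds the expected total number of queries by $O(\alpha^{-\half})$. Finally, to amplify the constant success probability to $1 - \delta$, repeat this entire procedure $O(\log(1/\delta))$ times and return any accepting outcome; the aggregate query cost is $O(\alpha^{-\half}\log(1/\delta))$ and the classical overhead from membership checks is $O(\log s \log(1/\delta))$. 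The trickiest step is the analysis of the unknown-$\alpha$ schedule, since one must control the possibility of overshooting past $\theta \approx \pi/2$ uniformly across the doubling schedule; this is precisely where the randomization of $k_j$ inside the window $\{0, \ldots, \lceil c^j\rceil - 1\}$ is essential, ensuring no single bad value of $k$ dominates the outcome.
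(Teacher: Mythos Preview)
Your sketch is correct and follows the standard amplitude amplification argument from \cite{BrassardHMT02}, which is exactly what the paper relies on; the paper itself does not reprove this proposition but simply cites it as Theorem~3 of \cite{BrassardHMT02}.
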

\paragraph{Loading from trees.} Given a dynamic vector $x \in \R^m_{\ge 0}$ which is supported in an appropriate efficient data structure $\samptree$ (see Lemma~\ref{lem:samplertree}), and a known bound $\beta \ge \norm{x}_1$, we recall a result of \cite{GroverR02} which allows us to form a superposition of the entries in $x$ (suitably rescaled). 

\begin{lemma}
	Let $x \in \R^m_{\ge 0}$ correspond to an instance of $\samptree$, and $\beta \ge \norm{x}_1$. We can maintain a quantum oracle $\orst$ which takes $O(\log m)$ time to apply, such that the total cost of building $\orst$ after $T$ calls to $\update$ is $O(T \log m)$, and
	\[\orst \ket{0}^{\otimes (a + 1)} = \sum_{i \in [m]} \sqrt{\frac{x_i}{\beta}} \ket{0} \ket{i} + \sqrt{1 - \frac{\norm{x}_1}{\beta}} \ket{1} \ket{g}.\]
\end{lemma}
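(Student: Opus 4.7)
The plan is to implement $\orst$ via the Grover--Rudolph construction~\cite{GroverR02}, adapted so that the rotation angles at each internal node are read off directly from the subtree sums maintained by $\samptree$. Since Lemma~\ref{lem:samplertree} supports $\subsum$ queries in $O(1)$ time and $\update$ in $O(\log m)$ amortized time, every classical data-structural piece we need is already in place; the only task is to assemble the quantum circuit of depth $O(\log m)$.

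First I would describe preparation of the normalized state $\sum_{i\in[m]} \sqrt{x_i/\norms{x}_1}\ket{i}$ on an $a$-qubit register. Traversing the binary tree of $\samptree$ level by level starting from the root, let each internal node $v$ have children $v_L, v_R$ with subtree sums $s_L = \subsum(v_L)$ and $s_R = \subsum(v_R)$. I apply, conditioned on the prefix of bits identifying $v$, a single-qubit rotation on the next qubit with angle $\theta_v$ defined by $\cos\theta_v = \sqrt{s_L/(s_L+s_R)}$ (taking $\theta_v=0$ when $s_L+s_R=0$). The angle is computable in $O(1)$ arithmetic operations from two $\subsum$ calls under the word RAM assumptions of Section~\ref{sec:prelims}, and a standard telescoping over the $a$ levels shows that the amplitude on leaf $i$ equals $\sqrt{x_i/\norms{x}_1}$.

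Next, to incorporate the bound $\beta \ge \norms{x}_1$, I prepend a ``flag'' qubit initialized to $\ket{0}$ and, conditioned on the root subtree sum $\norms{x}_1 = \subsum(\text{root})$, apply a single rotation with angle $\phi$ satisfying $\cos\phi = \sqrt{\norms{x}_1/\beta}$; then, controlled on the flag being $\ket{0}$, I invoke the tree-based preparation above, leaving an arbitrary garbage register $\ket{g}$ on the $\ket{1}$ branch. Combining yields
\[
\orst \ket{0}^{\otimes(a+1)}
= \sqrt{\tfrac{\norms{x}_1}{\beta}}\,\ket{0}\sum_{i\in[m]} \sqrt{\tfrac{x_i}{\norms{x}_1}}\ket{i} + \sqrt{1 - \tfrac{\norms{x}_1}{\beta}}\,\ket{1}\ket{g}
= \sum_{i\in[m]} \sqrt{\tfrac{x_i}{\beta}}\,\ket{0}\ket{i} + \sqrt{1 - \tfrac{\norms{x}_1}{\beta}}\,\ket{1}\ket{g},
\]
as claimed. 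The per-application depth is $a+1 = O(\log m)$, and since the quantum circuit reads from $\samptree$ at query time rather than storing a precomputed circuit, the total maintenance cost across $T$ updates is exactly the classical $O(T \log m)$ of Lemma~\ref{lem:samplertree}.

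I do not expect a genuine obstacle here: this is a direct port of Grover--Rudolph to a dynamically maintained tree. The only subtlety is making the controlled rotations reversible and coherent --- computing $\theta_v$ into an ancilla register by classical arithmetic, applying a tower of controlled single-qubit rotations indexed by the bits of $\theta_v$, and then uncomputing the ancilla --- exactly analogous to the construction of $\orma'$ from $\orma$ in Section~\ref{sec:prelims}.
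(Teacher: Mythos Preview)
Your proposal is correct and takes essentially the same approach as the paper's own proof: prepend a single flag-qubit rotation with $\cos\phi = \sqrt{\norms{x}_1/\beta}$, then run the Grover--Rudolph level-by-level construction conditioned on the flag being $\ket{0}$, with each rotation angle read off from two $\subsum$ calls. The only minor distinction is bookkeeping on the maintenance cost: the paper phrases it as ``on an $\update$ we change the gates along the leaf-to-root path,'' whereas you phrase it as the circuit querying $\samptree$ at application time; both viewpoints yield the same $O(T\log m)$ bound.
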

\begin{proof}
	This is implicit in \cite{GroverR02}. We first apply a $1$-qubit gate to condition on selecting from the tree (with probability $\frac{\norm{x}_1}{\beta}$), and then apply the \cite{GroverR02} procedure conditioned on the first qubit being $\ket{0}$, which controls for one qubit at a time while propagating subtree sums (provided by $\samptree$ via $\subsum$). The cost to build the circuit follows because on an $\update$ we need to change the gates corresponding to the relevant leaf-to-root path.
	
\end{proof}

\begin{corollary}\label{cor:loadax}
	Let $x \in \R^m_{\ge 0}$ correspond to an instance of $\samptree$, and let $\beta \ge \norm{x}_1$, and suppose $\ma \in \R^{m \times n}$ has $\norm{\ma}_{\max} \le 1$. We can maintain a quantum oracle $\oracle_{\ma^\top x}$ which takes $O(\log m)$ time to apply, with total building cost $O(T \log m)$ after $T$ calls to $\update$, such that for any $j \in [n]$,
	\[\oracle_{\ma^\top x} \ket{0}^{\otimes (a + 2)} \ket{j} = \ket{0} \Par{\sum_{i \in [m]} \sqrt{\frac{\ma_{ij} x_i}{\beta}} \ket{0} \ket{i} + \ket{1}\ket{g}}\ket{j}. \]
\end{corollary}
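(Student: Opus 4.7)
The plan is to compose the two loading primitives already in hand: $\orst$ from the preceding lemma to produce amplitudes proportional to $\sqrt{x_i/\beta}$, then $\orma'$ from Section~\ref{sec:prelims} to fold in the factor $\sqrt{\ma_{ij}}$. I would partition the $a+2$ fresh zero qubits as $A$ (a single outer flag), $B$ (the $\orst$ success/fail flag), and $C$ (an $a$-qubit register for the row index $i$), and first apply $\orst$ on $(B, C)$ to obtain
\[
\ket{0}_A \Par{\sum_{i\in[m]} \sqrt{\tfrac{x_i}{\beta}}\ket{0}_B\ket{i}_C + \sqrt{1-\tfrac{\norm{x}_1}{\beta}}\ket{1}_B\ket{g}_C}\ket{j}_D.
\]
Then apply $\orma'$ on $(A, C, D)$ with $A$ serving as the value qubit, controlled on $B = \ket{0}$ so that the $\orst$-rejection branch is left untouched. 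On the accepting branch this produces $\sqrt{\ma_{ij}}\ket{0}_A\ket{0}_B\ket{i}_C + \sqrt{1-|\ma_{ij}|}\ket{1}_A\ket{0}_B\ket{i}_C$, each weighted by $\sqrt{x_i/\beta}$.

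A final constant-size cleanup routes both failure modes onto the single flag $\ket{1}_B$ while restoring $A$ to $\ket{0}$: a CNOT from $A$ into $B$ marks the $\orma'$-failure branch with $B=\ket{1}$, after which a small block-encoding-style unitary on $(A, B)$ together with the garbage register $C$ consolidates the two failure branches into the single form $\ket{0}_A\ket{1}_B\ket{g}_C$, leaving the accepting amplitude $\sqrt{\ma_{ij}x_i/\beta}$ on $\ket{0}_A\ket{0}_B\ket{i}_C$. The signing of $\sqrt{\ma_{ij}}$ is fixed via the flexibility in the implementation of $\orma'$ noted at the end of Section~\ref{sec:prelims}, so that the composed amplitude is exactly $\sqrt{\ma_{ij}x_i/\beta}$.

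For the runtime, each query costs $O(\log m)$: $\orst$ contributes $O(\log m)$, and the controlled $\orma'$ together with the $O(1)$-size cleanup add only constants. Maintenance under $T$ $\update$ calls is inherited entirely from $\orst$ and sums to $O(T\log m)$ by the preceding lemma, since $\orma'$ is a static circuit on $\ma$-oracle calls needing no upkeep. The main technical subtlety worth working out carefully in the full proof is showing that the composed operation is genuinely a unitary: in particular, the garbage produced by $\orma'$ inside the $\orst$-acceptance subspace must remain orthogonal to the $\orst$-rejection subspace. This is standard in block-encoding compositions and is handled here either by the control on $B$ keeping the $\{B=\ket{0}\}$ and $\{B=\ket{1}\}$ subspaces invariant under the composed map, or by reserving an additional scratch register inside $\orma'$ so that its garbage lives in an orthogonal ancilla subspace.
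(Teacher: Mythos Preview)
Your construction is essentially the paper's: first $\orst$ to load $\sqrt{x_i/\beta}$, then $\orma'$ to fold in $\sqrt{\ma_{ij}}$. The paper's entire proof is one sentence---``We apply $\orma'$ (see Section~\ref{sec:prelims}) to the output of $\orst$, ignoring the additional qubit''---with none of the controlling on $B$ or the flag-consolidation cleanup you describe.

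One caution on your extra cleanup step: merging the two failure branches $\ket{0}_A\ket{1}_B\ket{g}_C$ and $\ket{1}_A\ket{1}_B\ket{g'}_C$ (after your CNOT) into a single $\ket{0}_A\ket{1}_B\ket{g''}_C$ is not obviously implementable by a small unitary on $(A,B,C)$ without further assumptions. Resetting $A$ to $\ket{0}$ while preserving unitarity forces the $C$-register to absorb that bit of information, which only works if the two garbage states on $C$ are orthogonal (or otherwise distinguishable); this is implementation-dependent and not something you have argued. The paper does not attempt this step at all, so your version is actually more careful than theirs about matching the stated target form; but the ``small block-encoding-style unitary'' you invoke is doing real work that would need to be spelled out.
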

\begin{proof}
	We apply $\orma'$ (see Section~\ref{sec:prelims}) to the output of $\orst$, ignoring the additional qubit.
\end{proof}

We remark here that the additional qubit in Corollary~\ref{cor:loadax} will shortly become useful in constructing an appropriate block-encoding of a scaling of $\diag{\ma^\top x}$.

\paragraph{Polynomial approximation.} In order to give approximate Gibbs samplers for the types of dynamic vectors Algorithm~\ref{alg:main} encounters, we further require some tools from polynomial approximation theory. We first state a helper result on boundedly approximating the exponential, a variant of which was also used in \cite{vanApeldoornG19}. We provide a proof in Appendix~\ref{app:boundedexp}.

\begin{restatable}[Lemma 7,~\cite{vanApeldoornG19}]{lemma}{restatepolyhelper}\label{lem:poly-approx-helper}
	Let $\beta\ge1$, $\xi\le \frac 1 {10}$. There is a polynomial $\poly_{\beta,\xi}$ of degree $O(\beta\log \frac 1 \xi)$ such that $\max_{x \in [-1, 1]}|\poly_{\beta,\xi}(x)|\le 3$ and $\max_{x \in [-1, 0]}|\poly_{\beta,\xi}(x)-\exp(\beta x)|\le \xi$.
\end{restatable}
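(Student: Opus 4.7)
My plan is to build $\poly_{\beta,\xi}$ as a product of two polynomials, one handling the accuracy on $[-1, 0]$ and the other enforcing boundedness on $[-1, 1]$. The first ingredient is the truncated Taylor polynomial $P_T(x) \defeq \sum_{k=0}^{d_1} (\beta x)^k / k!$ with $d_1 = O(\beta + \log(1/\xi)) = O(\beta \log(1/\xi))$. For $x \in [-1, 0]$, the argument $\beta x$ is non-positive, so the alternating-series remainder bound gives $|P_T(x) - \exp(\beta x)| \le |\beta x|^{d_1+1}/(d_1+1)! \le \beta^{d_1+1}/(d_1+1)!$. Combined with Stirling's bound $(d_1+1)! \ge ((d_1+1)/e)^{d_1+1}$, this is at most $(e\beta/(d_1+1))^{d_1+1}$, which is at most $\xi/2$ for $d_1 \ge 2e\beta + \log_2(1/\xi)$. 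However, $|P_T(x)|$ can be as large as $\exp(\beta)$ on $[0, 1]$, violating the desired bound of $3$.

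The second ingredient is a damping polynomial $S(x)$ of degree $d_2 = O(\beta \log(1/\xi))$ satisfying $|S(x) - 1| \le \xi/4$ on $[-1, 0]$, $|S(x)| \le 3\exp(-\beta)$ on $[0, 1]$, and $|S(x)| \le 1$ on $[-1, 1]$. Such $S$ can be built using a suitable shift-and-scale of Chebyshev polynomials, exploiting that $T_d$ is bounded by $1$ on $[-1, 1]$ while growing rapidly outside. The polynomial $\poly_{\beta,\xi} \defeq P_T \cdot S$ then has degree $d_1 + d_2 = O(\beta \log(1/\xi))$, and the two required properties follow by the triangle inequality: on $[-1, 0]$,
\[
|\poly_{\beta,\xi}(x) - \exp(\beta x)| \le |P_T(x) - \exp(\beta x)|\cdot|S(x)| + |\exp(\beta x)|\cdot|S(x) - 1| \le \tfrac{\xi}{2}(1 + \tfrac{\xi}{4}) + 1 \cdot \tfrac{\xi}{4} \le \xi,
\]
while on $[-1, 1]$, the product is bounded by $(1 + \xi/2)(1 + \xi/4) \le 2$ on $[-1, 0]$ and by $\exp(\beta) \cdot 3\exp(-\beta) = 3$ on $[0, 1]$.

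The main obstacle is the construction of the damping polynomial $S$ with the tight degree bound. Viewing $S$ naively as a smoothed step function between the two subintervals yields degree $\Omega(\beta^2)$, since the required suppression ratio is $\exp(-\beta)$ and a step-function approximation with gap $\gamma$ and accuracy $\epsilon$ takes degree $O((1/\gamma) \log(1/\epsilon))$, with $\gamma = O(1/\beta)$ forced to prevent $P_T$ from growing on the transition region. Attaining the tighter $O(\beta\log(1/\xi))$ bound therefore requires exploiting the particular geometric alignment between the $\exp(-\beta)$ suppression required on $[0,1]$ and the natural Chebyshev growth rate outside $[-1,1]$, along the lines of the construction in \cite{vanApeldoornG19}; the rest of the verification is a direct calculation using standard Chebyshev estimates.
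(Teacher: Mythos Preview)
Your approach is genuinely different from the paper's. The paper simply invokes Corollary~66 of \cite{GilyenSLW19} with $f(x)=\exp(\beta x)$, $x_0=-1$, $r=1$, $\delta=1/\beta$, and $B=e$; the required degree and boundedness fall out directly from that black box. There is no product decomposition and no damping polynomial.

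Your proposal, however, has a real gap. As written, the specification of $S$ is inconsistent: you require $|S(x)-1|\le \xi/4$ on $[-1,0]$ and $|S(x)|\le 3\exp(-\beta)$ on $[0,1]$, and both intervals contain $x=0$. The first condition forces $S(0)\ge 1-\xi/4\ge 0.975$, while the second forces $S(0)\le 3\exp(-\beta)$; for any $\beta\ge 2$ these are incompatible, so no such polynomial exists. You are implicitly relying on a transition zone of width $\gamma$ around $0$, but you then correctly observe that $P_T$ grows like $\exp(\beta x)$ for $x>0$, forcing $\gamma=O(1/\beta)$, and that the standard smoothed-step construction then yields degree $\Omega(\beta^2)$, not $O(\beta\log(1/\xi))$.

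Your final paragraph acknowledges this obstacle and asserts it can be overcome by ``exploiting the particular geometric alignment'' via Chebyshev polynomials, but no construction is given --- this is precisely the nontrivial content of the lemma, and you have deferred it rather than proved it. In particular, the natural Chebyshev trick (normalizing $T_d$ on a shifted interval) gives a polynomial that \emph{oscillates} on the small-value region and is \emph{monotone and large} on the other side, which is the wrong shape for a damping factor that must be uniformly close to $1$ on $[-1,0]$. Without an explicit $S$ achieving the stated bounds at degree $O(\beta\log(1/\xi))$, the proof is incomplete. The cleanest fix is to abandon the product ansatz and either cite Corollary~66 of \cite{GilyenSLW19} directly (as the paper does) or reproduce its Fourier-analytic bounded-approximation argument.
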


Next, we state a further corollary of Lemma~\ref{lem:poly-approx-helper} to be used in our rejection sampler.

\begin{corollary}\label{cor:polyshift}
	Let $B, \delta \ge 0$ and suppose $v \in \R^n$ has $\norm{v}_\infty \le B$. Further, suppose for some $c \ge 0$, $- c \le \max_{j \in [n]} v_j \le 0$.
	Let $q \in \R^n_{\ge 0}$ satisfy $q_j \in [\ell, 1]$ entrywise. Finally, define $u_j \defeq \frac{v_j}{2B}$ entrywise. There is a degree-$\Delta$ polynomial $\poly$, for $\Delta = O(B \cdot (c + \log\frac{n}{\ell\delta}))$, such that for $w_j \defeq \poly(u_j)^2 q_j$ and $z_j \defeq \exp(2B u_j)q_j$ entrywise,
	\begin{equation}\label{eq:wzclose}\norm{\frac w {\norm{w}_1} - \frac z {\norm{z}_1}}_1 \le \delta.\end{equation}
\end{corollary}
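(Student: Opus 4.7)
The key structural feature to exploit is that $w_j$ is defined via $\poly(u_j)^2$, so one only needs a polynomial satisfying $\poly(u) \approx \exp(Bu)$, since then the difference-of-squares identity
\[\Abs{\poly(u)^2 - \exp(2Bu)} = \Abs{\poly(u) - \exp(Bu)} \cdot \Abs{\poly(u) + \exp(Bu)}\]
converts a small additive error on $\poly$ into a comparable error on $\poly^2$ relative to $\exp(v_j) = \exp(2Bu_j)$.

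First, I invoke Lemma~\ref{lem:poly-approx-helper} with its $\beta$ parameter set to $\max(B, 1)$ (rescaling the argument linearly when $B < 1$) and with an approximation tolerance $\xi$ to be chosen. This yields a polynomial $\poly$ of degree $O(B \log(1/\xi))$ (absorbing the $B<1$ case into an additive constant) such that $|\poly(u)| \le 3$ on $[-1, 1]$ and $|\poly(u) - \exp(Bu)| \le \xi$ on $[-1, 0]$. Since the hypotheses give $v_j \in [-B, 0]$, the rescaled values $u_j = v_j/(2B) \in [-\thalf, 0]$ lie in both ranges.

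Second, I bound the pointwise error: using $|\exp(Bu_j)| \le 1$ (as $Bu_j \le 0$) and $|\poly(u_j)| \le 3$, the factorization above gives $|\poly(u_j)^2 - \exp(v_j)| \le 4\xi$ for every $j$. Multiplying by $q_j \le 1$ and summing yields $\norm{w - z}_1 \le 4\xi n$.

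Third, I lower bound the normalization: choosing $j^*$ attaining $v_{j^*} \ge -c$, the hypothesis $q_{j^*} \ge \ell$ gives $\norm{z}_1 \ge \exp(v_{j^*}) q_{j^*} \ge \ell \exp(-c)$. The standard total-variation estimate
\[\norm{\frac{w}{\norm{w}_1} - \frac{z}{\norm{z}_1}}_1 \le \frac{2 \norm{w - z}_1}{\norm{z}_1} \le \frac{8 \xi n \exp(c)}{\ell},\]
which follows from the triangle inequality together with $|\norm{w}_1 - \norm{z}_1| \le \norm{w-z}_1$, then allows me to set $\xi = \delta \ell \exp(-c)/(8n)$ to achieve the target bound $\delta$. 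The resulting degree $O(B \log(1/\xi)) = O(B \cdot (c + \log(n/(\ell\delta))))$ matches the claim. The only delicate step is recognizing that $\poly$ must approximate $\exp(v/2)$ rather than $\exp(v)$, which is precisely what keeps the absolute error on $\poly^2$ at the order of $\xi$; everything else is routine polynomial approximation and normalization bookkeeping.
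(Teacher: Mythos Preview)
Your proposal is correct and follows essentially the same route as the paper: invoke Lemma~\ref{lem:poly-approx-helper} to approximate $\exp(Bu)$ on $[-1,0]$, use the difference-of-squares factorization to pass to $\exp(2Bu)$, lower bound $\norm{z}_1 \ge \ell\exp(-c)$ via the index achieving $v_{j^*}\ge -c$, and finish with the standard $2\norm{w-z}_1/\norm{z}_1$ estimate; the choice of $\xi$ and the resulting degree match. The only notable difference is that the paper rescales its polynomial by $\tfrac{1}{6}$ so that $\max_{[-1,1]}|\poly|\le\tfrac12$ (and compensates by comparing $36w$ to $z$), a property needed downstream for block-encoding (Proposition~\ref{prop:polyblock}) but not for the statement you were asked to prove.
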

Moreover, $\max_{x \in [-1, 1]} |\poly(x)| \le \half$, and $\|w\|_1 \ge \frac{1 - \delta}{36} \|z\|_1$.
\begin{proof}
	Assume $\delta \le 2$ else the statement is clearly true. First, $u_j \in [-\half, 0]$ entrywise by the stated assumptions (since $v_j \in [-B, 0]$ entrywise). Let $\poly_{\beta, \xi}(\cdot)$ be the polynomial given by Lemma~\ref{lem:poly-approx-helper} which $\xi$-approximates $\exp(\beta \cdot)$ on $[-\frac 1 2, 0]$. We define 
	\[\poly(u) \defeq \frac 1 6 \poly_{B, \xi}\Par{u},\text{ for } \xi \defeq \frac{\delta\ell}{6n\exp(c)}.\]
	The degree bound and absolute value bound of this polynomial follows immediately from Lemma~\ref{lem:poly-approx-helper}, so it remains to show the distance bound. The guarantees of Lemma~\ref{lem:poly-approx-helper} then imply for all $j \in [n]$,
	\begin{equation}\label{eq:polyclose}\Abs{6\poly(u_j) - \exp\Par{Bu_j}} \le \xi.\end{equation}
	We further have that $u_j \le 0$, so $\exp(Bu_j) \le 1$. Hence, we also have
	\[\Abs{6\poly(u_j) + \exp\Par{Bu_j}} \le 2 + \xi \le 3. \]
	Combining yields for all $j \in [n]$,
	\begin{equation}\label{eq:polyapprox}\Abs{36\poly(u_j)^2 - \exp\Par{2Bu_j}} \le 3\xi.\end{equation}
	Next, let $y_j \defeq 36 w_j$ for all $j \in [n]$, and note that $\frac{y}{\|y\|_1} = \frac{w}{\|w\|_1}$. We bound
	\begin{equation}\label{eq:abdist}
		\begin{aligned}
			\norm{\frac w {\norm{w}_1} - \frac z {\norm{z}_1}}_1 &= \sum_{j \in [n]} \Abs{\frac{y_j}{\norm{y}_1} - \frac{z_j}{\norm{z}_1}} \le \sum_{j \in [n]} \Abs{\frac{y_j}{\norm{y}_1} - \frac{y_j}{\norm{z}_1}} + \sum_{j \in [n]} \Abs{\frac{y_j}{\norm{z}_1} - \frac{z_j}{\norm{z}_1}} \\
			&\le \Abs{1 - \frac{\norm{y}_1}{\norm{z}_1}}+ \frac{\norm{y - z}_1}{\norm{z}_1} \le \frac{2\norm{y - z}_1}{\norm{z}_1}.
		\end{aligned}
	\end{equation}
	By using the definitions of $y, z$ and \eqref{eq:polyapprox}, as well as the assumed ranges on $q$,
	\begin{align*}
		\norm{y - z}_1 \le 3n\xi,\;\norm{z}_1 \ge \ell \exp(- c).
	\end{align*}
	The second inequality used that some $v_j = 2B u_j$ is at least $-c$ by assumption. Combining the above display with \eqref{eq:abdist} and the definition of $\xi$ concludes the proof of \eqref{eq:wzclose}. Finally, using the bounds on $\norm{y - z}_1, \|z\|_1$ above shows that
	\[\|w\|_1 = \frac{1}{36}\|y\|_1 \ge \frac{1 - \delta}{36} \|z\|_1.\]
\end{proof}

\paragraph{Block-encoding.} Our approximate Gibbs oracle follows an implementation strategy pioneered by \cite{GilyenSLW19} termed ``block-encoding.'' Specifically, we follow \cite{GilyenSLW19} and say that $\mmu$, an $(a+\ell)$-qubit quantum gate, is an $\ell$-bit block-encoding of $\mm$ if the top-left $2^a \times 2^a$ submatrix of $\mmu$ is $\mm$. Block-encoded matrices admit efficient composable operations, such as the application of linear combinations and bounded polynomials. We summarize these properties in the following.

\begin{proposition}[Lemma 52, \cite{GilyenSLW19}]\label{prop:addblock}
	Let $\mmu_1$ and $\mmu_2$ be $\ell$-bit block-encodings of $\mm_1$, $\mm_2$ of the same size. There is an $O(\ell)$-bit block-encoding of $\half \mm_1 + \half \mm_2$ which takes the same asymptotic time to apply as applying $\mmu_1$ and $\mmu_2$.
\end{proposition}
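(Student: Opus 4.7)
The plan is to invoke the standard ``linear-combination-of-unitaries'' (LCU) construction. I would introduce one fresh ancilla qubit, call it the selector register $s$, initialized in $\ket{0}_s$, and define $\mmu$ as the composition of: (i) a Hadamard $H$ on the selector, (ii) the controlled application of $\mmu_1$ when $s = 0$ and $\mmu_2$ when $s = 1$, and (iii) another Hadamard $H$ on the selector. The new circuit uses $\ell + 1 = O(\ell)$ ancilla qubits, as required.

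To verify that this is an $(\ell+1)$-bit block-encoding of $\half \mm_1 + \half \mm_2$, I would trace its action on an arbitrary input of the form $\ket{0}_s \ket{0}^{\otimes \ell} \ket{x}$. Writing $\mmu_i \ket{0}^{\otimes \ell} \ket{x} = \ket{0}^{\otimes \ell} \mm_i \ket{x} + \ket{g_i}$, where each $\ket{g_i}$ has no component on $\ket{0}^{\otimes \ell} \otimes (\text{anything})$ by the block-encoding property, the first Hadamard produces $\tfrac{1}{\sqrt{2}}(\ket{0}_s + \ket{1}_s)\ket{0}^{\otimes \ell}\ket{x}$, and the controlled applications give $\tfrac{1}{\sqrt{2}}\bigl(\ket{0}_s (\ket{0}^{\otimes \ell} \mm_1 \ket{x} + \ket{g_1}) + \ket{1}_s (\ket{0}^{\otimes \ell} \mm_2 \ket{x} + \ket{g_2})\bigr)$. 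Since the garbage terms $\ket{g_i}$ have zero overlap with $\ket{0}^{\otimes \ell}$ in the block-encoding register (and the selector Hadamard acts only on $s$), the amplitude on $\ket{0}_s \ket{0}^{\otimes \ell}$ after the final Hadamard is exactly $\half(\mm_1 + \mm_2)\ket{x}$, which is precisely the defining condition.

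For the runtime bound, I would appeal to the standard fact that controlling an arbitrary quantum circuit on a single qubit incurs at most constant-factor overhead per gate, so the controlled versions of $\mmu_1$ and $\mmu_2$ cost the same asymptotically as their uncontrolled counterparts; the two added Hadamards contribute only $O(1)$ gates. This matches the runtime claim that $\mmu$ costs the same asymptotically as one application each of $\mmu_1$ and $\mmu_2$. There is no real obstacle here -- the construction and verification are entirely standard -- so the only step requiring any attention is bookkeeping the orthogonality of the $\ket{g_i}$ terms through the selector Hadamard, which is immediate since that Hadamard leaves the block-encoding register untouched.
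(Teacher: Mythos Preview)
Your proposal is correct and is precisely the standard LCU construction underlying Lemma~52 of \cite{GilyenSLW19}; the paper does not give its own proof but simply cites that lemma, and your Hadamard--controlled-$\mmu_i$--Hadamard circuit is exactly the specialization of that result to two equally weighted terms. There is nothing to add.
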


\begin{proposition}[Theorem 56, \cite{GilyenSLW19}]\label{prop:polyblock}
	Let $\mmu$ be an $\ell$-bit block-encoding of $\mm$, and $\poly:[-1, 1] \to [-\half, \half]$ be a degree-$\Delta$ polynomial. There is an $O(\ell)$-bit block-encoding of $\poly(\mm)$ which can be applied in $O(\Delta)$ applications of $\mmu$ and $\mmu^\dagger$ and $O(\ell\Delta)$ additional time.
\end{proposition}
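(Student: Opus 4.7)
The plan is to establish this via the quantum signal processing (QSP) framework lifted to block-encodings, a construction sometimes called \emph{qubitization}. At the heart of the argument is the QSP representation theorem: for any real polynomial $P : [-1,1] \to [-1,1]$ of degree $d$ with definite parity, there exist phase factors $\phi_0, \ldots, \phi_d \in \R$ such that an alternating product of single-qubit $Z$-rotations by the $\phi_k$ and a ``signal'' rotation $R(x)$ produces a unitary whose $(0,0)$ entry equals $P(x)$. I would take this existence result as a black box, since a constructive proof (e.g.\ via Haah~\cite{Haah19}) requires independent complex-analytic machinery tangential to the block-encoding lift.

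Next I would perform the reduction from QSP to the block-encoded setting. Given the $\ell$-bit block-encoding $\mmu$ of $\mm$, the singular value decomposition $\mm = \sum_k \sigma_k \ket{u_k}\bra{v_k}$ induces a decomposition of the ambient Hilbert space into (at most) two-dimensional invariant subspaces indexed by $k$, on each of which $\mmu$ acts by a rotation through an angle $\theta_k$ with $\cos\theta_k = \sigma_k$. Thus within each such subspace $\mmu$ plays the role of the QSP signal rotation $R(\sigma_k)$. I would then implement each $\phi_k$ phase rotation as a \emph{block-rotation}, namely a projector-controlled phase gate where the projector onto the block-encoding subspace is realized using the $\ell$ ancilla qubits of $\mmu$ together with $O(\ell)$ elementary gates. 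Alternating $\mmu$ and $\mmu^\dagger$ with these block-rotations $\Delta$ times applies $P$ singular value by singular value, producing an $O(\ell)$-bit block-encoding of $P(\mm)$ at a total cost of $O(\Delta)$ applications of $\mmu, \mmu^\dagger$ and $O(\ell\Delta)$ additional gates.

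Finally I would remove the parity restriction intrinsic to QSP. An arbitrary polynomial $P : [-1,1] \to [-\half, \half]$ of degree $\Delta$ splits as $P = P_e + P_o$ (even and odd parts), and each of $2P_e$ and $2P_o$ has degree at most $\Delta$, has definite parity, and maps $[-1,1]$ into $[-1,1]$. Applying the qubitization construction separately to $2P_e$ and $2P_o$ yields $O(\ell)$-bit block-encodings of $2P_e(\mm)$ and $2P_o(\mm)$, and invoking Proposition~\ref{prop:addblock} produces an $O(\ell)$-bit block-encoding of $\half \cdot 2P_e(\mm) + \half \cdot 2P_o(\mm) = P(\mm)$, matching the statement.

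The main obstacle I anticipate is verifying rigorously the qubitization invariance claim: that the two-dimensional invariant subspaces of $\mmu$ associated to each singular value are actually preserved by the projector-controlled phase rotations implementing the $\phi_k$ factors. Establishing this requires careful bookkeeping of which ancilla registers belong to the original block-encoding of $\mm$ versus those introduced by QSP synthesis, and checking that the projector is block-diagonal with respect to the invariant-subspace decomposition. Once this compatibility is in hand, the remainder of the argument is a faithful translation of the single-qubit QSP identity into the top-left block of the high-dimensional unitary.
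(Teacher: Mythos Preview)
The paper does not prove this proposition; it is quoted verbatim as Theorem~56 of~\cite{GilyenSLW19} and used as a black box. There is therefore no in-paper proof to compare against. Your sketch is a faithful outline of how the result is actually established in~\cite{GilyenSLW19}: the qubitization construction reduces the block-encoded case to single-qubit quantum signal processing on the two-dimensional invariant subspaces determined by the singular vectors, and the parity restriction is lifted by decomposing $\poly$ into its even and odd parts and recombining via a linear-combination-of-unitaries step (which is exactly what Proposition~\ref{prop:addblock} provides). Your bound $|2\poly_e|,|2\poly_o|\le 1$ follows from $|\poly_e(x)|\le\half(|\poly(x)|+|\poly(-x)|)\le\half$ and similarly for $\poly_o$, so the reduction is sound.

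One minor point worth tightening if you flesh this out: the invariant two-dimensional subspaces in QSVT are spanned by pairs of the form $(\ket{0}^{\otimes\ell}\ket{v_k},\,\ket{\perp_k})$ on the right and $(\ket{0}^{\otimes\ell}\ket{u_k},\,\ket{\perp_k'})$ on the left, and it is the alternation of $\mmu$ and $\mmu^\dagger$ interleaved with the two \emph{different} ancilla-reflection operators (one for each projector) that keeps the dynamics inside these subspaces. Your description conflates these slightly by speaking of a single projector, but the obstacle you flag is exactly the right one, and the resolution in~\cite{GilyenSLW19} is the ``cosine-sine'' decomposition (their Lemma~45/Definition~43) which makes the block-diagonal structure explicit.
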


We also demonstrate that an application of Corollary~\ref{cor:loadax} yields a simple block-encoding of $\diag{\frac{\ma^\top x}{\beta}}$. A similar construction previously appeared in \cite{vanApeldoornG19}.

\begin{corollary}\label{cor:blockencodediag}
	Let $x \in \R^m_{\ge 0}$ correspond to an instance of $\samptree$, and $\beta \ge \norm{x}_1$. Let $\mm \defeq \diag{\frac{\ma^\top x}{\beta}}$ and $\mmu \defeq \oracle_{\ma^\top x}^* (\textup{SWAP}_{12} \otimes \id) \oracle_{\ma^\top x}$, where $\textup{SWAP}_{12}$ swaps the first two qubits and $\oracle_{\ma^\top x}$ is from Corollary~\ref{cor:loadax}. Then $\mmu$ is a block-encoding of $\mm$, and can be applied in time $O(\log m)$, with total building cost $O(T \log m)$ after $T$ calls to $\update$.
\end{corollary}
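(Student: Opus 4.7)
The plan is to verify that $\mmu$ satisfies the block-encoding identity $(\bra{0}^{\otimes(a+2)}\otimes\id)\mmu(\ket{0}^{\otimes(a+2)}\otimes\id)=\mm$ by a direct amplitude calculation, and then read off the runtime bounds from Corollary~\ref{cor:loadax}.

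First I would let $\ket{\phi_j}\defeq\oracle_{\ma^\top x}\ket{0}^{\otimes(a+2)}\ket{j}$ and, using the explicit form supplied by Corollary~\ref{cor:loadax}, decompose it according to the values of the first two ancilla qubits (the matrix-ancilla and the tree-ancilla). The doubly-accepting $\ket{00}$ branch carries amplitude $\sqrt{\ma_{ij}x_i/\beta}$ on row basis state $\ket{i}$, while the $\ket{01}$ branch holds the tree-reject garbage $\ket{g_j}$; the matrix-ancilla remains $\ket{0}$ throughout. Applying $\textup{SWAP}_{12}\otimes\id$ preserves the $\ket{00}$ piece and sends the $\ket{01}$ garbage into the $\ket{10}$ sector, thereby carrying it out of the all-ancilla-zero subspace selected by the block-encoding projection.

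Next I would use the identity $\bra{0}^{\otimes(a+2)}\bra{j'}\oracle_{\ma^\top x}^{*}=\bra{\phi_{j'}}$, obtained by taking conjugate transposes, to rewrite the target matrix element as $\bra{\phi_{j'}}(\textup{SWAP}_{12}\otimes\id)\ket{\phi_j}$. Expanding bra and ket across the four two-qubit sectors $\{\ket{00},\ket{01},\ket{10},\ket{11}\}$, every cross-sector pairing vanishes by orthogonality of the first two ancilla qubits, so the only surviving contribution is
\[
\sum_{i}\bigl(\sqrt{\ma_{ij'}x_i/\beta}\bigr)^{*}\sqrt{\ma_{ij}x_i/\beta}\cdot\langle j'|j\rangle.
\]
Invoking the sign flexibility in the construction of $\orma'$ granted by the access model in Section~\ref{sec:prelims}, I would fix the branch of $\sqrt{\ma_{ij}}$ so that this sum evaluates to $\delta_{jj'}\sum_i\ma_{ij}x_i/\beta=\delta_{jj'}[\ma^\top x]_j/\beta$, matching $[\mm]_{j'j}$ and establishing that $\mmu$ is a block-encoding of $\mm=\diag{\ma^\top x/\beta}$.

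For the runtime, a single application of $\mmu$ calls $\oracle_{\ma^\top x}$ twice (forward and inverse) plus one constant-depth SWAP, giving per-application cost $O(\log m)$ by Corollary~\ref{cor:loadax}; and because $\mmu$ is wholly determined by $\oracle_{\ma^\top x}$ (with no additional state to maintain), the amortized $O(T\log m)$ build cost under $T$ $\update$ operations is inherited directly. The main obstacle will be the sign-handling step: the naive quantum inner product $\sum_i(\sqrt{\ma_{ij}x_i/\beta})^{*}\sqrt{\ma_{ij}x_i/\beta}$ evaluates to $\sum_i|\ma_{ij}|x_i/\beta$ under any single branch-choice, and one must argue carefully that the specific sign convention allowed in $\orma'$, together with the way the matrix and tree ancillae are interleaved inside $\oracle_{\ma^\top x}$, recovers the signed sum $\sum_i\ma_{ij}x_i/\beta$ demanded by $\mm$.
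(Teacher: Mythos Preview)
Your approach is essentially the paper's: compute $\bra{\phi_{j'}}(\textup{SWAP}_{12}\otimes\id)\ket{\phi_j}$, observe that the $\ket{01}$ garbage is swapped into the $\ket{10}$ sector and hence annihilated against the $\bra{01}$ garbage of the bra, leaving only the $\ket{00}$ contribution. The runtime argument is also the same.

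The one place where you have a gap is exactly the one you flag. You correctly note that for any \emph{single} branch choice of $\sqrt{\ma_{ij}}$, the diagonal contribution is $\sum_i |\ma_{ij}| x_i/\beta$, not $\sum_i \ma_{ij} x_i/\beta$, and you leave ``one must argue carefully'' as a placeholder. The paper's resolution is not a clever interleaving of ancillae but rather to abandon the literal definition $\mmu = \oracle_{\ma^\top x}^*(\textup{SWAP}_{12}\otimes\id)\oracle_{\ma^\top x}$ and instead use two \emph{different} implementations $\oracle^L$ and $\oracle^R$ of $\oracle_{\ma^\top x}$: when $\ma_{ij}\ge 0$ both use the same signed square root, and when $\ma_{ij}<0$ they use opposite signs, so that the product of the conjugated amplitude from $\oracle^L$ and the amplitude from $\oracle^R$ is $\ma_{ij}x_i/\beta$ with the correct sign in every case. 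This is precisely the ``flexibility'' that Section~\ref{sec:prelims} sets up, but the point is that it must be exercised \emph{differently} on the two sides of the conjugation; no single instantiation of $\orma'$ suffices.
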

\begin{proof}
	Define $w_{ij} \defeq \frac{\ma_{ij} x_i}{\beta}$ for convenience.
	By the definition of $\oracle_{\ma^\top x}$, we have that
	\[\Par{\textup{SWAP}_{12} \otimes \id} \oracle_{\ma^\top x} \Par{\ket{0}^{\otimes (a + 2)} \ket{j}} = \Par{\ket{00} \sum_{i \in [m]} \sqrt{w_{ij}}\ket{i} + \ket{10}\ket{g} }\ket{j}. \]
	Hence, for $j, j' \in [n]$, we compute $\bra{j'}\bra{0}^{\otimes(a + 2)} \mmu \ket{0}^{\otimes(a + 2)}\ket{j}$ as:
	\begin{align*}
		\bra{j'}\Par{\ket{00} \sum_{i \in [m]} \sqrt{w_{ij}}\ket{i} + \ket{01}\ket{g} }^*\Par{\ket{00} \sum_{i \in [m]} \sqrt{w_{ij}}\ket{i} + \ket{10}\ket{g} }\ket{j} \\
		= \begin{cases}
			\sum_{i \in [m]} w_{ij} = \frac{[\ma^\top x]_j}{\beta} & j = j'	\\
			0 & j \neq j'
		\end{cases}.
	\end{align*}
	In particular the $\ket{01}$ and $\ket{10}$ terms disappear, and $\ket{j}$, $\ket{j'}$ are orthogonal unless $j = j'$. In the above, we required that $\sqrt{w_{ij}}^* \sqrt{w_{ij}} = w_{ij}$, which is only true if $w_{ij}$ is nonnegative. To bypass this issue, we will implement the two copies of $\oracle_{\ma^\top x}$ in slightly different ways, to obtain the correct signing. For notational clarity, we let $\oracle^L$ be the oracle which is conjugated on the left and $\oracle^R$ be the oracle on the right, such that $\mmu = (\oracle^L)^* (\textup{SWAP}_{12} \otimes \id) (\oracle^R)$. Note that $x$ is entrywise nonnegative and $\beta > 0$, and hence the only factor determining the sign of $w_{ij}$ is $\ma_{ij}$. When $\ma_{ij} \ge 0$, we will define the oracles $\orma'$ used to load $\sqrt{\ma_{ij}}$ for $\oracle^L$ and $\oracle^R$ in a consistent way (i.e.\ use the same-signed square root), so that $\sqrt{w_{ij}}^2 = w_{ij}$. When $\ma_{ij}< 0$ we will define them in an inconsistent way, so that after the conjugation operation, $-\sqrt{w_{ij}} \sqrt{w_{ij}} = w_{ij}$. We have thus shown that $\bra{0}^{\otimes(a + 2)} \mmu \ket{0}^{\otimes(a + 2)} = \mm$ which implies the first conclusion. To see the second, all our gates are reversible (arithmetic circuits are reversible, and $\orma$ is its own inverse), and hence the complexity of applying $\oracle_{\ma^\top x}^*$ is the same as $\oracle_{\ma^\top x}$.
\end{proof}

Finally, we put together the pieces and prove Proposition~\ref{prop:mainhintsample}, which we use repeatedly throughout the paper to implement our Gibbs sampling oracles.

\restatemainhintsample*
\begin{proof}
	Throughout the proof, let $\delta \gets \min(\half, \delta)$ and $B \defeq 4(\beta + \log(\frac{Cn}{\delta}))$. Also define $\ell \defeq \frac \delta n$ (following notation of Corollary~\ref{cor:polyshift}). We first observe that since $\max_{j \in [n]} [\ma^\top x]_j \le \log Z \le \max_{j \in [n]} [\ma^\top x]_j + \log n$,
	\[-\log (Cn) \le \max_{j \in [n]} [\ma^\top x]_j - \log\Par{\tZ q_j} \le 0.\]
	Here, the upper bound used that for all $j \in [n]$, $\exp([\ma^\top x]_j - \tZ q_j) = \frac{p_j}{q_j} \cdot \frac{Z}{\tZ} \le 1$ by assumption.
	Hence, for $v \defeq \ma^\top x - \log(\tZ q)$ entrywise,
	\[- c \le \max_{j \in [n]} v_j \le 0 \text{ for } c \defeq \log (Cn).\]
	Next, we note $\log(\tZ q)$ is entrywise bounded in magnitude by $\frac B 2$: 
	\begin{align*}\log(\tZ q_j) &\le \log(CZ) \le \log \Par{n \cdot \max_{j \in [n]} \exp([\ma^\top x]_j)} + \log C  \le \frac B 2,\\
		\log(\tZ q_j)&\ge \log Z + \log \frac \delta n \ge \min_{j \in [n]} [\ma^\top x]_j - \log \frac n \delta \ge -\frac B 2.
	\end{align*}
	Define $\mm_1 \defeq \diag{\frac{\ma^\top x}{2B}}$ and $\mm_2 \defeq \diag{-\frac{1}{2B} \log(\tZ q)}$. By the calculations above, we have $\normop{\mm_2} \le \half$, and similarly it is clear that $\normop{\mm_1} \le \half$ because $\norm{\ma^\top x}_\infty \le \beta$. Moreover, by using Corollary~\ref{cor:blockencodediag} with $\beta \gets B$, we obtain $\mmu_1$, a block-encoding of $\mm_1$ applicable in $O(\log m)$ time. Using a similar construction as Corollary~\ref{cor:blockencodediag}, since $q$, $B$, and $\tZ$ are all efficiently classically queriable, we obtain $\mmu_2$, a block-encoding of $\mm_2$ applicable in $O(1)$ time. Hence, Proposition~\ref{prop:addblock} yields $\mmu$, a block-encoding of 
	\[\mm_1 + \mm_2 = \textbf{diag}\Par{\frac v {2B}},\]
	which can be applied in $O(\log mn)$ time. Next, let $\poly$ be the degree-$\Delta = O(B \log\frac{Cn}{\delta})$ polynomial from Corollary~\ref{cor:polyshift}, parameterized by $B, v, c, q, \ell$ as defined earlier. Corollary~\ref{cor:polyshift} shows that $\poly: [-1, 1] \to [-\half, \half]$. Thus, Proposition~\ref{prop:polyblock} then yields $\mmu'$, a block-encoding of $\diag{\poly(\frac v {2B})}$ which can be applied in $O(\Delta \cdot \log mn)$ time. Furthermore, since $q$ and $\rho$ are efficiently classically queriable, we can define a gate $\oracle_q$ which is applicable in $O(1)$ time and acts as
	\[\oracle_q \ket{0}^{\otimes (b + 1)} = \ket{0}\sum_{j \in [n]} \sqrt{\frac{q_j}{\rho}}\ket{j} + \ket{1}\ket{g}.\]
	Applying $\mmu'$ to the output of $\oracle_q$ with appropriate ancilla qubits then yields
	\[\ket{0}^{\otimes O(1)} \sum_{j \in [n]} \sqrt{\frac{q_j \poly(u_j)^2}{\rho}} \ket{j} \ket{g_j} + \ket{g'},\text{ where } u_j \defeq \frac {v_j} {2B} \text{ for all } j \in [n].\]
	Post-selecting on the first register being the all-zeroes state and measuring on the register corresponding to $j$, we see that we obtain a sample $j \in [n]$ with probability proportional to $q_j \poly(u_j)^2$. By Corollary~\ref{cor:polyshift}, conditioned on the sample succeeding, the resulting distribution is $\delta$-close in $\ell_1$ to the distribution proportional to $q \circ \exp(v) \propto \exp(\ma^\top x)$, and hence the result is a $\delta$-approximate Gibbs oracle. Finally, we bound the query cost of the oracle. Define $w_j \defeq \poly(u_j)^2 q_j$ and $z_j \defeq \exp(v_j) q_j$ as in Corollary~\ref{cor:polyshift}. By definition of $v, \tZ$,
	\[\norm{z}_1 = \sum_{j \in [n]} \frac{\exp\Par{\Brack{\ma^\top x}_j}}{\tZ} \in \Brack{C^{-1},1}.\]
	Moreover, the last conclusion in Corollary~\ref{cor:polyshift} shows $\norm{w}_1 \ge\frac 1 {72} \norm{z}_1 \ge (72C)^{-1}$. Hence,
	\[\sum_{j \in [n]} \frac{q_j \poly(u_j)^2}{\rho} = \frac{\norm{w}_1}{\rho} \ge \frac 1 {72C\rho}. \]
	In other words, we have an oracle which we can apply in time $O(\Delta \cdot \log mn)$ which correctly returns a sample with probability $\alpha \ge \frac 1 {72C\rho}$. By applying Proposition~\ref{prop:amplif} to improve the success probability, we obtain the desired conclusion at a $O(\sqrt{C\rho}\log \frac 1 \delta)$ overhead.
	
\end{proof}

\restateotest*
\begin{proof}
	Our oracle $\otest$ is the oracle from Proposition~\ref{prop:mainhintsample}, except we will choose a sufficiently small constant value of $\delta$. It returns ``success'' when the sample is accepted by the rejection sampler after boosting by amplitude amplification. Before boosting, the success probability from Proposition~\ref{prop:mainhintsample} is $\Theta(\frac{1}{R\rho})$ where the constants in the upper and lower bounds are explicit. 
	Further, the constants from Proposition~\ref{prop:amplif} are explicit, and hence boosting by amplitude amplification improves the success probability to $\Theta(\frac{1}{\sqrt{R\rho}})$ with known constant bounds as required by the corollary statement.
\end{proof} %
\section{Bounded approximation to $\exp$ on $[-1, 1]$}\label{app:boundedexp}

Here, we give a proof of a lemma (with slightly different constants) used in the prior work \cite{vanApeldoornG19}. This section builds entirely off prior results on polynomial approximation in \cite{GilyenSLW19}; we include it for completeness because a proof was not given in \cite{vanApeldoornG19}. As a reminder, we stated and used the following result earlier when constructing our rejection sampler in Appendix~\ref{sec:quantum}.

\restatepolyhelper*

To obtain the lemma, we will utilize the following result from \cite{GilyenSLW19}.

\begin{proposition}[Corollary 66,~\cite{GilyenSLW19}]\label{prop:poly-approx-helper}

	Let $x_0\in[-1, 1]$, $r\in(0,2]$, $\delta\in(0,r]$. Let $f:[x_0-r-\delta, x_0+r+\delta]\rightarrow\mathbb{C}$ be such that $f(x_0+x) = \sum_{\ell\ge0}a_\ell x^\ell$ for all $x\in[-r-\delta, r+\delta]$. Suppose $B>0$ is such that $\sum_{\ell\ge0}(r+\delta)^\ell|a_\ell|\le B$ and let $\epsilon\in(0,\frac 1 {2B}]$. There is a polynomial $P$ (see~\Cref{app:stablepoly} for its numerically stable implementation) of degree $O\left(\frac{1}{\delta}\log\frac{B}{\epsilon}\right)$ such that 
	\begin{align*}
	\max_{x\in [x_0-r, x_0+r]}|f(x)-P(x)|\le \epsilon
	\text{ and }
	\max_{x\in[-1,1]}|P(x)|\le \epsilon+B.	
	\end{align*}
\end{proposition}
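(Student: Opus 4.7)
My plan is to derive the lemma as a direct application of Proposition~\ref{prop:poly-approx-helper} to the function $f(x) = \exp(\beta x)$, with parameters chosen so that the expansion point is centered in $[-1,0]$, the approximation interval exactly covers $[-1,0]$, and the power-series bound $B$ is a harmless constant.

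Concretely, I would set $x_0 = -\tfrac{1}{2}$, $r = \tfrac{1}{2}$, and $\delta = 1/\max(\beta,2)$, so that $\delta \in (0, r]$ and $[x_0 - r, x_0 + r] = [-1, 0]$. Expanding $f$ about $x_0$ gives
\begin{equation*}
f(x_0 + x) = \exp(\beta x_0)\sum_{\ell\ge 0}\frac{\beta^\ell}{\ell!}\,x^\ell,
\end{equation*}
so the Taylor coefficients are $a_\ell = \exp(\beta x_0)\,\beta^\ell/\ell!$ and
\begin{equation*}
\sum_{\ell\ge 0}(r+\delta)^\ell|a_\ell| \;=\; \exp\!\bigl(\beta(x_0 + r + \delta)\bigr) \;=\; \exp(\beta\delta) \;\le\; e.
\end{equation*}
Thus I can take $B = e$. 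Since $\xi \le \tfrac{1}{10} < \tfrac{1}{2e} = \tfrac{1}{2B}$, the hypothesis $\epsilon \in (0, \tfrac{1}{2B}]$ holds for $\epsilon := \xi$. Proposition~\ref{prop:poly-approx-helper} then produces a polynomial $P$ of degree
\begin{equation*}
O\!\left(\frac{1}{\delta}\log\frac{B}{\epsilon}\right) \;=\; O\!\left(\max(\beta,2)\cdot\log\tfrac{e}{\xi}\right) \;=\; O\!\left(\beta\log\tfrac{1}{\xi}\right)
\end{equation*}
(using $\beta\ge 1$) satisfying $\max_{x\in[-1,0]}|P(x)-\exp(\beta x)|\le \xi$ and $\max_{x\in[-1,1]}|P(x)|\le \xi + e < 3$. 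Defining $\poly_{\beta,\xi} := P$ yields both claimed bounds.

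The only step that requires any care is verifying the parameter feasibility conditions (that $\delta\le r$ and $\xi\le \tfrac{1}{2B}$) and keeping $B$ a universal constant rather than something growing in $\beta$; the choice $\delta = 1/\max(\beta,2)$ is precisely engineered to make $\beta\delta\le 1$, and hence $B\le e$, regardless of how large $\beta$ is. Everything else is a mechanical substitution into Proposition~\ref{prop:poly-approx-helper}, so I do not anticipate a substantive obstacle.
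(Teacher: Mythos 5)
Your derivation is correct and follows essentially the same route as the paper: both prove Lemma~\ref{lem:poly-approx-helper} by instantiating Proposition~\ref{prop:poly-approx-helper} (the imported Corollary 66 of \cite{GilyenSLW19}, which neither you nor the paper proves from scratch) with $f(x)=\exp(\beta x)$ and $B=e$, the only difference being your choice $x_0=-\tfrac{1}{2}$, $r=\tfrac{1}{2}$, $\delta=1/\max(\beta,2)$ versus the paper's $x_0=-1$, $r=1$, $\delta=1/\beta$, which approximates on the larger interval $[-2,0]$. Your parameter feasibility checks ($\delta\le r$, $\beta\delta\le 1$ so that $B=e$, $\xi\le\tfrac{1}{10}\le\tfrac{1}{2e}$, and $e+\xi\le 3$) all hold and give the claimed degree $O(\beta\log\tfrac{1}{\xi})$, so the proof is complete.
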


\begin{proof}[Proof of~\Cref{lem:poly-approx-helper}]
We apply Proposition~\ref{prop:poly-approx-helper} with $f(x) \defeq \exp(\beta x)$ which has a convergent Taylor series everywhere, and the parameter settings $x_0 = -1$, $r = 1$, $\delta = \frac 1 \beta$, $B = e$. We have that $f(x_0+x) = \sum_{\ell\ge0}\exp(-\beta)\frac{\beta^\ell\cdot x^\ell}{\ell!}=\sum_{\ell\ge0}a_\ell x^\ell$ with $a_\ell = \exp(-\beta)\frac{\beta^\ell}{\ell!}$ for any integer $\ell\ge0$. We also check that our choice of $B$ is valid, via \[\sum_{\ell\ge0}(r+\delta)^\ell|a_\ell| = \exp(-\beta)\sum_{\ell\ge0}\Par{1+\frac{1}{\beta}}^\ell\frac{\beta^\ell}{\ell!} = \exp(-\beta)\sum_{\ell\ge0}\frac{(\beta+1)^\ell}{\ell!} = \exp(\beta+1-\beta) = e.\]

Hence by~\Cref{prop:poly-approx-helper}, we have for any $\xi \le \frac 1 {2e}$, there is a polynomial $P$ of degree $O(\beta\log \frac 1 \xi)$ such that $\max_{x\in[-2,0]}|\exp(\beta x)-P(x)|\le \epsilon$ and $\max_{x\in[-1,1]}|\tilde{P}(x)|\le e+\frac 1 6 + \xi \le 3$. 
\end{proof}

\section{Numerically stable implementation of polynomial approximation}\label{app:stablepoly}

Throughout this section, let $\Delta = O(\frac 1 \eps \log^2(\frac {mn} \eps))$ be the degree of the polynomial used in the proof of Proposition~\ref{prop:mainhintsample} in Appendix~\ref{sec:quantum} (specifically, constructed in the proof of Proposition~\ref{prop:mainhintsample}, where we have $C = O(1)$ and $\delta = O(\eps)$ in our applications). The polynomial we use is constructed via a decomposition in the Fourier basis (see Lemmas 57 and 65, \cite{GilyenSLW19}). It is not immediate that this polynomial transform can be implemented stably in finite-precision arithmetic, within the quantum singular value transformation framework of \cite{GilyenSLW19}, which is used in the proof of Proposition~\ref{prop:mainhintsample}. However, \cite{Haah19} shows that given such a decomposition in the Fourier basis, we can obtain a numerically-stable implementation of the polynomial transformation required as a quantum circuit up to additive error $\xi$, in time
\[O\Par{\Delta^3 \log\Par{\frac{\Delta}{\xi}}}.\]
In our setting (in the proof of Proposition~\ref{prop:mainhintsample}), it is straightforward to check that $\xi = \textup{poly}(m, n, \eps^{-1})$. This construction results in the additive term in Theorem~\ref{thm:quantumgames}. 
\end{appendix}

\end{document}